\renewcommand*{\p@subsection}{\S\,}
\renewcommand*{\p@subsubsection}{\S\,}
\newtheorem{thm}{Theorem}[section]
\newtheorem{lem}[thm]{Lemma}
\newtheorem{prop}[thm]{Proposition}
\newtheorem{conj}[thm]{Conjecture}
\theoremstyle{definition}
\theoremstyle{remark}
\newtheorem{rem}{Remark}[section]
\numberwithin{equation}{section}
\newcommand{\CC}{\ensuremath{\mathbb{C}}}
\newcommand{\N}{\ensuremath{\mathbb{N}}}
\newcommand{\Z}{\ensuremath{\mathbb{Z}}}
\newcommand{\tr}{\operatorname{tr}}
\newcommand{\diag}{\operatorname{diag}}
\newcommand{\Mat}{\operatorname{Mat}}
\newcommand{\Id}{\operatorname{Id}}
\newcommand{\Gl}{\operatorname{GL}}
\newcommand{\LL}{L}
\newcommand{\Rep}{\operatorname{Rep}}
\newcommand\dgal[1]{  \left\{\!\!\left\{#1\right\}\!\!\right\} }
\newcommand\br[1]{\{ #1 \}} 
\def\aalpha{{\bar{\alpha}}}
\def\h{\mathfrak{h}}
\def\hreg{\mathfrak{h}_{\mathrm{reg}}}
\def\VV{\mathcal V}
\def\Cnqd{\mathcal{C}_{n, d, q}}
\def\Cnqdt{\mathcal{C}_{n, d, q}^{\times}}
\newcommand{\aaa}{a}
\newcommand{\ccc}{b}
\newcommand{\As}{A}
\newcommand{\Bs}{B}
\newcommand{\MM}{\ensuremath{\mathcal{M}}_{n, d, q}}
\newcommand{\MMt}{\ensuremath{\mathcal{M}}^{\times}_{n, d, q}}
\newcommand{\Qal}{\ensuremath{\mathcal{Q}}}
\newcommand{\Qd}{\ensuremath{\mathcal{Q}_d}}
\newcommand{\Aal}{\ensuremath{\mathcal{A}}}
\newcommand{\Aalt}{\ensuremath{\mathcal{A}^{\times}}}
\begin{document}

\title[Hamiltonian formulation of the spin Ruijsenaars--Schneider system]  
{On the Hamiltonian formulation of the trigonometric spin Ruijsenaars--Schneider system}   

\author{Oleg Chalykh}
 \address[Oleg Chalykh]{School of Mathematics\\
         University of Leeds\\
         Leeds, LS2 9JT, UK}
 \email{o.chalykh@leeds.ac.uk}

\author{Maxime Fairon}
 \address[Maxime Fairon]{School of Mathematics and Statistics, University of Glasgow\\
University Place, G12 8QQ Glasgow, UK}
 \email{maxime.fairon@glasgow.ac.uk}


\begin{abstract}
We suggest a Hamiltonian formulation for the spin Ruijsenaars-Schneider system in the trigonometric case. 
Within this interpretation, the phase space is obtained by a quasi-Hamiltonian reduction performed on (the cotangent bundle to) a representation space of a framed Jordan quiver. For arbitrary quivers, analogous varieties were introduced by Crawley-Boevey and Shaw, and their interpretation as quasi-Hamiltonian quotients was given by Van den Bergh. Using Van den Bergh's formalism, we construct commuting Hamiltonian functions on the phase space and identify one of the flows with the spin Ruijsenaars--Schneider system. We then calculate all the Poisson brackets between local coordinates, thus answering an old question of Arutyunov and Frolov. We also construct a complete set of commuting Hamiltonians and integrate all the flows explicitly.
\end{abstract}

\maketitle

 \setcounter{tocdepth}{2}

\section{Introduction}  \label{intro}

The spin Ruijsenaars--Schneider model (RS model) has been introduced by Krichever and Zabrodin \cite{KrZ} as a 
generalisation of the well-known non-spin model \cite{RS86}. It is a system of $n$ particles on the line, with positions $q_i$ 
and spin variables $\aaa_i^\alpha, \ccc_i^\alpha$, where $i=1, \ldots, n$ labels the particles and $\alpha=1, \ldots, d$ labels the internal degrees of freedom.   
Its equations of motion have the following form (cf. \cite{AF}):
\begin{subequations}
 \begin{align}
  \dot{q}_i=&{f}_{ii} \,,\label{AFq} \\ 
\dot{\aaa}^\alpha_i=&
\sum_{k\neq i} V(q_{ik}){f}_{ik} (\aaa_k^\alpha-\aaa_i^\alpha) \,,\label{AFa} \\
\dot{\ccc}_i^\alpha=&\sum_{k\neq i}
\left(\,
V(q_{ik}){f}_{ik}\ccc_i^\alpha -V(q_{ki}){f}_{ki}\ccc_k^\alpha  
\,\right)
\,. \label{AFc}
 \end{align}
\end{subequations}
Here $f_{ij}=\sum_\alpha \aaa_i^\alpha \ccc_j^\alpha$ and $V(z)=\zeta(z)-\zeta(z+\gamma)$ where $\zeta(z)$ is the  Weierstrass $\zeta$-function, $\gamma$ is a coupling parameter, and $q_{ij}:=q_i-q_j$. In the equations \eqref{AFq}--\eqref{AFc} it is assumed that 
$\sum_\alpha \aaa_i^\alpha=1$ for all $i$: it is easy to see that such constraints are preserved by the flow. This reduces the dimension of the phase space to $2nd$. 
The rational and trigonometric (or, rather, hyperbolic) versions are obtained by setting $V(z)=z^{-1}-(z+\gamma)^{-1}$ and $V(z)=\coth(z)-\coth(z+\gamma)$, respectively. Note that in \cite{KrZ} the equations are written in $2n+2nd$ coordinates, and the above form is obtained after a reduction, see \cite{KrZ, AF} for the details (in the notation of \cite{AF}, our variables $\aaa_i^\alpha, \ccc_i^\alpha$ correspond to $\mathbf{a}_i^\alpha, \mathbf{c}_i^\alpha$).

In \cite{KrZ} the above dynamical model was derived by studying solutions of the non-abelian 2D Toda equation. As shown in \cite{KrZ}, the model admits a Lax representation, and its general solution can be expressed in Riemann theta functions. Similar results for the Calogero--Moser model and the KP equation were previously obtained in \cite{GH, Kr80, KBBT}. It is therefore natural to ask for a Hamiltonian formulation of the spin RS model. In the rational case, the answer was given by Arutyunov and Frolov in \cite{AF}. To formulate their result, let us
define the antisymmetric bracket (bivector) on the phase space of the system \eqref{AFq}--\eqref{AFc} by the formulas 
\begin{subequations}
 \begin{align}
  \br{q_i,q_j} = &\, 0\,,\quad \br{q_i, \aaa_j^\alpha}=0\,,\quad 
\br{q_i, \ccc_j^\alpha}=\delta_{ij}\ccc_j^\alpha \,, \label{AF1} \\
\br{\aaa_i^\alpha,\aaa_j^\beta} = &\, \delta_{(i\neq j)}\frac{1}{q_{ij}}
(\aaa_i^\alpha \aaa_j^\beta+\aaa_i^\beta\aaa_j^\alpha -\aaa_i^\alpha \aaa_i^\beta - 
\aaa_j^\alpha \aaa_j^\beta) \,, \label{AF2} \\
\br{\aaa_i^\alpha, \ccc_j^\beta} = & \,\aaa_i^\alpha \LL_{ij} -\delta_{\alpha\beta}\LL_{ij}-
 \delta_{(i\neq j)}\frac{1}{q_{ij}}(\aaa_i^\alpha-\aaa_j^\alpha) \ccc_j^\beta \,, \label{AF3} \\
\br{\ccc_i^\alpha, \ccc_j^\beta}= &\,
 \delta_{(i\neq j)}\frac{1}{q_{ij}} (\ccc_i^\alpha \ccc_j^\beta + \ccc_i^\beta\ccc_j^\alpha) 
-\ccc_i^\alpha \LL_{ij} + \ccc_j^\beta \LL_{ji}  \,. \label{AF4}
 \end{align}
\end{subequations}
Here $\LL_{ij}=\frac{{f}_{ij}}{q_{ij}+\gamma}$, and $\delta_{(i\neq j)}:=1-\delta_{ij}$. The above bracket can be viewed on the space of dimension $2nd+n$ with coordinates $q_i, \aaa_i^\alpha, \ccc_j^\beta$; it is then easy to check that the ideal generated by the $n$ functions $\sum_\alpha \aaa_i^\alpha-1$ is also an ideal with respect to this bracket, and so the bracket restricts onto the phase space of dimension $2nd$.  With these definitions, it is shown in \cite{AF} that the formulas \eqref{AF1}--\eqref{AF4} define a Poisson bracket on the phase space of dimension $2nd$, and the Hamiltonian flow with the Hamiltonian $h=\sum_i f_{ii}$ takes the form \eqref{AFq}--\eqref{AFc} with $V(z)=z^{-1}-(z+\gamma)^{-1}$. 

Note that the above $L=(\LL_{ij})$ is the Lax matrix of the rational RS model, 
and $h=\gamma\tr L$. Arutyunov and Frolov observed that the brackets between $L_{ij}$ admit an $r$-matrix formulation with the same $r$-matrix as for the non-spin RS model. 
This led them to conjecture a similar result in the trigonometric case. However, they were unable to prove that conjecture and to find a Hamiltonian interpretation for that case, i.e. a trigonometric analogue of \eqref{AF1}--\eqref{AF4} remained unknown.  
One of the main results of this paper is an explicit description of the appropriate brackets between the variables $(q_i,\aaa_i^\alpha,\ccc_i^\alpha)$ and a proof of the Arutyunov--Frolov's conjecture as a corollary.

We should mention that a Hamiltonian interpretation of the model \eqref{AFq}--\eqref{AFc} was found by Krichever in \cite{Kr} within his elegant geometric approach to the systems of Calogero--Moser type. Nonetheless, his formula for the symplectic form (see \cite[(3.21)]{Kr}) is implicit and, while he constructs the action-angle coordinates, the Poisson brackets between the original variables are not determined. (In the special case of $n=2$, the symplectic form has been calculated explicitly \cite{S09}.) Another issue is that the symplectic structure in \cite{Kr} is constructed on a reduced phase space which is obtained by gauging away extra $d(d-1)$ degrees of freedom. However, such reduction is only valid for $d\le n$. In comparison, our Hamiltonian formulation is given in the original $2nd$ coordinates, it is completely explicit and remains valid for any $d$. Another advantage of our approach is that it provides a \emph{completion} of the original phase space of the trigonometric spin RS model, in the same way as the Calogero--Moser space in \cite{W1} allows coalescence of particles. To this end, we demonstrate the (degenerate) integrability of the system and explicitly integrate all the flows, from which the completeness of the flows is obvious.

Let us say a few words about our methods. In our approach, the phase space of the system is obtained by quasi-Hamiltonian reduction from a representation space of a framed Jordan quiver, using the framework developed in \cite{CBShaw, VdB1}. This is very natural from the viewpoint of the existing results for the Calogero--Moser \cite{W1, W2, BP, T15, CS} and the (non-spin) RS systems \cite{FockRosly, Oblomkov,FK12,FKlu, CF}. 
Once a geometric model is correctly identified, the remaining task is to confirm that by calculating the Poisson brackets and flows in suitable local coordinates. This then becomes a natural extension of the methods and results of our previous work \cite{CF}. Note that we work in the holomorphic setting, and so while we refer to the system under consideration as to the trigonometric RS model, there is no actual difference between the trigonometric and hyperbolic versions.
The quasi-Hamiltonian reduction framework also offers a nice perspective on the integrability of the system. First, it allows us to establish its degenerate integrability in a very natural way. To further extend it to a Liouville integrable system is a non-trivial problem, which in general does not have a canonical answer. Our solution to this problem can be viewed as an analogue of the Gelfand--Tsetlin integrable system. 
It would be interesting to find a quantum version of this integrable system. There are two possible approaches to this, either by using the double affine Hecke algebras as in \cite{U} or by using quantized multiplicative quiver varieties due to D.~Jordan \cite{Jo}. 

The paper is organised as follows. Section \ref{main results} outlines our main results. In Section \ref{SqHpicture}, we describe all the necessary ingredients for performing quasi-Hamiltonian reduction in the special case of a framed Jordan quiver, including the corresponding representation spaces \cite{CBShaw} and (double) quasi-Poisson brackets \cite{VdB1}. In Section \ref{Slocal} we describe local coordinates on the constructed quasi-Hamiltonian quotients and express the Poisson brackets in the local coordinates. As a corollary, this provides the Hamiltonian formulation for the spin RS model and proves the conjecture from \cite{AF}. Section \ref{First} is devoted to the integrability of the model. We show that the trigonometric spin RS system is {degenerately} integrable, and then extend it to a completely integrable system and explicitly integrate all the flows. We also discuss the relationship with the results of Krichever and Zabrodin \cite{KrZ}. Note that in Sections \ref{Slocal}, \ref{First} a modification of the spin RS model is also considered, and similar results are obtained for that case. The paper finishes with the appendix  containing calculations with the brackets, used in the proofs of our main results. 

 {\it Note added in proof:} Alternative treatments of the complex trigonometric spin RS model can be found in \cite{ArR,AO}.

{\bf Acknowledgement.} 
The authors thank L. Feh{\'e}r, I. Marshall, S. Ruijsenaars and P. Vanhaecke for useful discussions. 
 The work of the first author (O.~C.) was partially supported by EPSRC under grant EP/K004999/1.  Some of the results in this paper appear in the University of Leeds PhD thesis of the second author (M.~F.), supported by a University of Leeds 110 Anniversary Research Scholarship.

\subsection{Notations} \label{SNot}
The sets $\N,\Z,\CC$ contain the zero element and we write $\N^\times,\Z^\times,\CC^\times$ when we omit it. By an algebra we always mean an associative algebra over $\CC$. Vector spaces, matrices, varieties are also viewed over complex numbers.  
We write $\delta_{ij}$ or $\delta_{(i,j)}$ for Kronecker delta function. We extend this definition for a general proposition $P$ by setting $\delta_P=+1$ if $P$ is true and $\delta_P=0$ if $P$ is false.
Throughout the paper, the Greek letters placed as indices range through $1,\ldots,d$, for some fixed integer $d\geq 1$. The \emph{ordering function} on $d$ elements $\{1,\ldots,d\}$ is a skew-symmetric symbol defined by $o(\alpha,\beta)=0$ if $\alpha=\beta$, 
$o(\alpha,\beta)=+1$ if $\alpha<\beta$, and $o(\alpha,\beta)=-1$ if $\alpha>\beta$.


\section{Main results} \label{main results}

In the trigonometric case it is more convenient to work with $x_i=e^{2 q_i}$ and $q=e^{-2\gamma}$, rewriting \eqref{AFq}--\eqref{AFc} in the form 
\begin{subequations}
 \begin{align}
  \dot{x}_i & =  2{f}_{ii}\,x_i \,,\label{Trigq} \\ 
\dot{\aaa}^\alpha_i &=
\sum_{k\neq i} V_{ik}{f}_{ik} (\aaa_k^\alpha-\aaa_i^\alpha) \,,\label{Triga} \\
\dot{\ccc}_i^\alpha &=\sum_{k\neq i}
\left(\,V_{ik}{f}_{ik}\ccc_i^\alpha -V_{ki}{f}_{ki}\ccc_k^\alpha  \,\right)\,, \label{Trigc}
 \end{align}
\end{subequations}
with
\begin{equation} \label{EqPotV}
 V_{ik}
=\frac{x_i+x_k}{x_i-x_k}-\frac{x_i+qx_k}{x_i-qx_k}\,.
\end{equation}
As before, we view the system under the constraint $\sum_\alpha \aaa_i^\alpha=1$ for all $i=1,\ldots,n$, hence the phase space has dimension $2nd$. 
Equations \eqref{Trigq}--\eqref{Trigc} admit a Lax formulation \cite{KrZ,AF} with the Lax matrix 
\begin{equation}
 L=(L_{ij})_{i, j= 1, \dots, n}\,,\qquad L_{ij}=\frac{2x_i f_{ij}}{x_i-q x_j}  \,. \label{EqLax}
\end{equation}
Arutyunov and Frolov conjectured in \cite{AF} that the brackets between $L_{ij}$ should satisfy the $r$-matrix formulation of the non-spin trigonometric case \cite{AvR,Su,AFM}. Their conjecture can be formulated as follows. 

\begin{conj}[\cite{AF}]  \label{conjAF} 
The phase space with coordinates $(x_i,a_i^\alpha,b_i^\alpha)$ satisfying $\sum_\alpha a_i^\alpha=1$ admits a Poisson bracket such that 
 \begin{subequations}
       \begin{align}
\br{x_i,x_k} &= 0\,, \quad
\br{x_i, {f}_{jk}}=\delta_{ik}x_i {f}_{jk}\,, \label{Eq:qAF}\\
\br{{f}_{ij},{f}_{kl}} &=\frac12 
f_{ij}f_{kl} \left[\delta_{(i\neq k)}\frac{x_i+x_k}{x_i-x_k} + 
\delta_{(j\neq l)}\frac{x_j+x_l}{x_j-x_l}+\delta_{(k\neq j)}\frac{x_k+x_j}{x_k-x_j} + 
\delta_{(l\neq i)}\frac{x_l+x_i}{x_l-x_i}\right] \nonumber \\
&+\frac12 f_{il}f_{kj} \left[\delta_{(i\neq k)}\frac{x_i+x_k}{x_i-x_k} + 
\delta_{(j\neq l)}\frac{x_j+x_l}{x_j-x_l}+\frac{x_k+qx_j}{x_k-qx_j} - 
\frac{x_i+qx_l}{x_i-qx_l}\right]  \nonumber \\
&+\frac12 f_{ij}f_{il} \left[\delta_{(i\neq k)}\frac{x_k+x_i}{x_k-x_i} 
+\frac{x_i+qx_l}{x_i-qx_l} \right]+\frac12 f_{ij}f_{jl} 
\left[\delta_{(j\neq k)}\frac{x_j+x_k}{x_j-x_k} -\frac{x_j+qx_l}{x_j-qx_l} \right]  \nonumber \\
&+\frac12 f_{kj}f_{kl} \left[\delta_{(i\neq k)}\frac{x_k+x_i}{x_k-x_i} 
-\frac{x_k+qx_j}{x_k-qx_j} \right]+ \frac12 f_{lj}f_{kl} 
\left[\delta_{(i\neq l)}\frac{x_i+x_l}{x_i-x_l} +\frac{x_l+qx_j}{x_l-qx_j} \right] \,,\label{Eq:ffAF}
       \end{align}
  \end{subequations}
and such that the Hamiltonian vector field associated with the function $h=(1-q) \tr L$ coincides with \eqref{Trigq}--\eqref{Trigc}.
\end{conj}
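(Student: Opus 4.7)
The plan is to derive the conjecture as a corollary of the explicit Hamiltonian formulation constructed via quasi-Hamiltonian reduction. Rather than attempt a direct verification of the Jacobi identity for the complicated bracket \eqref{Eq:ffAF}, I would build the Poisson structure geometrically and then compute in local coordinates. Concretely: interpret the (completion of the) phase space as the quasi-Hamiltonian quotient $\Cnqd$ of a representation space $\MM$ of the framed Jordan quiver (of dimensions $n$ on the loop vertex and $d$ on the framing vertex), at a distinguished multiplicative moment value indexed by $q$. By the Van den Bergh formalism developed in \cite{VdB1} and applied in the previous work \cite{CF}, $\Cnqd$ inherits a canonical Poisson bracket from the underlying double quasi-Poisson bracket on $\MM$.

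Next, introduce local coordinates $(x_i,a_i^\alpha,b_i^\alpha)$ on an open dense subset of $\Cnqd$, where $x_i$ are eigenvalues of the distinguished diagonalisable $n\times n$ endomorphism associated with the loop edge of the quiver, and $a_i^\alpha, b_i^\alpha$ encode the framing arrows after fixing a residual gauge that imposes $\sum_\alpha a_i^\alpha = 1$. The key technical step, addressed in Section \ref{Slocal}, is to compute all the elementary brackets $\br{x_i,x_j}$, $\br{x_i,a_j^\alpha}$, $\br{x_i,b_j^\alpha}$, $\br{a_i^\alpha,a_j^\beta}$, $\br{a_i^\alpha,b_j^\beta}$, $\br{b_i^\alpha,b_j^\beta}$ in these coordinates by pushing the double quasi-Poisson bracket through the diagonalisation and the slice that defines the coordinates. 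This is a direct but intricate computation, strictly parallel to the non-spin trigonometric case treated in \cite{CF}, and is where the ordering function $o(\alpha,\beta)$ from the Notations enters.

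With the elementary brackets in hand, the bracket \eqref{Eq:qAF} for $\br{x_i,f_{jk}}$ is immediate from the Leibniz rule applied to $f_{jk}=\sum_\alpha a_j^\alpha b_k^\alpha$. The content-heavy step is deducing \eqref{Eq:ffAF}: expand $\br{f_{ij},f_{kl}}$ via Leibniz into four sums involving $\br{a,a}, \br{a,b}, \br{b,a}, \br{b,b}$, substitute the formulas from Section \ref{Slocal}, and collect terms. The ordering signs $o(\alpha,\beta)$ appearing in the individual brackets must cancel pairwise when summed over $\alpha,\beta$ against the symmetric combination $a_i^\alpha b_k^\alpha a_j^\beta b_l^\beta + \cdots$, leaving only the rational prefactors in $x_i,x_j,x_k,x_l$ and the six bilinear combinations $f_{ij}f_{kl}, f_{il}f_{kj}, f_{ij}f_{il}, f_{ij}f_{jl}, f_{kj}f_{kl}, f_{lj}f_{kl}$ seen in \eqref{Eq:ffAF}. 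This case analysis (depending on which of $i,j,k,l$ coincide) is the genuine computational obstacle, and is relegated to the appendix.

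Finally, for the dynamical part of the conjecture, one computes the Hamiltonian vector field of $h=(1-q)\tr L$ using \eqref{Eq:qAF}--\eqref{Eq:ffAF} together with the Leibniz rule and the explicit form \eqref{EqLax} of $L_{ij}$. The terms involving $\br{x_i,h}$ give \eqref{Trigq}, while $\br{a_i^\alpha,h}$ and $\br{b_i^\alpha,h}$ collapse, after using the identity $V_{ik}=\frac{x_i+x_k}{x_i-x_k}-\frac{x_i+qx_k}{x_i-qx_k}$, to \eqref{Triga} and \eqref{Trigc}. The preservation of the constraint $\sum_\alpha a_i^\alpha = 1$ by this flow follows, and its compatibility with the bracket (i.e.\ that the ideal it generates is Poisson) is verified directly from the bracket formulas, so that the restriction to the $2nd$-dimensional phase space is well defined.
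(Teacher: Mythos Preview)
Your proposal is correct and follows essentially the same route as the paper: construct the Poisson structure by quasi-Hamiltonian reduction on the framed Jordan quiver, compute the elementary brackets between $(x_i,a_i^\alpha,b_i^\alpha)$ in local coordinates (Proposition~\ref{brbr}), and then obtain \eqref{Eq:qAF}--\eqref{Eq:ffAF} by expanding $\br{f_{ij},f_{kl}}$ via the Leibniz rule and verifying that the $o(\alpha,\beta)$ contributions cancel (Proposition~\ref{ggFinal} and Lemma~\ref{ggMedium}). One minor point: to obtain \eqref{Triga}--\eqref{Trigc} you need $\br{a_i^\alpha,h}$ and $\br{b_i^\alpha,h}$, which come from the elementary brackets (specifically from $\br{a_i^\gamma,f_{kk}}$ and $\br{b_j^\epsilon,f_{kk}}$ summed over $k$) rather than from \eqref{Eq:qAF}--\eqref{Eq:ffAF} alone, since the latter only involve the $f_{ij}$.
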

\begin{rem}
 The above formulas are obtained in \cite{AF} from the assumption that the Poisson brackets in the spin and non-spin cases are governed by the same $r$-matrix. 
Namely, following \cite{AF} define  
\begin{equation*}
  \begin{aligned}
    r=&\sum_{ij} E_{ij} \otimes E_{ji} + \sum_{i\neq j} \frac{x_i+x_j}{x_i-x_j}  E_{ii} \otimes E_{jj} 
+ \sum_{i\neq j} \frac{x_i+x_j}{x_i-x_j} E_{ij} \otimes E_{ji} \nonumber \\
&\quad - \sum_{i \neq j} \frac{2 x_i}{x_i- x_j} E_{ij} \otimes E_{jj} + \sum_{i \neq j} \frac{2 x_i}{x_i- x_j} E_{jj} \otimes E_{ij}\,, \nonumber \\
\bar{r}=&-\sum_{i} E_{ii} \otimes E_{ii} + \sum_{i\neq j} \frac{x_i+x_j}{x_i-x_j}  E_{ii} \otimes E_{jj}  
- \sum_{i \neq j} \frac{2 x_i}{x_i- x_j} E_{ij} \otimes E_{jj}\,, \nonumber \\
\hat{r}=&-\sum_{ij} E_{ij} \otimes E_{ji}+ \sum_{i\neq j} \frac{x_i+x_j}{x_i-x_j}  (E_{ii} \otimes E_{jj} - E_{ij} \otimes E_{ji} )\,, \nonumber 
  \end{aligned}
\end{equation*}
 where $E_{ij}$ denote the elementary $n \times n$ matrices with $(E_{ij})_{kl}=\delta_{ik}\delta_{jl}$. 
Then, assuming \eqref{Eq:qAF}, the relations \eqref{Eq:ffAF} are equivalent to 
\begin{equation} \label{Zrmatrix}
  \br{L_1,L_2}=\,\frac12\left(r_{12}L_1 L_2 + L_1 L_2 \hat{r}_{12} + L_1 \bar{r}_{21} L_2 - L_2 \bar{r}_{12} L_1\right)\,,
\end{equation}
where $L$ is the Lax matrix \eqref{EqLax}.  See \cite[Section 3]{AF} for the details. 
\end{rem}

Let us now describe the space on which the quasi-Hamiltonian reduction will be performed, postponing a more detailed account to Section \ref{SqHpicture}. Consider the space $\mathcal M$ whose elements are the matrix data $X, Z, \{V_\alpha, W_\alpha\}_{\alpha=1,\ldots,d}$, where $X,Z\in \Mat_{n\times n}$, $V_\alpha\in \Mat_{1\times n}$,  $W_\alpha \in \Mat_{n\times 1}$. Clearly, $\mathcal M$ is an affine space of dimension $2n^2+2nd$. Let $\MMt\subset \mathcal M$ denote a subvariety defined by
\begin{equation} \label{mm}
 XZX^{-1}Z^{-1}(\Id_n+W_1 V_1)^{-1}\dots (\Id_n+W_d V_d)^{-1}=q\Id_n\,.
\end{equation}
Here $q$ is a nonzero parameter, and all the factors are assumed invertible. Throughout the paper it will be assumed that $q$ is not a root of unity. 
The group $\Gl_n$ acts on $\mathcal M$ and $\MMt$ by
\begin{equation}
g. (X,Z,V_\alpha,W_\alpha)=(gXg^{-1},gZg^{-1}, V_\alpha g^{-1}, gW_\alpha)\,,\quad g\in\Gl_n\,.
\end{equation}
For $q$ not a root of unity, the action on $\MMt$ is free and the GIT quotient $\MMt/\!/\Gl_n$ is a smooth affine variety of dimension $2nd$, whose coordinate ring is $\CC[\MMt]^{\Gl_n}$, i.e. the ring of $\Gl_n$-invariant functions on $\MMt$. The variety $\MMt/\!/\Gl_n$ is an example of  a multiplicative quiver variety. For general quivers, such varieties were introduced by Crawley--Boevey and Shaw in the context of multiplicative preprojective algebras \cite{CBShaw} (see also \cite{Y} and \cite[Appendix]{BEF}). Van den Bergh \cite{VdB1, VdB2} interpreted them as quasi-Hamiltonian quotients, so by his general result, $\MMt/\!/\Gl_n$ is a Poisson manifold with Poisson bracket induced from a quasi-Poisson bracket on $\mathcal M$. Van den Bergh's bracket on $\mathcal M$ is an anti-symmetric bi-derivation (bivector) defined in coordinates by     
\begin{subequations}
       \begin{align}
\br{X_{ij},X_{kl}} &= \frac12 \left( \delta_{il}(X^2)_{kj}  - \delta_{kj} (X^2)_{il} \right)\,, \quad 
\br{Z_{ij},Z_{kl}}=\frac12 \left(\delta_{kj} (Z^2)_{il} -\delta_{il}(Z^2)_{kj} \right)\,,  \label{qbr1}\\
\br{X_{ij},Z_{kl}} &= \frac12 \left((ZX)_{kj} \delta_{il} + \delta_{kj} (XZ)_{il} + Z_{kj} X_{il} - X_{kj} Z_{il} \right)\,, \\
\br{X_{ij}, W_{\alpha,k}} &= \frac12 \left(\delta_{kj} (X W_{\alpha})_i - X_{kj} W_{\alpha,i}\right)\,,\quad 
\br{X_{ij}, V_{\alpha,l}}= \frac12 \left( (V_\alpha X)_{j}\delta_{il}- V_{\alpha,j}X_{il}\right)\,, \\
\br{Z_{ij}, W_{\alpha,k}} &= \frac12 \left(\delta_{kj} (Z W_{\alpha})_i - Z_{kj} W_{\alpha,i}\right)\,,\quad 
\br{Z_{ij}, V_{\alpha,l}}= \frac12 \left( (V_\alpha Z)_{j}\delta_{il}- V_{\alpha,j}Z_{il}\right)\,,\\ 
\br{V_{\alpha,j},V_{\beta,l}} &= \frac12 \,o(\beta, \alpha) 
 \left(V_{\beta,j} V_{\alpha,l} + V_{\alpha,j} V_{\beta,l} \right)\,, \\
\br{W_{\alpha,i},W_{\beta,k}} &= \frac12 \,o(\beta, \alpha) 
\left(W_{\beta,k} W_{\alpha,i} + W_{\alpha,k} W_{\beta,i} \right)\,, \\
\br{V_{\alpha,j},W_{\beta,k}} &= \delta_{\alpha \beta}\left(  \delta_{kj}
+\frac12 W_{\alpha,k} V_{\alpha,j} + \frac12 \delta_{kj}  (V_\alpha W_\alpha) \right) \nonumber \\
\,& 
+ \frac12 \,o(\alpha,\beta) 
\left(\delta_{kj} (V_\alpha W_\beta) + W_{\beta,k} V_{\alpha,j}  \right)\,. \label{qbrLast}
	\end{align}
\end{subequations}
In these formulas, $o(\alpha, \beta)$ denotes the ordering function defined in \ref{SNot}. Note that the bracket does not satisfy the Jacobi identity, but the induced bracket on $\MMt/\!/\Gl_n$ does. This is because the space $\MMt$ with the $\Gl_n$-action fits into the framework of the quasi-Hamiltonian reduction \cite{AMM, AKSM}, with the left-hand side of \eqref{mm} playing the role of a multiplicative moment map. Thus, the Poisson variety $\MMt/\!/\Gl_n$ is an example of a quasi-Hamiltonian quotient. Our main result can then be stated as follows.

\begin{thm}\label{mainthm} The Poisson manifold $\MMt/\!/\Gl_n$ admits local coordinates $x_i, a_i^\alpha, b_i^\alpha$, with $\sum_\alpha a_i^\alpha=1$, so that the Hamiltonian vector field associated with the function $h=2(q^{-1}-1)\tr Z$ has the form \eqref{Trigq}--\eqref{Trigc}. Moreover, the Poisson bracket on $\MMt/\!/\Gl_n$ admits an explicit description in local coordinates, and the brackets between $x_i$ and $f_{ij}=\sum_\alpha \aaa_i^\alpha \ccc_j^\alpha$ agree with the formulas \eqref{Eq:qAF}--\eqref{Eq:ffAF}, thus confirming 
Conjecture \ref{conjAF}.  
This also implies the validity of the $r$-matrix formulation \eqref{Zrmatrix}.
\end{thm}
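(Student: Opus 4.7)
My plan is to prove Theorem \ref{mainthm} by fixing an explicit gauge for the $\Gl_n$-action on $\MMt$, reading off local coordinates on the slice, reducing the quasi-Poisson bracket \eqref{qbr1}--\eqref{qbrLast} to a Poisson bracket on the GIT quotient, and finally computing the flow of $h = 2(q^{-1}-1)\tr Z$ directly from the reduced bracket. On a dense open subset, $X$ is diagonalisable with distinct eigenvalues, so after a $\Gl_n$-transformation I may assume $X = \diag(x_1,\ldots,x_n)$. The residual gauge is the diagonal torus $T \subset \Gl_n$, which acts by $V_\alpha \mapsto V_\alpha t^{-1}$, $W_\alpha \mapsto t W_\alpha$. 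A generic $T$-orbit can be uniquely normalised by the constraint $\sum_\alpha V_{\alpha,i} = 1$ for each $i$, which motivates setting $a_i^\alpha := V_{\alpha,i}$ and $b_i^\alpha := W_{\alpha,i}$.

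The moment map equation \eqref{mm} becomes $XZ = q Z M X$ with $M = \prod_\alpha(\Id_n + W_\alpha V_\alpha)$, and reading off matrix entries yields a linear system in the entries of $Z$ whose off-diagonal solution expresses $Z_{ij}$ ($i\neq j$) as explicit rational functions of $(x_i, a_i^\alpha, b_i^\alpha)$; the diagonal entries $Z_{ii}$ survive as independent coordinates but can be eliminated in favour of $b_i^\alpha$ by a rescaling, producing the Lax matrix shape $L_{ij} = 2 x_i f_{ij}/(x_i - q x_j)$ up to the diagonal. This delivers the required local coordinates and automatically identifies $h = 2(q^{-1}-1)\tr Z$ with $(1-q)\tr L$ on the slice.

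To reduce the bracket, I invoke Van den Bergh's theorem: the bracket \eqref{qbr1}--\eqref{qbrLast} is quasi-Poisson, satisfies Jacobi on $\Gl_n$-invariants, and descends to an honest Poisson bracket on $\MMt/\!/\Gl_n$. In practice, I compute $\br{F,G}$ for the invariant functions $x_i$ (eigenvalues of $X$, extracted via Lagrange interpolation) and $f_{ij}$ (constructed from $\Gl_n$-invariant combinations of $V_\alpha, X, W_\beta$) by applying \eqref{qbr1}--\eqref{qbrLast} term by term, using the Leibniz rule, and then evaluating the resulting expressions on the gauge slice $X = \diag(x_i)$. The bracket $\br{x_i, x_j}=0$ follows immediately from \eqref{qbr1}, while $\br{x_i, f_{jk}}$ collapses to the single multiplicative contribution in \eqref{Eq:qAF}. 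For the $a$--$b$ brackets and the Hamiltonian flow, I compute $\br{\tr Z, x_i}$, $\br{\tr Z, a_i^\alpha}$, $\br{\tr Z, b_i^\alpha}$ using the formulas for $\br{Z_{ij}, \cdot}$; the rational potential $V_{ik}$ of \eqref{EqPotV} emerges when the off-diagonal $Z_{ij}$ (solved from the moment map) is substituted back.

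The main obstacle is the computation of $\br{f_{ij}, f_{kl}}$ and matching it with \eqref{Eq:ffAF}. Expanding $f_{ij} = \sum_\alpha a_i^\alpha b_j^\alpha$, the bracket requires four applications of the Leibniz rule together with \eqref{qbrLast}, and the ordering function $o(\alpha,\beta)$ appears in an apparently asymmetric way. The key observation is that summing the antisymmetric $o(\alpha,\beta)$-terms against the symmetric combinations $a_i^\alpha b_j^\alpha a_k^\beta b_l^\beta$ forces massive cancellations, after which only the symmetric rational functions of $x_i, x_k, x_j, x_l$ survive. This is where the explicit computations of the appendix are indispensable, and where I would devote most of the technical effort. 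Once \eqref{Eq:qAF}--\eqref{Eq:ffAF} are established, Conjecture \ref{conjAF} and the $r$-matrix identity \eqref{Zrmatrix} follow at once, since the latter is merely a compact rewriting of the former \cite{AF}.
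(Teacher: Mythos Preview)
Your proposal has a concrete error in the choice of local coordinates, and this propagates through the rest of the argument. You set $a_i^\alpha := V_{\alpha,i}$ and $b_i^\alpha := W_{\alpha,i}$, with the torus-fixing constraint $\sum_\alpha V_{\alpha,i}=1$. But the coordinates appearing in the theorem are not these. The paper takes $a_i^\alpha = W_{\alpha,i}$ (so the constraint is $\sum_\alpha W_{\alpha,i}=1$) and, crucially, the $b_i^\alpha$ are \emph{not} the raw entries of $V_\alpha$ or $W_\alpha$ but the twisted spin variables
\[
(B_\alpha)_i = b_i^\alpha,\qquad B_\alpha = V_\alpha(\Id_n+W_{\alpha-1}V_{\alpha-1})\cdots(\Id_n+W_1V_1)Z\,.
\]
With these variables the moment map equation becomes $q^{-1}XZX^{-1} = Z + \sum_\alpha A_\alpha B_\alpha$, which on the diagonal slice $X=\diag(x_i)$ solves \emph{all} entries of $Z$ at once as $Z_{ij} = q x_j f_{ij}/(x_i-qx_j)$ with $f_{ij}=\sum_\alpha a_i^\alpha b_j^\alpha$. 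So your claim that ``the diagonal entries $Z_{ii}$ survive as independent coordinates'' is mistaken: $Z$ is fully determined by $(x_i,a_i^\alpha,b_i^\alpha)$, and your $f_{ij}=\sum_\alpha V_{\alpha,i}W_{\alpha,j}$ is not the $f_{ij}$ of the theorem. Without the twist in $B_\alpha$ you will not obtain the equations of motion \eqref{Trigq}--\eqref{Trigc} nor the brackets \eqref{Eq:ffAF}. (Incidentally, your rewriting of the moment map as $XZ=qZMX$ has the non-commuting factors in the wrong order.)

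There is also a methodological gap. You plan to compute brackets of $x_i$, $f_{ij}$ by applying \eqref{qbr1}--\eqref{qbrLast} and then ``evaluating on the gauge slice''. But the bracket \eqref{qbr1}--\eqref{qbrLast} is quasi-Poisson, not Poisson; Jacobi holds only after restricting to $\Gl_n$-invariant functions. Eigenvalues $x_i$ of $X$ are not polynomial invariants, and restricting to a slice does not by itself produce the reduced bracket without a Dirac-type correction. The paper avoids this by working entirely with genuine trace invariants $f_k=\tr X^k$ and $g_{\alpha\beta}^k=\tr(A_\alpha B_\beta X^k)$: their brackets are computed via Van den Bergh's double-bracket calculus (Lemma~\ref{Lodayxac}), and then one \emph{verifies} that the candidate formulas \eqref{Eqh1}--\eqref{Eqh4} on $\hreg$ reproduce those same brackets after pulling back along $\xi$ (Proposition~\ref{Pr:TadPoi}). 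The brackets \eqref{Eq:ffAF} and the flow \eqref{Trigq}--\eqref{Trigc} then follow from \eqref{Eqh1}--\eqref{Eqh4} by a further computation (Proposition~\ref{ggFinal}). So rather than reducing the bracket directly, the proof is a consistency check of an explicit ansatz against a generating set of invariants.
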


We can explain how the above local coordinates are constructed. They appear as a parametrisation of a local slice for the $\Gl_n$-action on $\MMt$. Namely, given $x_i, a_i^\alpha, b_i^\alpha$ with $\sum_\alpha a_i^\alpha=1$, let us introduce 
\begin{equation}\label{zlax}
 f_{ij}=\sum_\alpha \aaa_i^\alpha \ccc_j^\alpha\,,\quad X_{ij}=\delta_{ij}x_i\,, \quad Z_{ij}=\frac{qx_j  f_{ij} }{x_i-qx_j}\,.
\end{equation}
 We also set $(W_\alpha)_i=a_i^\alpha$. Finally, introduce $\Bs_\alpha\in\Mat_{1\times n}$ with $(\Bs_\alpha)_i:=b_i^\alpha$ and define $V_\beta\in\Mat_{1\times n}$ inductively by
 \begin{equation*}
V_1=\Bs_1Z^{-1}\,,\qquad V_\beta = \Bs_\beta Z^{-1} (\Id_n + W_1 V_1)^{-1} \ldots (\Id_n+W_{\beta-1}V_{\beta-1})^{-1}\,.
\end{equation*}
Then it is easy to check that the constructed matrix data $X, Z, V_\alpha, W_\alpha$ satisfy the moment map equation \eqref{mm}. (To be precise, we need to assume that $x_i, a_i^\alpha, b_i^\alpha$ are generic, so that $Z=(Z_{ij})$ is well-defined and invertible, and $\Id_n+W_\beta V_\beta$ is  invertible for any $\beta$.) This gives a subvariety in $\MMt$, with coordinates $x_i, a_i^\alpha, b_i^\alpha$, and it is easy to see that the $\Gl_n$-action on $\MMt$ is locally transversal to it. Thus, we obtain a local parametrisation of the space of orbits in $\MMt$, i.e. local coordinates on $\MMt/\!/\Gl_n$.      

The explicit description of the Poisson bracket in the local coordinates is given in the following proposition. 
\begin{prop}\label{brbr}
 The Poisson brackets between $(x_i,a_i^\alpha,b_i^\alpha)$ are given by 
 \begin{subequations}
 \begin{align}
  \br{x_i,x_j}=&0\,,\quad \br{x_i, \aaa_j^\alpha}=0\,,\quad 
\br{x_i, \ccc_j^\alpha}=\delta_{ij} x_i \ccc_j^\alpha \,, \label{Eqh1} \\
\br{\aaa_i^\alpha, \aaa_j^\beta}=&\frac12 \delta_{(i\neq j)}\frac{x_i+x_j}{x_i-x_j}
(\aaa_i^\alpha \aaa_j^\beta +\aaa_j^\alpha \aaa_i^\beta -\aaa_j^\alpha \aaa_j^\beta - 
\aaa_i^\alpha \aaa_i^\beta ) +\frac12 o(\beta, \alpha)
(\aaa_i^\alpha\aaa_j^\beta +\aaa_j^\alpha\aaa_i^\beta ) \nonumber \\
&+\frac12 \sum_{\gamma=1}^d o(\alpha,\gamma)\aaa_j^\beta 
(\aaa_i^\alpha\aaa_j^\gamma +\aaa_j^\alpha\aaa_i^\gamma )
-\frac12 \sum_{\gamma=1}^d o(\beta,\gamma)\aaa_i^\alpha 
(\aaa_j^\beta \aaa_i^\gamma+\aaa_i^\beta \aaa_j^\gamma)
\,, \label{Eqh2} \\
\br{\aaa_i^\alpha, \ccc_j^\beta}=&\aaa_i^\alpha Z_{ij}-\delta_{\alpha\beta}Z_{ij}-
\frac12 \delta_{(i\neq j)}\frac{x_i+x_j}{x_i-x_j} (\aaa_i^\alpha-\aaa_j^\alpha)\ccc_j^\beta
+\delta_{(\alpha<\beta)}\aaa_i^\alpha \ccc_j^\beta \nonumber \\
&+\aaa_i^\alpha \sum_{\gamma=1}^{\beta-1}\aaa_i^\gamma (\ccc_j^\gamma-\ccc_j^\beta) 
-\delta_{\alpha\beta} \sum_{\gamma=1}^{\beta-1} \aaa_i^\gamma \ccc_j^\gamma 
-\frac12 \sum_{\gamma=1}^d o(\alpha,\gamma)\ccc_j^\beta 
(\aaa_i^\alpha\aaa_j^\gamma +\aaa_j^\alpha\aaa_i^\gamma ) \,, \label{Eqh3} \\
\br{\ccc_i^\alpha, \ccc_j^\beta}=&
\frac12 \delta_{(i\neq j)}\frac{x_i+x_j}{x_i-x_j} (\ccc_i^\alpha\ccc_j^\beta  + \ccc_j^\alpha\ccc_i^\beta) 
-\ccc_i^\alpha Z_{ij} + \ccc_j^\beta Z_{ji} +\frac12 o(\beta,\alpha)
(\ccc_i^\alpha\ccc_j^\beta-\ccc_j^\alpha \ccc_i^\beta) \nonumber \\
&-\ccc_i^\alpha \sum_{\gamma=1}^{\beta-1}\aaa_i^\gamma (\ccc_j^\gamma-\ccc_j^\beta)
+\ccc_j^\beta \sum_{\gamma=1}^{\alpha-1}\aaa_j^\gamma (\ccc_i^\gamma-\ccc_i^\alpha)
 \,, \label{Eqh4}
 \end{align}
\end{subequations}
where $o(-,-)$ is the skew-symmetric pairing defined in \ref{SNot} and $Z_{ij}$ is defined in \eqref{zlax}.
\end{prop}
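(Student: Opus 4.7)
The plan is to pull the Poisson bracket on the quasi-Hamiltonian quotient $\MMt/\!/\Gl_n$ back to the $(2nd)$-dimensional slice $\Sigma\subset\MMt$ cut out by the parametrization \eqref{zlax} together with the inductive definition of the $V_\beta$. Since $\Sigma$ is locally transversal to the $\Gl_n$-orbits, the coordinates $(x_i,\aaa_i^\alpha,\ccc_i^\alpha)$ on $\Sigma$ serve as local coordinates on the quotient, and the induced Poisson bracket can be read off from the quasi-Poisson bracket \eqref{qbr1}--\eqref{qbrLast} by writing each slice coordinate as an explicit expression in the matrix entries $X_{ij},Z_{ij},V_{\alpha,j},W_{\alpha,i}$, applying Van den Bergh's formulas with the Leibniz rule, and restricting to $\Sigma$ where $X=\diag(x_i)$, $W_{\alpha,i}=\aaa_i^\alpha$, and $V_\alpha$ is determined from $(\aaa,\ccc)$ by the recursion.

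\textbf{The brackets in \eqref{Eqh1}.} Since $X$ is diagonal on $\Sigma$, the first two formulas follow essentially by inspection of \eqref{qbr1}: both $\br{X_{ii},X_{jj}}$ and $\br{X_{ii},W_{\alpha,j}}$ vanish upon restriction, yielding $\br{x_i,x_j}=0$ and $\br{x_i,\aaa_j^\alpha}=0$. For $\br{x_i,\ccc_j^\alpha}=\delta_{ij}x_i\ccc_j^\alpha$ one writes $\ccc_j^\alpha$ as the $j$-th entry of $B_\alpha=V_\alpha(\Id+W_{\alpha-1}V_{\alpha-1})\cdots(\Id+W_1V_1)Z$, applies the Leibniz rule, and observes that on $\Sigma$ the restricted brackets $\br{X_{ii},V_{\alpha,l}}$ and $\br{X_{ii},W_{\beta,k}}$ both vanish, so only $\br{X_{ii},Z_{kl}}$ contributes; a short calculation reduces this restricted bracket to $\delta_{il}x_iZ_{kl}$, which after combination with the remaining factors gives precisely $\delta_{ij}x_iB_{\alpha,j}$.

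\textbf{The brackets in \eqref{Eqh2}--\eqref{Eqh4} and the main obstacle.} For \eqref{Eqh2} one starts from $\br{W_{\alpha,i},W_{\beta,k}}$ in \eqref{qbrLast}, which directly supplies the $o(\beta,\alpha)$ contributions. The extra $(x_i+x_j)/(x_i-x_j)$ factor and the two $\gamma$-sums arise from replacing $W_{\alpha,i}$ by a $\Gl_n$-invariant extension on an open neighbourhood of $\Sigma$; employing the moment map equation \eqref{mm} and the constraints $\sum_\alpha\aaa_i^\alpha=1$ to express these gauge corrections in terms of $X$ and $Z$ produces precisely the $X$-dependent pieces of \eqref{Eqh2}. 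For \eqref{Eqh3}--\eqref{Eqh4} one proceeds inductively through the formula for $B_\beta$: the Leibniz rule applied to the product $(\Id+W_1V_1)\cdots(\Id+W_{\beta-1}V_{\beta-1})$ yields the telescoping sums $\sum_{\gamma=1}^{\beta-1}\aaa_i^\gamma(\ccc_j^\gamma-\ccc_j^\beta)$ appearing in \eqref{Eqh3}--\eqref{Eqh4}, while the contributions $\aaa_i^\alpha Z_{ij}-\delta_{\alpha\beta}Z_{ij}$ and the $o(\alpha,\beta)$-type terms come from the $\delta_{\alpha\beta}\delta_{kj}$ and the ordering parts of $\br{V_{\alpha,j},W_{\beta,k}}$ in \eqref{qbrLast}. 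The main obstacle I anticipate is the combinatorial bookkeeping: each application of the Leibniz rule to $B_\beta$ creates many cross terms from the $(\Id+W_\gamma V_\gamma)$ factors, and one must verify that they collapse into the compact expressions displayed in \eqref{Eqh3}--\eqref{Eqh4}. This lengthy calculation is presumably carried out in the appendix, following the inductive pattern established in the authors' earlier work \cite{CF} on the non-spin trigonometric RS system.
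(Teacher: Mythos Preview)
Your strategy differs from the paper's in an important way. You propose to \emph{derive} the brackets by extending each slice coordinate $x_i,\aaa_i^\alpha,\ccc_i^\alpha$ to a $\Gl_n$-invariant function near $\Sigma$, computing quasi-Poisson brackets of these extensions via \eqref{qbr1}--\eqref{qbrLast} and the Leibniz rule, and then restricting. The paper instead \emph{verifies} the formulas \eqref{Eqh1}--\eqref{Eqh4}: it postulates them on $\hreg$, then checks that the map $\xi$ intertwines brackets by testing on a generating family of genuinely invariant trace functions, namely $f_k=\tr X^k$ and $g^k_{\alpha\beta}=\tr(A_\alpha B_\beta X^k)$. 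Their brackets on $\Cnqdt$ are computed purely algebraically via the Loday bracket on $\Aalt/[\Aalt,\Aalt]$ (the key computation is $\br{a_\gamma b_\epsilon x^k,\,a_\alpha b_\beta x^l}$, done in the appendix), and then compared with what \eqref{Eqh1}--\eqref{Eqh4} predict for $\br{\xi^*g^k_{\gamma\epsilon},\xi^*g^l_{\alpha\beta}}$. The check is staged: first sum over both spin indices to isolate \eqref{Eqh4}, then over one index for \eqref{Eqh3}, then no sum for \eqref{Eqh2}.

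The gap in your plan is the phrase ``gauge corrections''. Restricting the quasi-Poisson bracket of non-invariant coordinates like $W_{\alpha,i}$ to $\Sigma$ does not compute the reduced Poisson bracket; one must first pass to invariant extensions, and you never say what those are or how to compute the resulting corrections concretely. For $x_i$ this is harmless (eigenvalues are invariant and $\partial\lambda_i/\partial X_{kl}|_\Sigma=\delta_{ik}\delta_{il}$, so your argument for \eqref{Eqh1} survives), but for $\aaa_i^\alpha$ and $\ccc_i^\beta$ the invariant extensions are far from obvious, and in the quasi-Hamiltonian setting there is no off-the-shelf Dirac-type formula to invoke. The paper's route avoids this entirely: by working only with traces it never leaves the world of invariant functions, so the quasi-Poisson bracket already coincides with the reduced Poisson bracket, and all the combinatorics is pushed into manipulating the Loday bracket in $\Aalt/[\Aalt,\Aalt]$ (where the double-bracket calculus and the spin elements $s_\alpha$ from Lemmas~\ref{Lem:s}--\ref{Lem:Talg1} provide the needed identities). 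What each approach buys: yours would explain \emph{where} the formulas come from term by term, but needs a mechanism you have not supplied; theirs is a clean verification that exploits the noncommutative formalism to keep every step invariant.
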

The formulas \eqref{Eqh1}--\eqref{Eqh4} are considerably more complicated than  \eqref{AF1}--\eqref{AF4}, which is probably why they have not been guessed earlier. The proof of this proposition relies on some fairly long computations performed in Appendix \ref{Ann:tadpole}. Note that the fact that the bracket defined by the formulas \eqref{Eqh1}--\eqref{Eqh4} is Poisson is not immediately obvious but follows from the reduction procedure.
The Arutyunov--Frolov's conjecture \eqref{Eq:qAF}--\eqref{Eq:ffAF} is then a direct corollary of Proposition \ref{brbr} and the definition of $f_{ij}$.

On the variety $\MMt/\!/\Gl_n$ we have $n$ algebraically independent functions $h_k=\tr Z^k$ ($k=1, \ldots, n$), which Poisson commute as a consequence of \eqref{qbr1}, see Lemma \ref{invyz}. 
The Hamiltonian flow for each of $h_k$ is complete and can be explicitly integrated, see \ref{ssFlowDIS} below (cf. \cite{RaS}). Thus, one may view $\MMt/\!/\Gl_n$ as a {completed phase space} for the trigonometric spin RS system. 
The following theorem is another main result of this paper. 
\begin{thm} \label{mainthm2}
 The Hamiltonian system defined on $\MMt/\!/\Gl_n$ by the Poisson commuting Hamiltonians $h_1,\ldots,h_n$ is degenerately integrable. Namely, there exists a Poisson subalgebra $\Qal \subset \CC[\MMt/\!/\Gl_n]$ of (Krull) dimension $2nd-n$, whose centre contains $h_1, \dots, h_n$.  
\end{thm}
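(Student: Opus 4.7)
The plan is to take $\Qal$ to be the subalgebra of $\CC[\MMt/\!/\Gl_n]$ consisting of those $\Gl_n$-invariant functions that do not involve $X$, i.e., polynomials in $Z, Z^{-1}, V_\alpha, W_\alpha$ only. Geometrically $\Qal = \pi^{*}\CC[\mathcal{N}_0]$, where $\pi\colon \MMt/\!/\Gl_n \to \mathcal{N}_0$ is the natural projection forgetting $X$ and $\mathcal{N}_0$ is the affine GIT quotient of $\{(Z, V_\alpha, W_\alpha)\}$ by $\Gl_n$. Three things must be checked: (a) $\Qal$ is Poisson-closed; (b) each $h_k = \tr Z^k$ is central in $\Qal$; (c) $\dim \Qal = 2nd - n$.

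Closure of $\Qal$ and centrality of the $h_k$ follow directly from the explicit quasi-Poisson formulas. Inspection of \eqref{qbr1}--\eqref{qbrLast} shows that every bracket of two generators chosen from $\{Z_{ij}, V_{\alpha,j}, W_{\alpha,k}\}$ is again a polynomial in the same generators and introduces no $X$, so $\CC[Z, V_\alpha, W_\alpha]$ is quasi-Poisson closed inside $\CC[\mathcal M]$ and its $\Gl_n$-invariants descend to a genuine Poisson subalgebra of $\CC[\MMt/\!/\Gl_n]$, namely $\Qal$. By Leibniz it suffices to verify $\br{h_k, Z_{ij}} = \br{h_k, V_{\alpha,j}} = \br{h_k, W_{\alpha,i}} = 0$; for $k=1$ each identity is a one-line telescoping sum, e.g.\ $\sum_j \br{Z_{jj}, V_{\alpha,l}} = \tfrac12\bigl((V_\alpha Z)_l - (V_\alpha Z)_l\bigr) = 0$, and the general $k \geq 2$ case is handled by the same mechanism that underlies the proof of $\br{h_k, h_l}=0$ in Lemma \ref{invyz}. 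Algebraic independence of the $n$ traces $h_1,\dots,h_n$ is inherited from their independence on a generic matrix $Z$.

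It remains to establish $\dim \Qal = 2nd - n$, which I would do by computing the generic fibres of $\pi$. Rearranging \eqref{mm} gives $XZX^{-1} = qZ(\Id_n + W_d V_d)\cdots(\Id_n + W_1 V_1)$, so for a generic class $[(Z, V_\alpha, W_\alpha)]$ in the image of $\pi$---one with $Z$ of simple spectrum and with trivial $\Gl_n$-stabiliser on the triple---the set of admissible $X$ is a left coset of the centraliser $\mathrm{Cent}_{\Gl_n}(Z)$, which is an $n$-dimensional torus. Combined with $\dim(\MMt/\!/\Gl_n) = 2nd$ this forces $\dim\bigl(\pi(\MMt/\!/\Gl_n)\bigr) = 2nd - n$, whence $\dim \Qal = 2nd - n$.

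The main obstacle is precisely this last dimension count: one must confirm that the locus in $\MMt$ where $Z$ is regular semisimple and the $\Gl_n$-action on $(Z, V_\alpha, W_\alpha)$ is free is nonempty and open, and that over the corresponding locus in the image $\pi$ is actually a $\mathrm{Cent}_{\Gl_n}(Z)$-torsor bundle, since only then does the fibre dimension translate into the correct Krull dimension of $\Qal$. By comparison, the Poisson closure and the centrality statements reduce to straightforward inspection of the explicit formulas \eqref{qbr1}--\eqref{qbrLast}.
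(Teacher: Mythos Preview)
Your approach is essentially the paper's: take $\Qal$ to be the $\Gl_n$-invariants built from $(Z, V_\alpha, W_\alpha)$ alone, verify Poisson closure and centrality of the $h_k$ directly from the bracket table \eqref{qbr1}--\eqref{qbrLast}, and obtain $\dim\Qal$ from the fibre dimension of the projection that forgets $X$. The paper instead generates $\Qal$ by the specific invariants $t_{\alpha\beta}^k=V_\beta Z^k W_\alpha$ and proves an explicit closed formula for $\br{t_{\gamma\epsilon}^k,t_{\alpha\beta}^l}$ (Lemma~\ref{Lem:ttMQV}); this is more work than closure requires, but the formula is reused later to build the Gelfand--Tsetlin commutative subalgebra in Section~\ref{Sd2}. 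By the first fundamental theorem together with the moment map identity (which puts each $\tr Z^k$ in the span of products of $t$'s), the two candidate $\Qal$'s agree on $\MMt/\!/\Gl_n$, and the dimension argument is the same fibre count in both.

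The obstacle you single out---that $Z$ is generically regular semisimple on $\MMt$---is exactly what the paper isolates as Proposition~\ref{comp} and proves by restricting to the non-spin sublocus $\mathcal M_{n,1,q}^\times$ (known to be connected), where in local coordinates $Z$ becomes the classical Ruijsenaars--Schneider Lax matrix. That proposition is the one substantive missing ingredient in your outline; once it is in place, your argument goes through. One small remark: the paper also bounds $\dim\Qal$ from above by the independent observation that $\br{h_i,\Qal}=0$ combined with nondegeneracy of the symplectic form forces $\dim\Qal\le 2nd-n$, which gives a clean two-sided pinch rather than relying on the fibre calculation for both inequalities.
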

This theorem is proved in Section \ref{STadIS}. The algebra $\Qal$ is described as follows. For any $k\in \N$, $\alpha,\beta=1,\ldots, d$, define functions $t_{\alpha \beta}^k=\tr(W_\alpha V_\beta Z^k)=V_\beta Z^k W_\alpha$. Note that $t_{\alpha \beta}^k\in \CC[\MMt]^{\Gl_n}=\CC[\MMt/\!/\Gl_n]$. Then $\Qal$ is the subalgebra of $\CC[\MMt]^{\Gl_n}$ generated by all $t_{\alpha \beta}^k$. 

We should mention that for the rational spin RS model,  degenerate integrability was established by Reshetikhin \cite{Re}. His approach applies to a wider family of spin models related to simple Lie algebras, and can be adapted to the trigonometric case as well \cite[Section 6]{Res2}. However, the reduction is performed on the Heisenberg double of $G$ (with $G=\mathrm{SL}_n$ for the type $A$ model), which is of different dimension compared to our space $\mathcal M$. Another important difference is that we consider complexified dynamics, for which the variety $\MMt/\!/\Gl_n$ provides a \emph{completed} phase space.

In Sections \ref{Slocal}, \ref{First} we also discuss another integrable system defined by the Hamiltonians $h_k=\tr Y^k$, for $Y=Z-X^{-1}$.  This system can be viewed as a modification of the trigonometric spin RS model, and it has similar properties: the Hamiltonian flows are complete and can be explicitly integrated, and we also have degenerate integrability. 

Degenerate integrability is known to be a stronger property than Liouville integrability. In the real smooth setting, it implies that the phase space can be fibred into invariant tori (or more general non-compact fibers) of smaller dimension, see \cite{N, J08}. Therefore, it is natural to expect that the above Hamiltonians $h_i$ can be extended to a full set of $nd$ commuting Hamiltonians on the variety $\MMt/\!/\Gl_n$. We establish this fact in Section \ref{Sd2}. Note that in general such an extension is not canonical. The completely integrable extension that we construct can be viewed as an analogue of the Gelfand--Tsetlin system.


\section{The quasi-Hamiltonian picture}   \label{SqHpicture} 

In this section we describe the quasi-Hamiltonian reduction procedure for obtaining a completed phase space for the spin trigonometric RS model. The reduction is performed on a representation space of a framed Jordan quiver, and is an application of the general theory developed by Van den Bergh \cite{VdB1, VdB2}. Following his approach, we first introduce a suitable noncommutative quasi-Hamiltonian algebra; the corresponding geometric objects will arise after passing to representation spaces for this algebra.

\subsection{The quasi-Hamiltonian algebra} \label{ss:qHtadpole} 

According to \cite{VdB1}, a quasi-Hamiltonian algebra is a triple consisting of an algebra $\Aal$, a double bracket $\dgal{-, -}$ on it, and a multiplicative moment map $\Phi\in\Aal$. These must satisfy certain properties which should be regarded as noncommutative analogues of the properties of quasi-Hamiltonian spaces \cite{AMM, AKSM}. In \cite[Section 6.7]{VdB1}, it is explained how to associate a quasi-Hamiltonian algebra to any quiver. We will not present Van den Bergh's theory in full detail (see \cite[Section 2]{CF} for a brief account sufficient for the purposes of this paper), and simply describe below a particular choice of $\Aal$, $\dgal{-,-}$ and $\Phi$ that we make.

\subsubsection{The algebra $\Aal$} \label{qui}
Consider a framed Jordan quiver $Q$ which has two vertices, $0$ and $\infty$, and arrows, $x: 0 \to 0$,  $v_1,\ldots,v_d:\infty \to 0$. By $\bar{Q}$ we denote the doubled quiver, which has additional arrows $y: 0 \to 0$ and  $w_1,\ldots,w_d:0 \to \infty$. 
Let $\CC \bar{Q}$ be the path algebra of the doubled quiver; it is generated by the idempotents $e_0, e_\infty$ (representing the zero paths) and the arrows $x,y,v_\alpha,w_\alpha$, with the multiplication given by concatenation of paths. We will be writing paths from left to right: e.g., $xw_\alpha$ represents a path that starts at $0$ and ends at $\infty$. The element $e_0+e_\infty$ will be identified with $1$. 
Introduce an algebra $\Aal$, obtained from $\CC \bar{Q}$ by formally inverting the elements 
$1+xy, 1+yx, 1+w_\alpha v_\alpha$ and $1+v_\alpha w_\alpha$. Below we will also use a further localisation of $\Aal$, obtained by inverting $x$; we denote the resulting algebra as $\Aalt$.

\subsubsection{The double bracket}  \label{ss:dbr}

By definition \cite{VdB1}, a double bracket on an algebra $A$ is a map $A\times A\to A\otimes A$, $(a,b)\mapsto \dgal{a, b}$ which is linear in both arguments and satisfies two properties,
\begin{equation}\label{2prop}
\dgal{a, b} =-\dgal{b, a}^\circ \quad\text{and}\quad  \dgal{a, bc}=\dgal{a, b}c+b\dgal{a, c}\,. 
\end{equation}
Here $\circ$ denotes a linear map $A\otimes A\to A\otimes A$ defined by $(u\otimes v)^\circ=v\otimes u$, so the first formula replaces the usual antisymmetry. The second formula means that the bracket is a derivation in the second argument, with $A\otimes A$ viewed as an $A$-bimodule in the usual way, i.e. with $a(u\otimes v)b=au\otimes vb$. In the quasi-Hamiltonian setting, this bracket is also required to be quasi-Poisson (we omit the definition, see \cite[Section 5]{VdB1} or \cite[Section 2.2]{CF}).   

By \cite[Section 6.7]{VdB1}, the path algebra $\CC\bar{Q}$ of any doubled quiver admits a quasi--Poisson double bracket. In our situation, this bracket takes the following form:     
 \begin{subequations}
       \begin{align}
\dgal{x,x}\,=\,&\frac{1}{2}\left( x^2\otimes e_{0}- e_{0}\otimes x^2 \right)\,,\quad
\dgal{y,y}\,=\,\frac{1}{2}\left(e_0\otimes y^2 - y^2\otimes e_0 \right)\,,\label{tadida}\\
\dgal{x,y}\,=\,&e_{0}\otimes e_{0}
+\frac{1}{2} (yx\otimes e_{0} + e_{0}\otimes xy+y\otimes x-x\otimes y)\,, \label{tadidb}\\
\dgal{x, w_\alpha}\,=\,& \frac12 e_{0}\otimes xw_\alpha-\frac12 x\otimes w_\alpha\,,\quad 
\dgal{x, v_\alpha}= \frac12 v_\alpha x\otimes e_0-\frac12 v_\alpha\otimes x\,,\label{tadidd}\\
\dgal{y, w_\alpha}\,=\,& \frac12 e_{0}\otimes yw_\alpha-\frac12 y \otimes w_\alpha\,,\quad 
\dgal{y, v_\alpha}= \frac12 v_\alpha y\otimes e_0-\frac12 v_\alpha\otimes y\,,\label{tadide} \\
\dgal{v_\alpha,v_\beta}\,=\,&\frac12 \,o(\beta, \alpha) 
\left(v_\alpha \otimes v_\beta+v_\beta\otimes v_\alpha \right)\,, \label{tadidv}\\
\dgal{w_\alpha,w_\beta}=\,&\frac12 \,o(\beta, \alpha) 
\left(w_\alpha \otimes w_\beta+w_\beta\otimes w_\alpha \right)\,, \label{tadidw}\\
\dgal{v_\alpha,w_\beta}=\,& \delta_{\alpha \beta}\left(  e_0\otimes e_\infty
+\frac12 w_\alpha v_\alpha \otimes e_\infty + \frac12 e_0 \otimes v_\alpha w_\alpha \right) \nonumber \\
\,& + \frac12 \,o(\alpha,\beta) 
\left(e_0\otimes v_\alpha w_\beta + w_\beta v_\alpha \otimes e_\infty \right)\,.\label{tadidu}
	\end{align}
\end{subequations}
The double bracket depends on a total ordering on the set of arrows of $\bar{Q}$, and our choice corresponds to setting $x<y<v_1<w_1<v_2<\ldots<v_d<w_d$. 
It is assumed here that the bracket is linear over the subalgebra $\CC e_0 \oplus \CC e_\infty$, that is, $\dgal{e_0, a}=\dgal{e_\infty, a}=0$ for all $a$. The above formulas are obtained by using \cite[Proposition 2.6]{CF}; they completely determine a double bracket on $\CC\bar{Q}$ due to \eqref{2prop}. 

It is clear that this bracket uniquely extends to the localised algebras $\Aal$ and $\Aalt$, defined above. Considering $\Aalt$, we can introduce $z=y+x^{-1}$, and obtain the double brackets 
\begin{subequations}
\begin{align}
&\dgal{z,z}\,=\,\frac{1}{2}\left(e_0\otimes z^2-z^2\otimes e_0 \right)\label{tadidaZ}\,,\quad
\dgal{x,z}\,=\, \frac{1}{2} (zx\otimes e_{0} + e_{0}\otimes xz + z\otimes x-x\otimes z)\,.\\
&\dgal{z, w_\alpha}\,=\, \frac12 (e_{0}\otimes zw_\alpha-z \otimes w_\alpha)\,,\quad 
\dgal{z, v_\alpha}= \frac12( v_\alpha z\otimes e_0-v_\alpha\otimes z)\,.\label{tadideZ} 
\end{align}
\end{subequations} 
This follows from a direct calculation, or by using \cite[Section 2.5]{CF}.

\subsubsection{The multiplicative moment map}\label{ss:mmap}
For our choice of a quiver, the multiplicative moment map is the following element $\Phi= \Phi_0+\Phi_\infty$, where 
\begin{subequations}
    \begin{align}
\Phi_0 &= e_0(1+xy)(1+yx)^{-1}\, (1+w_1 v_1)^{-1}\dots(1+w_d v_d)^{-1}e_0\,, \label{t1} \\ 
\Phi_\infty &= e_\infty(1+v_1 w_1)\dots (1+w_d v_d)e_\infty\,. \label{t2}
    \end{align}
\end{subequations} 
Note that the definition of $\Phi$ in \cite[6.7]{VdB1} requires a total ordering on $\bar{Q}$, which we take as above. Considering $\Aalt$, we can use the element $z=y+x^{-1}$ and write    
\begin{subequations}
    \begin{align}
\Phi_0&=e_0\,xzx^{-1}z^{-1}\, (1+w_1 v_1)^{-1}\dots(1+w_d v_d)^{-1}e_0\,, \label{tt1} \\ 
\Phi_\infty&=e_\infty(1+v_1 w_1)\dots (1+w_d v_d)e_\infty\,. \label{tt2}
    \end{align}
\end{subequations} 
The defining property \cite[5.1.4]{VdB1} of the moment map $\Phi$ is that it satisfies 
\begin{equation} \label{Phim}
 \dgal{\Phi_i,a}=\frac12 (ae_i\otimes \Phi_i-e_i \otimes \Phi_i a +  a \Phi_i \otimes e_i-\Phi_i \otimes e_i a)\,,
\end{equation}
for $i=0,\infty$ and any $a\in \Aal$.

It will also be convenient to introduce $\phi= xzx^{-1}z^{-1}$, which can be viewed as the moment map for the quasi-Hamiltonian algebra associated to the subquiver 
$\bar{Q}_0$ of $\bar{Q}$, obtained by deleting the vertex $\infty$ and all the arrows passing through it. By the properties of the moment map and of the fusion procedure \cite[5.3.1]{VdB1} we have 
 \begin{equation} \label{Phim2}
 \dgal{\phi,a}=\frac12 (ae_0\otimes \phi-e_0 \otimes \phi a +  a \phi \otimes e_0- \phi\otimes e_0 a)\,,
\end{equation}
for any $a\in \CC\langle x^{\pm1},z^{\pm1}\rangle$. The above formula can also be verified directly. Another direct calculation using \eqref{tadidd} and \eqref{tadideZ} shows that 
\begin{equation}
  \dgal{\phi,v_\alpha}=\frac12 (v_\alpha \phi \otimes e_0 - v_\alpha \otimes \phi), \quad 
\dgal{\phi,w_\alpha}=\frac12 (e_0 \otimes \phi w_\alpha - \phi \otimes w_\beta)\,. \label{Phim3}
\end{equation}

\subsubsection{Spin elements} 

For later use, let us introduce the following elements in $\Aalt$ 
\begin{equation}\label{salpha0}
  s_\alpha = (1+w_\alpha v_\alpha)\dots (1+w_{1}v_1)z\,, \quad 1\leq \alpha \leq d\,.
\end{equation}
We can see that $s_d=(\Phi_0)^{-1}\phi z$, and we can obtain all the other elements inductively by noticing that 
$s_\alpha=u_{\alpha+1} s_{\alpha+1}$, where $u_{\alpha}=(1+w_{\alpha}v_{\alpha})^{-1}$. In this way, we can obtain the double brackets between $s_\alpha$ and the generators of $\Aalt$. 
\begin{lem} \label{Lem:s}
 We have: 
\begin{subequations}
  \begin{align}
\dgal{s_\alpha,z}=&\frac12 (s_\alpha \otimes z - zs_\alpha \otimes e_0 + e_0 \otimes s_\alpha z - z \otimes s_\alpha) \label{br:sz}\\
\dgal{s_\alpha,x}=&\frac12 (s_\alpha\otimes x - xs_\alpha \otimes e_0 - e_0 \otimes s_\alpha x - x \otimes s_\alpha) \label{br:sx} \\
\dgal{s_\alpha,v_\beta}=& -\frac12 (v_\beta s_\alpha \otimes e_0 + v_\beta \otimes s_\alpha),\,\, 
\dgal{s_\alpha,w_\beta}=\frac12 (e_0 \otimes s_\alpha w_\beta+s_\alpha \otimes w_\beta),\,\, 
\text{for }\alpha \geq \beta, \label{br:svw} \\
\dgal{s_\alpha,v_\beta}=& \frac12 (v_\beta s_\alpha \otimes e_0 - v_\beta \otimes s_\alpha),\,\, 
\dgal{s_\alpha,w_\beta}=\frac12 (e_0 \otimes s_\alpha w_\beta - s_\alpha \otimes w_\beta ),\,\, 
\text{for }\alpha < \beta, \label{br:svw2}
  \end{align}
\end{subequations}
\end{lem}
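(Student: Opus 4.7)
My plan is to prove the lemma by induction on $\alpha$, using the recursion $s_\alpha = (1+w_\alpha v_\alpha)\,s_{\alpha-1}$ for $\alpha \geq 1$. Adopting the convention $s_0 := z$, the identities \eqref{br:sz}--\eqref{br:svw2} for $\alpha = 0$ reduce exactly to the known double brackets \eqref{tadidaZ}--\eqref{tadideZ} (the case \eqref{br:svw} is vacuous since $\beta \geq 1$), so the base case holds.

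\textbf{Inductive step.} Setting $A_\alpha := 1 + w_\alpha v_\alpha$ and applying the first-argument Leibniz rule for double brackets (with respect to the appropriate inner $\Aalt$-bimodule structure on $\Aalt \otimes \Aalt$), followed by a further expansion of $\dgal{w_\alpha v_\alpha, c}$, yields
\begin{equation*}
\dgal{s_\alpha, c} \;=\; \bigl(\dgal{w_\alpha, c}\, v_\alpha + w_\alpha\, \dgal{v_\alpha, c}\bigr)\, s_{\alpha-1} \;+\; A_\alpha\, \dgal{s_{\alpha-1}, c}.
\end{equation*}
Each ingredient is known: $\dgal{w_\alpha, c}$ and $\dgal{v_\alpha, c}$ are read off from \eqref{tadida}--\eqref{tadidu} (using the antisymmetry property when needed), while $\dgal{s_{\alpha-1}, c}$ is supplied by the inductive hypothesis. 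Substituting, simplifying, and repeatedly recognizing $A_\alpha s_{\alpha-1} = s_\alpha$ on the right should collapse the expression into the claimed form. The cases $c = z$ and $c = x$ are essentially mechanical, using the simple form of $\dgal{w_\alpha, z}$, $\dgal{v_\alpha, z}$, $\dgal{w_\alpha, x}$, $\dgal{v_\alpha, x}$ coming from \eqref{tadidd} and \eqref{tadideZ}. For $c \in \{v_\beta, w_\beta\}$ with $\beta \neq \alpha$ the hypothesis and the target share the same structural form (either \eqref{br:svw} or \eqref{br:svw2}), and the factor $A_\alpha$ contributes only terms built from $\dgal{w_\alpha, v_\beta}$, $\dgal{v_\alpha, v_\beta}$ and their $w$-counterparts that respect this shape.

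\textbf{Main obstacle.} The delicate case is the threshold $\beta = \alpha$: the inductive hypothesis then delivers the $\alpha - 1 < \beta$ formula \eqref{br:svw2}, whereas the conclusion must take the $\alpha \geq \beta$ form \eqref{br:svw}. The discrepancy between the two formulas is a specific rank-one correction ($-v_\beta s_\alpha \otimes e_0$ for $\dgal{s_\alpha, v_\beta}$ and $s_\alpha \otimes w_\beta$ for $\dgal{s_\alpha, w_\beta}$), and this correction must be furnished entirely by the anomalous $\delta_{\alpha\beta}\bigl(e_0 \otimes e_\infty + \tfrac12 w_\alpha v_\alpha \otimes e_\infty + \tfrac12 e_0 \otimes v_\alpha w_\alpha\bigr)$ piece of $\dgal{v_\alpha, w_\alpha}$ in \eqref{tadidu}, which has no counterpart in \eqref{tadidv} or \eqref{tadidw}. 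Verifying that this anomaly produces exactly the required rank-one correction---with all remaining contributions cancelling cleanly---is the main combinatorial step and is the heart of the proof.
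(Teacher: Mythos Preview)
Your ascending induction from $s_0=z$ is correct and works exactly as you describe; the threshold case $\beta=\alpha$ is indeed the only nontrivial one, and the $\delta_{\alpha\beta}$ piece of $\dgal{v_\alpha,w_\alpha}$ in \eqref{tadidu} produces precisely the rank-one correction needed to switch from the form \eqref{br:svw2} to the form \eqref{br:svw}. (A quick check for $c=w_\alpha$: the contribution $w_\alpha\ast\dgal{v_\alpha,w_\alpha}\ast s_{\alpha-1}$ gives $s_{\alpha-1}\otimes w_\alpha+\tfrac12 w_\alpha v_\alpha s_{\alpha-1}\otimes w_\alpha+\tfrac12 s_{\alpha-1}\otimes w_\alpha v_\alpha w_\alpha$, whose last term cancels against the corresponding piece of $A_\alpha\ast\dgal{s_{\alpha-1},w_\alpha}$, and the first two combine with the surviving $-\tfrac12 s_{\alpha-1}\otimes w_\alpha$ to yield $\tfrac12 s_\alpha\otimes w_\alpha$.)

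The paper takes the opposite route: it argues by \emph{descending} induction, starting from $\alpha=d$ and using $s_\alpha=(1+w_{\alpha+1}v_{\alpha+1})^{-1}s_{\alpha+1}$. The base case $\alpha=d$ is then established via the identity $s_d=\Phi_0^{-1}\phi z$, which lets the authors invoke the moment-map relations \eqref{Phim}--\eqref{Phim3} rather than computing from scratch. Your base case is considerably lighter (it is literally the defining brackets for $z$), at the cost of not exploiting the moment-map structure; the paper's version, conversely, exhibits the result as a consequence of the quasi-Hamiltonian formalism. The inductive steps are mirror images of one another, and the same threshold phenomenon appears in the paper's direction at $\beta=\alpha+1$.
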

See \ref{Ann:br1} for the proof. We can also obtain the double brackets between the elements $s_\alpha$ themselves, and this is proved in \ref{Ann:brBis}. 
\begin{lem} \label{Lem:sBis}
  We have: 
\begin{equation*}
  \dgal{s_\alpha,s_\beta}= \frac12 (e_0 \otimes s_\alpha s_\beta - s_\beta s_\alpha \otimes e_0) 
+ \frac12 o(\alpha,\beta) (s_\beta \otimes s_\alpha - s_\alpha \otimes s_\beta )\,.
\end{equation*}
\end{lem}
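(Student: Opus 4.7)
The double bracket satisfies $\dgal{a,b} = -\dgal{b,a}^\circ$, and one checks that swapping $\alpha \leftrightarrow \beta$ on the right-hand side of the claimed identity and then applying the flip $\circ$ produces its negative (using $o(\beta,\alpha) = -o(\alpha,\beta)$). Hence I may assume $\alpha \leq \beta$. I will prove the lemma by induction on $\beta$, extending the definition to $s_0 := z$ so that the formulas \eqref{br:sz}--\eqref{br:svw2} of Lemma \ref{Lem:s} remain valid with $0$ allowed as an index, and so that the case $\alpha = \beta = 0$ of the claim coincides with \eqref{tadidaZ}.

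For the inductive step I exploit the recursion $s_\beta = (1 + w_\beta v_\beta)\,s_{\beta-1}$ and the Leibniz rule \eqref{2prop} in the second slot:
\[
\dgal{s_\alpha, s_\beta} \;=\; \dgal{s_\alpha, w_\beta}\, v_\beta s_{\beta-1} \;+\; w_\beta\, \dgal{s_\alpha, v_\beta}\, s_{\beta-1} \;+\; (1 + w_\beta v_\beta)\, \dgal{s_\alpha, s_{\beta-1}}.
\]
The first two double brackets are supplied by Lemma \ref{Lem:s}, applied in case \eqref{br:svw} when $\alpha = \beta$ and in case \eqref{br:svw2} when $\alpha < \beta$. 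The third is supplied by the induction hypothesis, split into the subcases $\alpha = \beta - 1$ and $\alpha < \beta - 1$. After substitution, one distributes the outer bimodule actions (left multiplication by $1 + w_\beta v_\beta$ on the left tensor factor, right multiplication by $v_\beta s_{\beta-1}$ or $s_{\beta-1}$ on the right tensor factor) and collapses the resulting sum using the identity $(1 + w_\beta v_\beta)\,s_{\beta-1} = s_\beta$, thereby producing the four monomials $e_0 \otimes s_\alpha s_\beta$, $s_\beta s_\alpha \otimes e_0$, $s_\beta \otimes s_\alpha$ and $s_\alpha \otimes s_\beta$ with the coefficients required by the claim.

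The main obstacle lies in the bookkeeping around the ordering sign: the coefficient $\tfrac12 o(\alpha,\beta)$ in the answer has to be assembled as a cancellation between contributions of different origin, namely the Leibniz action on the factor $w_\beta v_\beta$ and the inductive term. The case $\alpha = \beta$ is the most delicate because $o(\alpha,\alpha) = 0$, so the pieces of the form $s_\beta \otimes s_\alpha - s_\alpha \otimes s_\beta$ which are present when $\alpha < \beta$ must cancel completely; this cancellation is driven precisely by the sign flip between \eqref{br:svw} and \eqref{br:svw2} (the change from $-\tfrac12 v_\beta \otimes s_\alpha$ to $+\tfrac12 v_\beta s_\alpha \otimes e_0$ type contributions), and verifying it is the key algebraic check. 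Once the $\alpha = \beta$ base of the inductive layer is established, the cases $\alpha < \beta$ follow by a shorter version of the same computation, and the lemma is proved.
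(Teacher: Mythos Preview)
Your inductive strategy is sound and does lead to the result, but there is a small logical wrinkle in the order you propose for the subcases. When $\alpha=\beta$, the third term in your Leibniz expansion is $\dgal{s_\beta,s_{\beta-1}}$; this has one index equal to $\beta$, so it is \emph{not} furnished by the induction hypothesis at level $\beta-1$ (nor by antisymmetry applied to that hypothesis, since both indices would have to be $\le \beta-1$). The easy fix is to reverse the order you announce: first treat $\alpha<\beta$, where the third term $\dgal{s_\alpha,s_{\beta-1}}$ has $\alpha\le\beta-1$ and is genuinely covered by induction; then handle $\alpha=\beta$ using the just-established case $\dgal{s_{\beta-1},s_\beta}$ via antisymmetry. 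With that adjustment your computation goes through exactly as you describe.

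The paper's route is different and avoids both the induction and the case splits. It reduces by antisymmetry to $\alpha\ge\beta$ (the opposite inequality to yours) and then expands the \emph{smaller} element $s_\beta=r_\beta\cdots r_1 z$, with $r_\gamma=1+w_\gamma v_\gamma$, fully in one go via the Leibniz rule. Because every $\gamma$ appearing satisfies $\gamma\le\beta\le\alpha$, each auxiliary bracket $\dgal{s_\alpha,r_\gamma}$ is computed uniformly from the single case \eqref{br:svw}, and the telescoping sum collapses directly to the claimed formula. Your approach trades this one longer direct expansion for a shorter inductive step, at the cost of the $\alpha=\beta$ versus $\alpha<\beta$ split and the subtler ordering just noted; the paper's choice of inequality direction is precisely what eliminates that bookkeeping.
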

Below it will be convenient to pass from $v_\alpha, w_\alpha$ to the following \emph{spin variables}:  
\begin{equation} \label{spinAlg}
 a_\alpha=w_\alpha\,, \qquad b_\alpha=v_\alpha(1+w_{\alpha-1}v_{\alpha-1})\ldots (1+w_{1}v_1)z\,.
\end{equation} 
If we set $s_0=z$, we can write $b_\alpha=v_\alpha s_{\alpha-1}$ for any $1\leq \alpha \leq d$. Moreover, 
\begin{equation}\label{salpha}
  s_\alpha =z+ a_1b_1+\dots + a_\alpha b_\alpha\,.
\end{equation}

The double bracket can be written in the spin variables, and the 
only brackets not already among \eqref{tadida}--\eqref{tadidu} or \eqref{tadidaZ}--\eqref{tadideZ} are gathered in the following lemma.
\begin{lem} \label{Lem:Talg1}
We have:
\begin{subequations}
 \begin{align}
\dgal{x, b_\alpha}\,=\,& \frac12 b_\alpha x\otimes e_0+\frac12 b_\alpha\otimes x\,,
\quad \dgal{z, b_\alpha}= \frac12 b_\alpha\otimes z-\frac12 b_\alpha z\otimes e_0\label{tadSpin1}\\
\dgal{a_\alpha, b_\beta}\,=\,& 
\frac12 \left(o(\alpha,\beta)-\delta_{\alpha \beta} \right) e_\infty \otimes a_\alpha b_\beta -\frac12 b_\beta a_\alpha \otimes e_0 - \delta_{\alpha \beta} e_\infty \otimes s_{\beta-1}\,,\label{tadSpin2} \\
\dgal{b_\alpha,b_\beta}\,=\,& \frac12 o(\alpha,\beta) 
\left(b_\beta \otimes b_\alpha - b_\alpha \otimes b_\beta \right)  \,.\label{tadSpin3} 
 \end{align}
\end{subequations}
\end{lem}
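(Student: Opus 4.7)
The plan is to derive each of the four identities in Lemma \ref{Lem:Talg1} directly from the definition $b_\alpha = v_\alpha s_{\alpha-1}$ given in \eqref{spinAlg}, using the two properties \eqref{2prop} of the double bracket together with the known brackets between the generators \eqref{tadida}--\eqref{tadidu}, \eqref{tadidaZ}--\eqref{tadideZ}, and the formulas for $s_\alpha$ supplied by Lemmas \ref{Lem:s} and \ref{Lem:sBis}. No further quasi-Hamiltonian machinery is needed; every step is an application of the inner Leibniz rule in the second argument and (for the last identity) the outer Leibniz rule in the first argument, the latter being a consequence of cyclic antisymmetry.

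For \eqref{tadSpin1}, I would expand
\begin{equation*}
\dgal{x, b_\alpha} \;=\; \dgal{x, v_\alpha}\,s_{\alpha-1} \;+\; v_\alpha\,\dgal{x, s_{\alpha-1}},
\end{equation*}
substituting \eqref{tadidd} for the first factor and $\dgal{x, s_{\alpha-1}} = -\dgal{s_{\alpha-1}, x}^\circ$ obtained from \eqref{br:sx} for the second. Four of the resulting six tensors involve $s_{\alpha-1}$ only through left or right multiplication and cancel pairwise; the two survivors reassemble, via $v_\alpha s_{\alpha-1} = b_\alpha$, into $\frac12 b_\alpha x \otimes e_0 + \frac12 b_\alpha \otimes x$. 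The bracket $\dgal{z, b_\alpha}$ is computed in the same way using \eqref{tadideZ} and \eqref{br:sz}; the overall sign change in the final answer reflects the sign pattern of \eqref{br:sz} relative to \eqref{br:sx}.

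For \eqref{tadSpin2}, the expansion $\dgal{a_\alpha, b_\beta} = \dgal{w_\alpha, v_\beta}s_{\beta-1} + v_\beta\dgal{w_\alpha, s_{\beta-1}}$ is treated analogously. The first term is obtained from \eqref{tadidu} by antisymmetry and splits into a diagonal piece (yielding the unique term $-\delta_{\alpha\beta}e_\infty \otimes s_{\beta-1}$, coming from $\delta_{\alpha\beta}(e_0 \otimes e_\infty)$ after right-multiplication by $s_{\beta-1}$) and an off-diagonal contribution weighted by $o(\alpha,\beta)$. The second term requires a case split according to Lemma \ref{Lem:s}: \eqref{br:svw} applies when $\alpha \leq \beta - 1$ and \eqref{br:svw2} when $\alpha > \beta - 1$. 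Handling the three subcases $\alpha < \beta$, $\alpha = \beta$, $\alpha > \beta$ separately, one finds that the $s_{\beta-1}$-tensors other than those which reconstitute $b_\beta = v_\beta s_{\beta-1}$ cancel, and the surviving off-diagonal piece carries exactly the coefficient $\frac12(o(\alpha,\beta)-\delta_{\alpha\beta})$.

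For \eqref{tadSpin3}, I apply the outer Leibniz rule in the first slot and the inner one in the second, giving
\begin{equation*}
\dgal{b_\alpha, b_\beta} = \bigl(\dgal{v_\alpha, v_\beta}s_{\beta-1} + v_\beta\dgal{v_\alpha, s_{\beta-1}}\bigr) \ast s_{\alpha-1} + v_\alpha \ast \bigl(\dgal{s_{\alpha-1}, v_\beta}s_{\beta-1} + v_\beta\dgal{s_{\alpha-1}, s_{\beta-1}}\bigr),
\end{equation*}
where $\ast$ denotes the outer bimodule action $(u\otimes v)\ast s = us\otimes v$ and $s\ast(u\otimes v)=u\otimes sv$. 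The four inner double brackets are supplied respectively by \eqref{tadidv}, Lemma \ref{Lem:s} (twice), and Lemma \ref{Lem:sBis}. The \emph{main obstacle} here is purely combinatorial: one must track how the ordering symbols $o(\alpha,\beta)$, $o(\alpha,\beta-1)$, $o(\alpha-1,\beta)$ and $o(\alpha-1,\beta-1)$ interact across the subcases $\alpha<\beta$, $\alpha=\beta$, $\alpha>\beta$, and verify that every tensor involving $s_{\alpha-1}$ or $s_{\beta-1}$ other than through $b_\alpha = v_\alpha s_{\alpha-1}$ and $b_\beta = v_\beta s_{\beta-1}$ cancels. After this bookkeeping, the result collapses to precisely $\frac12 o(\alpha,\beta)(b_\beta \otimes b_\alpha - b_\alpha \otimes b_\beta)$, as stated.
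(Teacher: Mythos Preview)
Your proposal is correct and is precisely the approach the paper has in mind: the paper's entire proof is the sentence ``Noting that Lemmas \ref{Lem:s} and \ref{Lem:sBis} also hold for the case $\alpha=0$ where $s_0=z$, the proof is easily derived using these double brackets and is omitted.'' The one small point the paper makes explicit and you leave implicit is that when $\alpha=1$ (or $\beta=1$) you need the formulas of Lemmas \ref{Lem:s} and \ref{Lem:sBis} for $s_0=z$; these are immediate from \eqref{tadidaZ}--\eqref{tadideZ}, so no extra argument is required.
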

Noting that Lemmas \ref{Lem:s} and \ref{Lem:sBis} also hold for the case $\alpha=0$ where $s_0=z$, the proof is easily derived using these double brackets and is omitted.

%

\subsection{Representation spaces} \label{ss:Repr}
Both $\Aal$ and $\Aalt$, equipped with the above $\dgal{-, -}$ and $\Phi$ provide examples of quasi-Hamiltonian algebras in the framework of \cite{VdB1}. A noncommutative analogue of quasi-Hamiltonian reduction consists in taking the quotient algebras  
\begin{equation}\label{mprep}
\Lambda^q=\Aal/\langle\Phi-q\rangle\,,\qquad \Lambda^{q, \times}=\Aalt/\langle\Phi-q\rangle
\end{equation}
for a chosen $q=q_0 e_0 + q_\infty e_\infty$, with $q_0, q_\infty\in \CC^\times$. The algebra $\Lambda^q$ is an example of a multiplicative preprojective algebra \cite{CBShaw}, while $\Lambda^{q, \times}$ is its localisation. The commutative counterpart is obtained by taking the representation spaces of $\Lambda^q$ or $\Lambda^{q, \times}$, respectively.

Recall that for an algebra $A$ and any $N\in\N$, a representation space $\Rep(A, N)$ is the affine scheme that parametrises algebra homomorphisms $\varrho: A \to \Mat_{N\times N}$. The ring of functions on $\Rep(A, N)$ is generated by the functions $a_{ij}$ for $a\in A$, $i, j=1, \ldots, N$ defined by $a_{ij}(\varrho)=\varrho(a)_{ij}$ at any point $\varrho\in \Rep(A, N)$. The functions $a_{ij}$ are linear in $a$ and satisfy the relations $(ab)_{ij}=\sum_k a_{ik}b_{kj}$. 
On $\Rep(A, N)$ we have a natural action of $\Gl_N$, induced by conjugation on $\Mat_{N\times N}$. 

To a double bracket $\dgal{-, -}$ on $A$, one associates a bracket (antisymmetric biderivation) on $\Rep(A, N)$ as follows \cite{VdB1}:
\begin{equation}\label{derr}
 \{a_{ij},b_{kl}\}=\dgal{a,b}'_{kj}\, \dgal{a,b}''_{il}\,\, . 
\end{equation}
Here we are using Sweedler notation, abbreviating an element $a=\sum_i a_i'\otimes a_i''$ in ${A}\otimes {A}$ to $a'\otimes a''$, so that $\dgal{a,b}=\dgal{a, b}'\otimes \dgal{a,b}''$. We have the following important result.

\begin{thm}
\label{Thm:Poiss}  \cite[7.8, 7.13.2]{VdB1}
Assume that $(A, \dgal{-,-}, \Phi)$ is a quasi-Hamiltonian algebra. Then $\Rep(A, N)$ is a $\Gl_N$-space with a quasi-Poisson bracket $\{-,-\}$ determined from $\dgal{-,-}$ by \eqref{derr}.
The $\Gl_N$-valued function $(\Phi_{ij})$ 
associated with $\Phi\in A$ provides a (geometric) multiplicative moment map. Therefore, $\Rep(A, N)$ (if smooth) is a Hamiltonian quasi-Poisson manifold in the sense of \cite{AMM, AKSM}. 
\end{thm}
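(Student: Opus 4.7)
The plan is to establish three things in sequence: (i) formula \eqref{derr} descends to a well-defined biderivation on the coordinate ring $\CC[\Rep(A, N)]$; (ii) this biderivation is antisymmetric and satisfies the quasi-Jacobi identity corresponding to the Cartan $3$-tensor on $\mathfrak{gl}_N$; and (iii) the matrix-valued function $(\Phi_{ij})$ provides a multiplicative moment map in the sense of \cite{AMM, AKSM}. This is essentially a Kontsevich--Rosenberg style functoriality statement, so each part should reduce to translating the algebraic data into the representation-theoretic setting.

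For (i), the nontrivial compatibility to check is the product relation $(ab)_{ij} = \sum_m a_{im}\, b_{mj}$. Expanding $\{(ab)_{ij}, c_{kl}\}$ via \eqref{derr} and using $\dgal{ab, c} = \dgal{a, c}\, b + a\, \dgal{b, c}$---which follows from the Leibniz axiom in the second slot of \eqref{2prop} combined with the antisymmetry relation---yields $\sum_m \bigl(\{a_{im}, c_{kl}\}\, b_{mj} + a_{im}\, \{b_{mj}, c_{kl}\}\bigr)$ after matching Sweedler indices. Antisymmetry is then immediate: the swap $(u \otimes v) \mapsto (v \otimes u)$ interchanges the index pairs $(k, j) \leftrightarrow (i, l)$ in \eqref{derr}, matching the first part of \eqref{2prop}.

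The main obstacle is part (ii), the quasi-Jacobi identity. One must show that the Jacobiator $\{a_{ij}, \{b_{kl}, c_{mn}\}\} + \text{(cyc.)}$ equals the value on $(a_{ij}, b_{kl}, c_{mn})$ of the fundamental trivector attached to the Cartan $3$-tensor. I would iterate \eqref{derr} to obtain a four-index tensor built from repeated double brackets, rewrite it in terms of the triple bracket derived from $\dgal{-,-}$, and then invoke the assumed quasi-Poisson identity for $\dgal{-,-}$ from \cite{VdB1}. Its defect from the ordinary Jacobi identity is, by design, the contribution that recovers the Cartan trivector after contracting the inner index pairings. This verification is purely combinatorial but lengthy; it is organized most cleanly by first reducing to the case where $a, b, c$ are generators of $A$, using that both sides are biderivations in each slot.

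Finally, for (iii), I apply \eqref{derr} to both sides of \eqref{Phim}. The right-hand side produces a closed formula for $\{\Phi_{kl}, a_{pq}\}$ in terms of matrix units together with products of $\Phi$ and $a$. Matching this against the definition of a multiplicative moment map in \cite{AKSM}---which characterizes the bracket of each matrix entry of $\Phi$ as generating the infinitesimal $\Gl_N$-action twisted by conjugation with $\Phi$---identifies $(\Phi_{ij})$ as the required moment map. Smoothness of $\Rep(A, N)$ then upgrades the algebraic quasi-Poisson structure into a Hamiltonian quasi-Poisson manifold.
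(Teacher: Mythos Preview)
The paper does not prove this theorem at all: it is stated with a citation to \cite[7.8, 7.13.2]{VdB1} and used as a black box. So there is no ``paper's own proof'' to compare against; your proposal is instead a sketch of the argument that Van den Bergh carries out in the cited reference.

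Your outline is broadly correct and follows the same architecture as Van den Bergh's original proof: well-definedness of the induced bracket, the quasi-Jacobi identity via the associated triple bracket, and the moment map condition from \eqref{Phim}. One point of imprecision: in part (i) you write $\dgal{ab, c} = \dgal{a, c}\, b + a\, \dgal{b, c}$, but the derivation property in the \emph{first} argument is with respect to the \emph{inner} bimodule structure on $A \otimes A$, i.e.\ $\dgal{ab, c} = \dgal{a, c} \ast b + a \ast \dgal{b, c}$, meaning $\dgal{a,c}'\,b \otimes \dgal{a,c}'' + \dgal{b,c}' \otimes a\,\dgal{b,c}''$ (cf.\ the paper's Appendix~\ref{Ann:Brack}). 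With the outer bimodule action your index-matching in (i) would fail; with the inner action it goes through exactly as you describe. The remainder of your sketch---particularly the reduction of the Jacobiator to the triple-bracket axiom and the translation of \eqref{Phim} into the geometric moment map condition---is the standard route and is accurate at the level of detail you give.
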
 
With suitable modifications, this result can be applied to quivers, see \cite[Proposition 1.7]{VdB1}. An additional feature of that case is that representations are sums of vector spaces attached to the vertices, and the arrows are represented by linear maps between corresponding spaces.    
For example, for the quiver from \ref{qui}, a representation of $\CC\bar{Q}$ consists of a vector space $\mathcal V=\mathcal V_0\oplus\mathcal V_\infty$ together with linear maps $X, Y: \mathcal V_0\to\mathcal V_0$, $V_\alpha: \mathcal V_0\to \mathcal V_\infty$,  $W_\alpha: \mathcal V_\infty\to \mathcal V_0$ (the zero paths $e_0, e_\infty$ are represented by the identity maps on the corresponding spaces). 
The dimension of a representation is a tuple $(\dim\VV_0, \dim\VV_\infty)$. For $\aalpha\in\N^2$, we write $\Rep(\CC\bar{Q}, \aalpha)$ for the space of representations of dimension $\aalpha$. Our main interest will be in the case when $\dim\VV_\infty=1$, so let us   
consider the spaces  $\Rep(\CC\bar{Q}, \aalpha)$ where $\aalpha=(n, 1)$ with $n\ge 1$. By choosing bases in $\VV_0, \VV_\infty$, we identify points of $\Rep(\CC\bar{Q}, \aalpha)$ with collections of matrices $(X,Y,V_\alpha,W_\alpha)$, 
\begin{equation}\label{pts}
  X,Y\in \Mat_{n\times n},\quad V_\alpha\in \Mat_{1\times n},\quad W_\alpha\in \Mat_{n\times 1}\,, \quad \alpha=1, \dots, d\,.
\end{equation}
Isomorphic representations are related by a change of basis, 
\begin{equation}\label{gact0}
g. (X,Y,V_\alpha,W_\alpha)=(g_0Xg_0^{-1},g_0Yg_0^{-1}, g_\infty V_\alpha g_0^{-1}, g_0W_\alpha g_\infty^{-1})\,,\quad g=(g_0, g_\infty)\in\Gl_n\times \Gl_1\,. 
\end{equation}
Thus, $\Rep(\CC\bar{Q}, \aalpha)$ is isomorphic to an affine space of dimension ${2n^2+2nd}$ with the above action of $\Gl_n\times \Gl_1$. 
The double bracket on $\CC\bar{Q}$ induces a quasi-Poison bracket on the representation spaces $\Rep(\CC\bar{Q}, \aalpha)$. It can be calculated by applying the formula \eqref{derr}. In doing so one should think of the linear maps $X, Y, V_\alpha, W_\alpha$ as being represented by block matrices acting on $\VV=\VV_0\oplus \VV_\infty$, and omit trivial brackets that involve zero matrix entries. For example, applying this to \eqref{tadida} gives 
\begin{equation*}
 \br{X_{ij}, X_{kl}} = \frac12 \left( (X^2)_{kj} \delta_{il} - \delta_{kj} (X^2)_{il} \right)\,, \quad 
\br{Y_{ij}, Y_{kl}}=\frac12 \left(\delta_{kj} (Y^2)_{il} -(Y^2)_{kj} \delta_{il}\right)\,.
\end{equation*}
This equips $\Rep(\CC\bar{Q}, \aalpha)$ with a quasi-Poisson bracket. Setting $Z=Y+X^{-1}$, we also derive formulas in \eqref{qbr1}--\eqref{qbrLast} in a similar way. 

Fixing the value of the moment map $\Phi$ to $q_0e_0+q_\infty e_\infty$, we get from \eqref{t1}--\eqref{t2} the following equations:
 \begin{subequations}
       \begin{align}
&(\Id_{n}+XY)(\Id_{n}+YX)^{-1}(\Id_n+W_1 V_1)^{-1}\ldots (\Id_n+W_d V_d)^{-1}=q_0\Id_n\,, \label{Eq:CondTadInv} \\
& (1+V_1 W_1)\ldots (1+V_d W_d)=q_\infty\,. \label{Eq:CondRq}
       \end{align}
  \end{subequations} 
By taking determinants, we obtain $(q_0)^nq_\infty=1$, cf. \cite[Lemma 1.5]{CBShaw}. To simplify the notation, set $q_0=q$, $q_\infty=q^{-n}$, then \eqref{Eq:CondRq} is automatically implied by \eqref{Eq:CondTadInv} and so can be omitted. Thus the level set of the moment map is described by the equation
\begin{equation}\label{lq}
(\Id_{n}+XY)(\Id_{n}+YX)^{-1}=q(\Id_n+W_d V_d)\ldots (\Id_n+W_1 V_1)\,,
\end{equation}
where all the factors are assumed to be invertible. We denote this variety as $\MM$. Clearly, $\MM$ is a representation space for the algebra $\Lambda^q$ \eqref{mprep}, seen as a subvariety of $\Rep(\Aal, \aalpha)$.   
Similarly, representation spaces for $\Lambda^{q, \times}$ are given by the equation  
\begin{equation}\label{lqt}
XZX^{-1}Z^{-1}=q(\Id_n+W_d V_d)\ldots (\Id_n+W_1 V_1)\,,
\end{equation}
where $Z=Y+X^{-1}$. This is precisely the variety $\MMt$ from Section \ref{main results}. 

The group $\Gl_n\times \Gl_1$ still acts on $\MM$, $\MMt$ by \eqref{gact0}. Note that the subgroup $\CC^\times$ of scalar matrices of the form $(\lambda\Id_n, \lambda)$ acts trivially, and we can identify the action of
$(\Gl_n\times \Gl_1)\,/\,\CC^\times$ with the $\Gl_n$-action given by  
\begin{equation}\label{gact}
g. (X,Y,V_\alpha,W_\alpha)=(gXg^{-1},gYg^{-1}, V_\alpha g^{-1}, gW_\alpha)\,,\quad g\in\Gl_n\,. 
\end{equation}
We can now introduce the following \emph{Calogero--Moser spaces}: 
\begin{equation*}
\Cnqd=\MM/\!/\Gl_n\,,\quad \Cnqdt=\MMt/\!/\Gl_n\,.
\end{equation*}
These are spin versions of the spaces from \cite{FockRosly, Oblomkov, CF}, which in their turn are $q$-analogues of the Calogero-Moser spaces $\mathcal C_n$ from \cite{W1}. 
For $q$ not a root of unity, the $\Gl_n$-action on $\MM, \MMt$ is free and $\Cnqd, \Cnqdt$ are smooth varieties of dimension $2nd$, cf. \cite[Theorem 2.8, Proposition 2.9]{CF}, 

By Theorem \ref{Thm:Poiss}, the varieties $\Rep(\Aal, \aalpha)$ and $\Rep(\Aalt, \aalpha)$ are quasi-Hamiltonian spaces, therefore, $\Cnqd$, $\Cnqdt$ can be seen as quasi-Hamiltonian quotients and so they are Poisson manifolds. By \cite[Sections 8.2-8.3]{VdB2}, the resulting Poisson bracket is non-degenerate, thus $\Cnqd$, $\Cnqdt$ are, in fact, holomorphic symplectic manifolds. In the next section we will explain their link to the trigonometric RS system.

\medskip

We finish this section by a few useful facts about the quasi-Poisson brackets on $\Rep(\CC\bar{Q}, \aalpha)$.  First, according to \cite{VdB1} with any double bracket on an algebra $A$ one associates the bracket $\{-, -\}\,:\, A\times A\to A$ obtained by composing $\dgal{-,-}$ with the multiplication $m: A\otimes A\to A$, that is, 
\begin{equation}\label{sbra}
\{a,b\}=m\circ \dgal{a, b}=\dgal{a,b}'\dgal{a,b}''\,.  
\end{equation}
By \cite[Proposition 5.1.2]{VdB1}, if the double bracket is quasi-Poisson then the bracket \eqref{sbra} induces a Lie bracket on $A/[A, A]$. Furthermore, if $A$ admits a quasi-Hamiltonian structure, then the bracket \eqref{sbra} induces a Lie bracket on $\Lambda^q/[\Lambda^q, \Lambda^q]$, see \cite[Proposition 5.1.5]{VdB1}.  

Next, for any $a\in A$, define $\tr a=\sum_{i=1}^N a_{ii}$; this is a $\Gl_N$-invariant function on $\Rep(A, N)$. Then
\begin{subequations}
       \begin{align}
\{\tr a, b_{kl}\} &=\br{a, b}_{kl}\,,\label{relBrDyn} 
\\
\{\tr a, \tr b\} &=\tr \{a, b\}\,.\label{relInv}
       \end{align}
  \end{subequations} 
Here on the left we use the bracket \eqref{derr} on $\Rep (A, N)$, while $\{a,b\}$ on the right stands for the bracket \eqref{sbra} on $A$. Both formulas are easy corollaries of \eqref{derr}, cf. \cite[Proposition 7.7.3]{VdB1}.  

\begin{lem}\label{invyz}
We have $\br{y^k, y^l}=\br{z^k, z^l}=0$ and $\br{\tr Y^k, \tr Y^l}=\br{\tr Z^k, \tr Z^l}=0$ for all $k, l$.
\end{lem}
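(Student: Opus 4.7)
The statement has two parts: an identity in the algebra $\Aalt$ (or $\Aal$) of the form $\{a,b\} := m \circ \dgal{a,b} = 0$ for $(a,b) = (y^k, y^l)$ or $(z^k, z^l)$, and an identity for the induced bracket of trace functions on the representation space. The second part will follow from the first via the formula \eqref{relInv}. So the real work is to establish $\{y^k,y^l\} = 0$ and $\{z^k,z^l\} = 0$ in the algebra.

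First I would treat the case $k=1$. Using the derivation property of the double bracket in its second argument,
\[
\dgal{y, y^l} = \sum_{i=0}^{l-1} y^i \,\dgal{y,y}\, y^{l-1-i}
= \tfrac12 \sum_{i=0}^{l-1}\bigl(y^i \otimes y^{l+1-i} - y^{i+2} \otimes y^{l-1-i}\bigr).
\]
Relabelling in the second sum ($j=i+2$) and comparing with the first shows that almost everything cancels and only the boundary terms survive:
\[
\dgal{y,y^l} = \tfrac12\bigl(e_0 \otimes y^{l+1} + y \otimes y^l - y^l \otimes y - y^{l+1} \otimes e_0\bigr).
\]
Applying the multiplication $m$ then gives $\{y,y^l\} = \tfrac12(y^{l+1}+y^{l+1}-y^{l+1}-y^{l+1}) = 0$, and the same direct calculation, together with the cyclic antisymmetry $\dgal{a,b}=-\dgal{b,a}^\circ$, yields $\{y^k,y\} = 0$.

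Next I would bootstrap to all $(k,l)$ by induction. Applying $m$ to the identity $\dgal{a,bc}=\dgal{a,b}c+b\dgal{a,c}$ shows that the single bracket $\{a,-\}$ is an ordinary derivation in its second argument. Hence
\[
\{y^k, y^{l+1}\} = \{y^k, y\}\,y^l + y\,\{y^k, y^l\} = y\,\{y^k,y^l\},
\]
so induction on $l$, starting from $\{y^k, y\}=0$, gives $\{y^k,y^l\}=0$ for all $k,l\geq 0$. The identical argument works for $z$ verbatim, because the double bracket \eqref{tadidaZ} has exactly the same shape as \eqref{tadida}. Finally, applying \eqref{relInv} yields $\{\tr Y^k,\tr Y^l\}=\tr\{y^k,y^l\}=0$, and likewise for $Z$.

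The argument is essentially routine once one notices the telescoping cancellation after applying $m$; no obstacle is anticipated. The only subtle point is to distinguish the single bracket $\{-,-\}$ on the algebra from the bracket on $\Rep$, and to use derivation-in-the-second-slot (which holds for the single bracket on $A$) rather than any Leibniz rule in the first slot (which in general fails for $\{-,-\}$ without passing to $A/[A,A]$).
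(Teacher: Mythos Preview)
Your proof is correct. The paper's argument is essentially the same idea, but packaged differently: it observes that $\dgal{y,y}=(e_0\otimes e_0)y^2 - y^2(e_0\otimes e_0)$ and then invokes \cite[Lemma~A.3]{CF}, which encapsulates exactly the telescoping cancellation you carry out by hand, followed by the same appeal to \eqref{relInv}. Your version is more self-contained---you do the $k=1$ telescoping explicitly and then bootstrap via the derivation property of $\{a,-\}$ in the second slot, whereas the cited lemma handles general $(k,l)$ in one stroke---but the underlying mechanism is identical.
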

To prove this, note that $\dgal{y,y}=(e_0\otimes e_0)y^2-y^2(e_0\otimes e_0)$ by \eqref{tadida}. We can then use \cite[Lemma A.3]{CF} with $\mathcal E=\{e_0\otimes e_0\}$ to conclude that $\br{y^k, y^l}=0$. By \eqref{relInv}, this implies $\br{\tr Y^k, \tr Y^l}=0$. For $z$ the proof is the same. 
\qed


\section{Local structure} \label{Slocal}

We continue with the notation of Section \ref{SqHpicture}. The local coordinates on the Calogero--Moser space $\Cnqdt$ have already been introduced in Section \ref{main results}. Below we recall their definition and then calculate the Poisson bracket in these coordinates.   

\subsection{Local coordinates}   \label{ss:TadCoord} 

Let us rewrite the definition of the variety $\MMt$ using the spin variables \eqref{spinAlg}. They are represented by $\As\in \Mat_{n\times 1}$ and $\Bs\in \Mat_{1\times n}$ given by
  \begin{equation} \label{AsCs}
(\As_{\alpha})_i=\left[W_\alpha\right]_i \, ,\quad
(\Bs_\alpha)_j=\left[V_\alpha(\Id_n+W_{\alpha-1}V_{\alpha-1})\ldots (\Id_n+W_1V_1)Z\right]_j\,.
  \end{equation}
By \eqref{salpha0} and \eqref{salpha}, we have 
\begin{equation}\label{salpharep}
(\Id_n+W_\alpha V_\alpha)
\dots (\Id_n+W_{1}V_1)Z=Z+ A_1B_1+\dots + A_\alpha B_\alpha\,. 
\end{equation}
Thus, the equation \eqref{lqt} becomes 
\begin{equation}\label{EqZspTad}
q^{-1}XZX^{-1}=Z+\sum_{\alpha=1}^d A_\alpha B_\alpha\,.
\end{equation}
Therefore, the variety $\MMt$ is formed by the tuples $(X,Z,\As_\alpha,\Bs_\alpha)$ satisfying \eqref{EqZspTad} together with the requirement of invertibility of $X, Z$ and of the expressions \eqref{salpharep}. That is, we assume in \eqref{EqZspTad} that  
\begin{equation}\label{invert}
X, Z\in\Gl_n\,,\quad Z+ A_1B_1+\dots + A_\alpha B_\alpha\in\Gl_n\qquad (\alpha=1, \dots, d)\,.
\end{equation} 
The Calogero--Moser space $\Cnqdt=\MMt/\!/\Gl_n$ is obtained by taking quotient by the action 
\begin{equation}\label{gactsp}
g. (X,Z,\As_\alpha, \Bs_\alpha)=(gXg^{-1},gZg^{-1}, g\As_\alpha, \Bs_\alpha g^{-1})\,,\quad g\in\Gl_n\,. 
\end{equation}

Note that if we set $A_\alpha=B_\alpha=0$ for $\alpha>1$ then we are effectively in the non-spin case $d=1$. 
Similarly, by truncating $A_\alpha, B_\alpha$ for $\alpha>2$ gives $\mathcal M_{n, 2, q}^\times$, and so on. 
Thus, we have a chain of inclusions $\mathcal M_{n, 1, q}^\times\subset \mathcal M_{n, 2, q}^\times\subset \ldots \subset \mathcal M_{n, d, q}^\times$ and, after taking $\Gl_n$-quotients, $\mathcal C_{n, 1, q}^\times\subset \mathcal C_{n, 2, q}^\times\subset \ldots \subset \mathcal C_{n, d, q}^\times$. 
The variety $\mathcal C_{n, 1, q}^\times$ is well-known, see e.g. \cite{FockRosly,Oblomkov,CF}; in particular, it is shown in \cite{Oblomkov} that it 
is connected. For $d>1$ it is not known whether $\Cnqdt$ is connected (it is believed to be true). Let $\mathcal C^\times\subset \Cnqdt$ denote the unique irreducible component containing $\mathcal C_{n, 1, q}^\times$. To see its link with the spin trigonometric RS system, we are going to introduce local coordinates on $\mathcal C^\times$.

Let $\h$ denote the phase space of the trigonometric RS system: this is an affine space of dimension $2nd$ with coordinates $(x_i,\aaa^\alpha_i,\ccc^\alpha_i)$ subject to $\sum_{\alpha}\aaa^\alpha_i=1$ for $i=1, \dots, n$. Define a mapping
\begin{equation}
\xi\,:\, (x_i,\aaa^\alpha_i,\ccc^\alpha_i)\mapsto (X, Z, \As_\alpha,\Bs_\alpha)\,, 
\end{equation}
which associates to a point in $\h$ the matrices $X, Z\in \Mat_{n\times n}$, $A_\alpha\in \Mat_{n\times 1}$, $B_\alpha\in \Mat_{1\times n}$ given by    
\begin{equation} \label{Tadiffeo}
  X_{ij}=\delta_{ij}x_i\,, \quad Z_{ij}=\frac{q\,f_{ij}}{x_ix_j^{-1}-q}\,, \quad  (\As_\alpha)_i=\aaa^\alpha_i\,,\quad  (\Bs_{\alpha })_i=\ccc_i^\alpha \,,
\end{equation}
with $f_{ij}=\sum_\alpha \aaa^\alpha_i \ccc^\alpha_j$. 
Now define $\hreg\subset \h$ to be the open subset given by the conditions  $x_i\ne 0$, 
$x_i\ne qx_j$, $x_i\ne x_j$ for $i\ne j$, together with the invertibility conditions \eqref{invert}. Note that on $\hreg$ the flow \eqref{Trigq}--\eqref{Trigc} and the antisymmetric bracket \eqref{Eqh1}--\eqref{Eqh4} are both well defined. A simple calculation confirms that $X, Z, A_\alpha, B_\alpha$ satisfy the equation \eqref{EqZspTad}, thus we have a map $\xi: \hreg\to \MMt$. To show that $\hreg$ is non-empty, we can set $a^1_i=1$, $b^1_i=\sigma_i\ne 0$ and $a^\alpha_i=b^\alpha_i=0$ for all $i$ and $\alpha>1$. The variables $x_i, \sigma_i$ can be viewed as local coordinates on the non-spin variety $\mathcal C_{n, 1, q}$ sitting inside $\Cnqd$. In these local coordinates the matrix $Z$ is given by $Z_{ij}=\frac{qx_j\,\sigma_{j}}{x_i-qx_j}$. It is equivalent to the Lax matrix in the non-spin case \cite{R88}, and its determinant can be easily evaluated using Cauchy formula from which it follows that $\det Z\ne 0$. The moment map equation \eqref{EqZspTad} reduces to $q^{-1}XZX^{-1}=Z+A_1 B_1$, so the invertibility of $Z+A_1B_1$ is automatic. It follows that the conditions \eqref{invert} are satisfied in this case, and so $\hreg\ne\emptyset$. 

Note that on the space $\hreg$ there is a natural $S_n$-action given by $\tau . (x_i,\aaa^\alpha_i,\ccc^\alpha_i)=(x_{\tau^{-1}(i)},\aaa^\alpha_{\tau^{-1}(i)},\ccc^\alpha_{\tau^{-1}(i)})$ for $\tau \in S_n$. Under $\xi$, this corresponds to the action \eqref{gactsp} by the corresponding permutation matrix. Therefore, we have a well-defined map $\xi: \hreg/S_n\to \mathcal C^\times$, and it is easy to see that it is injective. Since $\hreg$ and $\mathcal C^\times$ are of the same dimension, $\xi$ has dense image, 
so this gives local coordinates on 
$\mathcal C^\times$.

Now, both $\mathcal C^\times$ and $\hreg/S_n$ are equipped with a bracket: indeed, $\Cnqdt$ is a Poisson manifold, while a bracket on $\hreg/S_n$ is induced by the $S_n$-invariant bracket  
\eqref{Eqh1}--\eqref{Eqh4}.  

\begin{prop} \label{Pr:TadPoi} The map $\xi: \hreg/S_n\to \mathcal C^\times$ intertwines the brackets on these two spaces, that is, $\xi^\ast\br{f, g}=\br{\xi^* f , \xi^* g}$ for any two functions on $\mathcal C^\times$. Hence the bracket \eqref{Eqh1}--\eqref{Eqh4}  on $\hreg/S_n$ is Poisson, and $\xi$ is a Poisson map.
\end{prop}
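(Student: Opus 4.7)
The plan is to verify the intertwining relation $\xi^\ast\br{f,g}=\br{\xi^\ast f,\xi^\ast g}$ on a set of functions that generate the function ring locally. Since $(x_i,a_i^\alpha,b_i^\alpha)$ form local coordinates on $\hreg/S_n$ and both sides are biderivations in each argument, it is enough to establish the identity for all pairs of these coordinate functions. In other words, one must show that the Poisson bracket on $\mathcal C^\times$, read off in the local coordinates provided by $\xi$, is given precisely by \eqref{Eqh1}--\eqref{Eqh4}. Once this is done, the bracket \eqref{Eqh1}--\eqref{Eqh4} is automatically Poisson, and $\xi$ is a Poisson isomorphism onto its image.

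First I would set up the reduction. Following the discussion in Section \ref{ss:TadCoord}, $\xi(\hreg)$ is a transversal slice to the free $\Gl_n$-action on $\MMt$, and generic $\Gl_n$-orbits meet it in an $S_n$-coset, so a function $f$ on $\mathcal C^\times$ lifts (near the slice) to a uniquely determined $\Gl_n$-invariant function $\tilde f$ on $\MMt$, with $\xi^\ast f$ the restriction of $\tilde f$ along \eqref{Tadiffeo}. By Van den Bergh's quasi-Hamiltonian reduction theorem \cite[Theorem 7.8]{VdB1}, the Poisson bracket of $f,g$ on $\mathcal C^\times$ is computed by taking the quasi-Poisson bracket $\br{\tilde f,\tilde g}$ on $\MMt$ using \eqref{qbr1}--\eqref{qbrLast} and descending to the quotient.

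The main computation is then the explicit evaluation of these brackets. Rather than extending $x_i,a_i^\alpha,b_i^\alpha$ to $\Gl_n$-invariants by hand, the more economical route is to work at the algebra level in $\Aalt$: using the double brackets from \ref{ss:dbr} together with Lemmas \ref{Lem:s}--\ref{Lem:Talg1}, one computes all the double brackets involving $x^{\pm 1}, z, a_\alpha, b_\alpha$; the rule \eqref{derr} then produces the brackets between their matrix entries on $\Rep(\Aalt,\aalpha)$. Restricting to the slice, where $X=\diag(x_i)$, $(W_\alpha)_i=a_i^\alpha$, $(B_\alpha)_i=b_i^\alpha$, and using the explicit form of $Z$ from \eqref{Tadiffeo}, one reads off the brackets of the local coordinates and matches them against \eqref{Eqh1}--\eqref{Eqh4} term by term.

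The main obstacle is the combinatorial bookkeeping. Since $b_\alpha=v_\alpha(1+w_{\alpha-1}v_{\alpha-1})\cdots(1+w_1v_1)z$ couples $z$ with the factors $(1+w_\beta v_\beta)$, the derivation property of $\dgal{-,-}$ unfolds each pairwise bracket into a lengthy sum of contributions, and the asymmetric dependence on the ordering function $o(\alpha,\beta)$ visible in \eqref{Eqh2}--\eqref{Eqh4} reflects the total ordering of arrows fixed in \ref{ss:dbr}. Handling all these terms cleanly, and checking that the outcome agrees with \eqref{Eqh1}--\eqref{Eqh4} precisely, is carried out in Appendix \ref{Ann:tadpole}; granted these calculations, Proposition \ref{Pr:TadPoi} follows.
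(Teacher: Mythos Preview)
Your overall strategy---reduce to a generating set and compute the reduced bracket via the quasi-Poisson bracket of $\Gl_n$-invariant lifts---is correct, and your first two paragraphs set this up properly. However, the route you propose in the third paragraph has a genuine gap. You suggest computing the quasi-Poisson brackets of the \emph{matrix entries} $X_{ij},(W_\alpha)_i,(B_\alpha)_j$ via \eqref{derr} and then restricting to the slice. But matrix entries are not $\Gl_n$-invariant, and the reduced Poisson bracket on $\mathcal C^\times$ is \emph{not} obtained by restricting the quasi-Poisson bracket of arbitrary (non-invariant) extensions to a cross-section: the discrepancy between, say, the coordinate function $X_{ii}$ and the genuinely invariant function ``$i$-th eigenvalue of $X$'' has nonzero derivative along the $\Gl_n$-orbit directions on the slice, and this contributes to the bracket. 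Without a Dirac-type correction or a separate argument that such contributions vanish for this particular slice, the conclusion does not follow.

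The paper avoids this issue by working throughout with \emph{genuinely $\Gl_n$-invariant} functions, namely the traces $f_k=\tr X^k$ and $g^k_{\alpha\beta}=\tr(A_\alpha B_\beta X^k)=B_\beta X^k A_\alpha$. These recover local coordinates on $\hreg$ via \eqref{xifg}, and their Poisson brackets on $\mathcal C^\times$ are computed directly from the Lie bracket on $\Aalt/[\Aalt,\Aalt]$ using \eqref{relInv} (Lemmas~\ref{Lodayxac} and \ref{LemPoisson}). The calculation in \ref{Ann:C1} then checks, by summing over spin indices in stages, that $\xi^\ast\{g^k_{\gamma\epsilon},g^l_{\alpha\beta}\}$ etc.\ agree with what \eqref{Eqh1}--\eqref{Eqh4} predict for $\{\xi^\ast g^k_{\gamma\epsilon},\xi^\ast g^l_{\alpha\beta}\}$. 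So the appendix you invoke in your last paragraph actually implements the invariant-trace approach, not the slice-restriction of matrix-entry brackets that you describe.
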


\begin{proof}
Consider the functions
\begin{equation}\label{xifg0}
  f_k:= \tr(X^k)\,, \quad g^{k}_{\alpha \beta}:=\tr(\As_\alpha  \Bs_\beta X^k)=\Bs_\beta X^k \As_\alpha\,, \quad k\in \N,\,\alpha,\beta=1,\ldots,d\,.
\end{equation}
Using \eqref{Tadiffeo}, we obtain  
\begin{equation} \label{xifg}
  \xi^\ast f_k:= \sum_i x_i^k\,, \quad \xi^\ast g^{k}_{\alpha \beta}=\sum_{i}\aaa_i^\alpha \ccc_i^\beta x_i^k
\,, \quad \sum_\alpha \xi^\ast g^{k}_{\alpha \beta}=\sum_{i} \ccc_i^\beta x_i^k\,.
\end{equation}
A local coordinate system near every point in $\hreg$ can be extracted from these functions. Thus, the proposition only needs to be checked  for the functions \eqref{xifg0}. This is done in \ref{Ann:C1}. 
\end{proof}

\begin{prop}\label{comp}
Let 
$\mathcal M^\times\subset \MMt$ denote the unique irreducible component containing  
$\mathcal M_{n, 1, q}^\times$. Then the maps $p_x, p_z: \mathcal M^\times\to \Mat_{n\times n}$ defined by $p_x(X, Z, V_\alpha, W_\alpha)=X$ and $p_z(X, Z, V_\alpha, W_\alpha)=Z$ have dense image.
As a corollary, at a generic point of $\mathcal M^\times$ both $X$ and $Z$ have simple spectrum.   
\end{prop}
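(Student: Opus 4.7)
The strategy is to treat $p_x$ and $p_z$ separately, using the $\Gl_n$-equivariance of both maps (with $\Gl_n$ acting by conjugation on $\Mat_{n\times n}$). A preliminary observation is that $\xi(\hreg)\subseteq\mathcal M^\times$: since $\mathcal C^\times$ is by definition an irreducible component of $\Cnqdt$ and the free $\Gl_n$-action realises $\MMt\to\Cnqdt$ as a principal bundle on each component, $\mathcal M^\times$ coincides with the preimage of $\mathcal C^\times$, into which $\xi$ is constructed to map.

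For $p_x$ the argument is direct: on $\xi(\hreg)$ the matrix $X=\diag(x_1,\ldots,x_n)$ has arbitrary distinct non-zero entries with $x_i\neq qx_j$, so $p_x(\mathcal M^\times)$ is $\Gl_n$-stable and contains the orbit of every such diagonal matrix. This orbit union is the locus of invertible matrices with simple spectrum, an open dense subset of $\Mat_{n\times n}$. Hence $p_x$ is dominant.

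For $p_z$, which is the main obstacle because the expression \eqref{Tadiffeo} for $Z$ on $\xi(\hreg)$ is not visibly as flexible as that for $X$, the argument is indirect. Let $\pi\colon\Mat_{n\times n}\to\CC^n$ be the map sending a matrix to the elementary symmetric functions of its eigenvalues; this is the categorical quotient by conjugation, and its fibre over any polynomial with $n$ distinct roots is a single $\Gl_n$-conjugacy class. Algebraic independence of $\tr Z,\tr Z^2,\ldots,\tr Z^n$ on $\mathcal M^\times$, already used in Theorem~\ref{mainthm2}, means $\pi\circ p_z$ is dominant, so its image contains a dense open $U\subseteq\CC^n$ parametrising polynomials with simple roots. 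By $\Gl_n$-equivariance of $p_z$ and $\Gl_n$-invariance of $\mathcal M^\times$, for each $p\in U$ the image $p_z(\mathcal M^\times)$ either misses $\pi^{-1}(p)$ or contains it entirely, and the second alternative must occur for all $p\in U$. Therefore $p_z(\mathcal M^\times)\supseteq\pi^{-1}(U)$, a dense open subset of $\Mat_{n\times n}$, proving density.

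The corollary follows at once: the loci in $\mathcal M^\times$ where $X$ (respectively $Z$) has simple spectrum are Zariski-open and non-empty by the two constructions above, hence dense by irreducibility of $\mathcal M^\times$, and so is their intersection.
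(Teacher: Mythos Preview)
Your argument for $p_x$ is fine and matches the paper's. The problem is the argument for $p_z$: you invoke the algebraic independence of $\tr Z,\ldots,\tr Z^n$ on $\mathcal M^\times$ as something ``already used in Theorem~\ref{mainthm2}'', but if you look at how that theorem is proved (Section~\ref{STadIS}), the independence of the $h_i=\tr Z^i$ is deduced \emph{from} Proposition~\ref{comp}, not established independently. So as written your proof is circular. The rest of your $p_z$ argument (dominance of $\pi\circ p_z$ implies the image contains a dense open $U\subset\CC^n$, then $\Gl_n$-equivariance fills in the full regular semisimple orbits over $U$) is sound, but it rests on an input you have not supplied.

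The paper avoids this by exploiting the near-symmetry of the moment map equation $XZX^{-1}Z^{-1}=q(\Id_n+W_dV_d)\cdots(\Id_n+W_1V_1)$ under swapping $X\leftrightarrow Z$: one can equally well build local coordinates on $\mathcal M_{n,1,q}^\times$ in which $Z$ is diagonal with generic distinct entries. The key point is then that $\mathcal M_{n,1,q}^\times$ is connected (via Oblomkov's result that $\mathcal C_{n,1,q}^\times$ is connected), so this second chart lives on the same irreducible variety as the first, and $p_z$ restricted to $\mathcal M_{n,1,q}^\times\subset\mathcal M^\times$ already has dense image. If you want to salvage your own route, you would need to verify directly (say on the $d=1$ slice, using the explicit Cauchy-type form of $Z$) that the functions $\tr Z^k$ are independent---which is doable but is essentially the same work packaged differently.
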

To prove this, we may restrict $p_x, p_z$ on to $\mathcal M_{n, 1, q}^\times$. In our discussion of the map $\xi$ above we have seen that $X$ can be chosen generic diagonal, and so the restriction of $p_x$ has dense image. 
Switching the roles of $X, Z$ in the construction of the local coordinates on $\mathcal M_{n, 1, q}^\times$, we conclude that $Z$ also takes generic values (here we use that this space is connected). Hence, both $p_x, p_z$ have dense image when restricted onto $\mathcal M_{n, 1, q}^\times\subset \mathcal M^\times$. \qed

\medskip

 In \ref{Ann:C2} the following result is proved.

\begin{prop}
 \label{ggFinal} 
The elements $x_i, f_{ij}$ generate a Poisson subalgebra of $\CC[\h]$ described by \eqref{Eq:qAF}-\eqref{Eq:ffAF}. We also have  
 \begin{subequations}
       \begin{align}
\br{x_i,\tr Z}=&\frac{q}{(1-q)} x_i f_{ii}\,,\quad \label{trz1}\\
\br{\aaa_i^\gamma,\tr Z}=&-\frac{q}{2(1-q)} \sum_{k\neq i} 
V_{ik}(\aaa_i^\gamma-\aaa_k^\gamma)f_{ik} \,, \label{trz2}\\ 
\br{\ccc_j^\epsilon,\tr Z}=&\frac{q}{2(1-q)} \sum_{k\neq j} 
(V_{jk}\ccc_j^\epsilon f_{jk}-V_{kj}\ccc_k^\epsilon f_{kj}) \,,\label{trz3}
       \end{align}
  \end{subequations}
where $V_{ik}$ is defined by \eqref{EqPotV}.
\end{prop}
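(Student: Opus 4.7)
The plan is to deduce all assertions by direct application of the Leibniz rule for $\br{-,-}$ combined with the local bracket formulas \eqref{Eqh1}--\eqref{Eqh4} of Proposition \ref{brbr}, using throughout the constraint $\sum_\alpha a_i^\alpha=1$.

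The brackets \eqref{Eq:qAF} follow at once from \eqref{Eqh1}: $\br{x_i,x_j}=0$ is read off directly, and expanding $f_{jk}=\sum_\alpha a_j^\alpha b_k^\alpha$ by Leibniz together with the third formula of \eqref{Eqh1} gives $\br{x_i,f_{jk}}=\delta_{ik}x_i f_{jk}$. As a warm-up for the cancellations to come, I would first verify that $\br{a_i^\gamma,\sum_\beta a_j^\beta}=0$. Summing \eqref{Eqh2} over $\beta$, the $\frac{x_i+x_j}{x_i-x_j}$-piece telescopes, the second and third terms cancel against each other after using $\sum_\beta a_\bullet^\beta=1$ and the antisymmetry $o(\gamma,\delta)=-o(\delta,\gamma)$, while the last term vanishes by the antisymmetry of $o$ under $\beta\leftrightarrow\delta$. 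The same mechanism will drive the longer computations below.

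The main step is \eqref{Eq:ffAF}. Expanding
\begin{equation*}
\br{f_{ij},f_{kl}}=\sum_{\alpha,\beta}\Bigl(\br{a_i^\alpha,a_k^\beta}\,b_j^\alpha b_l^\beta+\br{a_i^\alpha,b_l^\beta}\,a_k^\beta b_j^\alpha+\br{b_j^\alpha,a_k^\beta}\,a_i^\alpha b_l^\beta+\br{b_j^\alpha,b_l^\beta}\,a_i^\alpha a_k^\beta\Bigr),
\end{equation*}
I would substitute from \eqref{Eqh2}--\eqref{Eqh4} and sort the result into three blocks. The $\tfrac12\delta_{(p\neq r)}\frac{x_p+x_r}{x_p-x_r}$-pieces of the four component brackets, once contracted with the appropriate $a$'s and $b$'s, reproduce the six index-permuted rational combinations on the right side of \eqref{Eq:ffAF}. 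The $Z_{pq}$-pieces contained in \eqref{Eqh3}--\eqref{Eqh4}, after substituting $Z_{pq}=\frac{qx_qf_{pq}}{x_p-qx_q}$ from \eqref{zlax}, produce precisely the terms with denominators $x_p-qx_r$. The remaining contributions all carry an ordering symbol $o(-,-)$ or a sum over an auxiliary spin index; these I claim cancel in full. After summation over $\alpha$ and $\beta$, each such leftover term reduces either to a combination of $\sum_\beta a_\bullet^\beta=1$ and $\sum_\beta o(\beta,\gamma)a_\bullet^\beta$, or to a pair that is antisymmetric under the relabelling of summation indices; the identities used in the preliminary check then annihilate them. I expect this combinatorial bookkeeping to be the main technical obstacle; it is the content of Appendix \ref{Ann:C1}.

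Once \eqref{Eq:qAF}--\eqref{Eq:ffAF} are established, \eqref{trz1} is immediate: from \eqref{Tadiffeo} one has $Z_{ii}=\frac{q}{1-q}f_{ii}$, so $\tr Z=\frac{q}{1-q}\sum_i f_{ii}$, and then $\br{x_i,\tr Z}=\frac{q}{1-q}\sum_j\br{x_i,f_{jj}}=\frac{q}{1-q}x_if_{ii}$. For \eqref{trz2} and \eqref{trz3} the right-hand sides involve the $a_i^\gamma$ and $b_j^\epsilon$ themselves, not only $f$'s, so I would expand $\br{a_i^\gamma,\sum_j f_{jj}}$ and $\br{b_j^\epsilon,\sum_k f_{kk}}$ via Leibniz and insert \eqref{Eqh2}--\eqref{Eqh4}. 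For each summand the $\frac{x_i+x_j}{x_i-x_j}$-piece of \eqref{Eqh2} combines with the $Z_{ij}$-piece of \eqref{Eqh3}, rewritten through \eqref{zlax}, to produce precisely the potential $V_{ij}=\frac{x_i+x_j}{x_i-x_j}-\frac{x_i+qx_j}{x_i-qx_j}$ of \eqref{EqPotV}; the residual $o(-,-)$, $\sum_\gamma$, and constant correction terms collapse by the cancellation mechanism identified above. The details are carried out in Appendix \ref{Ann:C2}.
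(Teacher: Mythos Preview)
Your approach is correct and essentially the same as the paper's: expand $f_{ij}=\sum_\alpha a_i^\alpha b_j^\alpha$ by the Leibniz rule, insert the local brackets \eqref{Eqh1}--\eqref{Eqh4}, and observe that all $o(-,-)$- and auxiliary-index terms cancel after using $\sum_\alpha a_\bullet^\alpha=1$. The paper organizes this slightly differently: it first records the intermediate formulas for $\br{a_i^\gamma,f_{kl}}$ and $\br{b_j^\epsilon,f_{kl}}$ as a separate lemma (Lemma~\ref{ggMedium}), and then contracts once more to obtain $\br{f_{ij},f_{kl}}$; these same intermediate expressions, specialized to $k=l$ and summed, also give \eqref{trz2}--\eqref{trz3} directly. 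Your one-shot four-term expansion is equivalent but forgoes this reuse. Note also that your pointer to Appendix~\ref{Ann:C1} is off: that appendix proves Proposition~\ref{Pr:TadPoi}; the computation you describe is entirely in Appendix~\ref{Ann:C2}.
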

Theorem \ref{mainthm} is an immediate consequence. 

\begin{rem}
 Our approach does not cover the case of the system \eqref{AFq}-\eqref{AFc} with the potential $V(z)=\coth(z)$. That particular system was considered in \cite{BH} in relation to affine Toda field theory, see also \cite{Li, Fe,Fe19} for the geometric treatment. 
\end{rem}

\subsection{Modified spin RS system}
The variety $\MMt$ can be seen as an open subvariety of the variety $\MM$ \eqref{lq}, obtained by imposing invertibility of $X$; the same is true for $\Cnqdt$ and $\Cnqd$. 
According to Lemma \ref{invyz}, on the space $\Cnqd$ we have commuting Hamiltonians $\tr Y^k$, $k=1, \dots, n$. In the local coordinates introduced above, the first Hamiltonian looks as follows: 
\begin{equation*}
 H=\sum_{i=1}^n \left(\frac{q}{1-q}f_{ii}-\frac{1}{x_i} \right)\,,
\end{equation*}
so it is a modification of the Hamiltonian for the spin RS system. Now, if we use $\tr Y=\tr Z-\tr X^{-1}$ instead of $\tr Z$ in \eqref{trz1}--\eqref{trz3} together with \eqref{Eqh1}, we get the following system:   
\begin{subequations}
  \begin{align}
    \dot{x}_i=&\frac{q}{1-q} x_i f_{ii}\,,\quad \label{MotHr}\\
\dot{\aaa}_i^\gamma=&-\frac{q}{2(1-q)} \sum_{k\neq i} 
V_{ik}(\aaa_i^\gamma-\aaa_k^\gamma)f_{ik} \,, \label{MotHa}\\ 
\dot{\ccc}_j^\epsilon=&\frac{q}{2(1-q)} \sum_{k\neq j} 
(V_{jk}\ccc_j^\epsilon f_{jk}-V_{kj}\ccc_k^\epsilon f_{kj})\,-\, \frac{\ccc_j^\epsilon}{x_j}   \label{MotHc}\,.
  \end{align}
\end{subequations}
The difference between these and \eqref{AFq}--\eqref{AFc} is due to the additional term in the third equation. In the non-spin case $d=1$, this Hamiltonian system first appeared in \cite{Iliev00} in relation to the $q$-KP hierarchy and bispectrality, see also \cite{CF}. It is therefore natural to expect that the system \eqref{MotHr}--\eqref{MotHc} describes solutions to the multicomponent $q$-KP hierarchy. We intend to return to this question elsewhere. Let us also mention that the quantum version in the $d=1$ case appeared in \cite{BF}, cf. \cite[Remark 3.25]{BEF}.


\section{Integrability of the system} \label{First}

\subsection{Degenerate integrability} \label{STadIS}

In this section we prove that the trigonometric RS system is degenerately integrable, see Theorem \ref{mainthm2}. We will use freely the notation from the previous sections.


Recall that a completely integrable system on a real symplectic manifold $M^{2n}$ consists of $n$ independent   
functions $H_1, \dots, H_n$ in involution, i.e. with $\br{H_i, H_j}=0$ for all $i, j$. In this situation, according to Liouville--Arnold theorem, generic compact joint level sets of $H_i$ are $n$-dimensional tori, and the Hamiltonian dynamics on each of these tori is quasi-periodic and can be integrated in quadratures. The notion of \emph{degenerate integrability} \cite{N} generalises this to a situation when there are $1\le k\le n$ independent functions $H_1, \dots, H_k$ in involution, together with a Poisson subalgebra $\mathcal Q\subset C^\infty(M^{2n})$ of dimension $2n-k$, such that each of $H_i$ Poisson commutes with all of $\mathcal Q$. In that case, generic compact joint level sets of the functions in $\mathcal Q$ are $k$-dimensional tori, and the dynamics for each $H_i$ is quasi-periodic on each of the tori and can be integrated by quadratures.  
The case $k=n$ corresponds to complete integrability, while the case $k=1$ is known as super-integrability. The same definition applies in the case when $M^{2n}$ is a holomorphic symplectic manifold. See \cite{J08} for further details and references. 

Let us consider the Calogero--Moser space $\Cnqdt=\MMt /\!/\Gl_n$; this is a smooth Poisson variety of dimension $2nd$. 
We identify functions on $\Cnqdt$ with $\Gl_n$-invariant functions on $\MMt$. The functions $h_i:=\tr Z^i$, $i=1, \dots n$ are independent and Poisson commute, by Proposition \ref{comp} and Lemma \ref{invyz}. Introduce $t_{\alpha\beta}^k:=\tr(W_\alpha V_\beta Z^k)=V_\beta Z^k W_\alpha$ and the subalgebra $\mathcal Q\subset \CC[\Cnqdt]$ generated by all $t_{\alpha\beta}^k$ with $k \in \N$, $\alpha,\beta=1,\ldots,d$. The following result is proved in \ref{Ann:A1}. 
\begin{lem} \label{Lem:ttMQV}  (1) We have $\br{h_i, t_{\beta \alpha}^k}=0$ for any $\alpha, \beta$ and $k\ge 0$. 

(2) For any $\alpha,\beta,\gamma,\epsilon$ and $k,l\geq 1$ we have 
\begin{equation} \label{Eqtt}
 \begin{aligned} 
    \br{t_{ \gamma\epsilon}^k,t_{ \alpha \beta}^l}=& \frac{1}{2}
\left[o(\gamma,\beta)+o(\epsilon,\alpha)-o(\epsilon,\beta)-o(\gamma,\alpha)\right] 
t_{ \gamma \epsilon}^k t_{ \alpha \beta}^l  \\
&+\frac{1}{2}o(\gamma,\beta)\,\,  t_{ \alpha \epsilon}^{k+l} t_{\gamma \beta}^0 
+\frac{1}{2}o(\epsilon,\alpha)\,\, t_{ \alpha \epsilon}^0 t_{\gamma \beta}^{k+l} 
-\frac{1}{2}o(\epsilon,\beta)\,\, t_{ \alpha \epsilon}^l t_{\gamma \beta}^{k}
-\frac{1}{2}o(\gamma,\alpha)\,\, t_{\alpha \epsilon}^k t_{\gamma \beta}^{l} \\
&-\delta_{\gamma\beta}\left[
t_{\alpha\epsilon}^{k+l}+\frac{1}{2}  t_{\alpha \epsilon}^{k+l} t_{\gamma \beta}^{0}
+\frac{1}{2}t_{\gamma\epsilon }^k t_{\alpha\beta}^{l} \right] 
+\delta_{\alpha\epsilon}\left[t_{\gamma\beta}^{k+l}
+\frac{1}{2}  t_{\alpha \epsilon}^0 t_{\gamma \beta}^{k+l}
+\frac{1}{2}t_{\gamma \epsilon}^k t_{\alpha\beta}^{l} \right] \\
&+\frac{1}{2}  \,\left[  \sum_{\tau=1}^k  t_{\gamma\beta}^{k-\tau} t_{\alpha\epsilon}^{l+\tau}-\sum_{\tau=1}^{k-1}  t_{\gamma\beta}^{k+l-\tau} t_{\alpha\epsilon}^{\tau}
\right] 
- \frac{1}{2}  \,\left[ \sum_{\sigma=1}^l t_{\gamma\beta}^{k+\sigma} t_{\alpha\epsilon}^{l-\sigma}  
-\sum_{\sigma=1}^{l-1} t_{\gamma\beta}^{\sigma} t_{\alpha\epsilon}^{k+l-\sigma}
\right]\,. 
\end{aligned}
\end{equation}
This formula remains valid when $k$ or $l$ (or both) are equal to zero, provided that we omit the final four sums. 
\end{lem}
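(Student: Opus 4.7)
Both parts will be reduced to algebraic identities in $\Aalt$ via relation \eqref{relInv}, which says that $\br{\tr u,\tr v}=\tr\br{u,v}$, where on the right $\br{\cdot,\cdot}=m\circ\dgal{\cdot,\cdot}$ is the single bracket in the algebra. From the second identity in \eqref{2prop} and the fact that $m$ acts on each tensor factor in a compatible way, $\br{a,-}$ is a derivation in the second argument for every $a\in\Aalt$. Combined with the antisymmetry $\dgal{a,b}=-\dgal{b,a}^\circ$, a short manipulation gives the useful identity
\begin{equation*}
\br{z^i,b}\;=\;-\,i\sum q\,z^{i-1}\,p,\qquad \text{whenever } \dgal{b,z}=\sum p\otimes q,
\end{equation*}
which will be the main tool for part~(1).

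\textit{Part (1).} By \eqref{relInv} the task becomes $\br{z^i,w_\beta v_\alpha z^k}=0$ in $\Aalt$. Applying the derivation rule to the second argument reduces this to the three identities $\br{z^i,w_\beta}=\br{z^i,v_\alpha}=\br{z^i,z^k}=0$, and each is a direct two-term cancellation when one substitutes the atomic double brackets \eqref{tadidaZ} and \eqref{tadideZ} into the displayed formula above.

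\textit{Part (2), setup.} Put $a=w_\gamma v_\epsilon z^k$ and $c=w_\alpha v_\beta z^l$; by \eqref{relInv} we need $\tr\br{a,c}$. Expanding the second argument by the derivation rule yields
\begin{equation*}
\br{a,c}\;=\;\br{a,w_\alpha}\,v_\beta z^l\;+\;w_\alpha\br{a,v_\beta}\,z^l\;+\;w_\alpha v_\beta\br{a,z^l},
\end{equation*}
and each $\br{a,y}$ for $y\in\{w_\alpha,v_\beta,z\}$ is computed by writing $\dgal{a,y}=-\dgal{y,a}^\circ$, expanding $\dgal{y,w_\gamma v_\epsilon z^k}$ by derivation, and finally substituting the atomic brackets \eqref{tadidaZ}--\eqref{tadideZ} and \eqref{tadidv}--\eqref{tadidu}. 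Powers of $z$ are handled throughout by the explicit formula $\dgal{z,z^m}=\tfrac12\sum_{j=0}^{m-1}(z^j\otimes z^{m+1-j}-z^{j+2}\otimes z^{m-1-j})$, obtained by iterating the derivation rule on \eqref{tadidaZ}.

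\textit{Main obstacle.} The substantive difficulty is the combinatorial bookkeeping of the resulting terms. Two observations tame it. First, the rank-one factorisation $\tr(w_\mu v_\nu z^p w_\rho v_\sigma z^q)=(v_\nu z^p w_\rho)(v_\sigma z^q w_\mu)=t^p_{\rho\nu}\,t^q_{\mu\sigma}$ ensures that every surviving term, after the trace, collapses to a product of two $t$'s. Second, the contributions naturally organise by their origin: pairings involving $\dgal{v_\epsilon,v_\beta}$, $\dgal{w_\gamma,w_\alpha}$ together with the off-diagonal ordering-function parts of $\dgal{v_\epsilon,w_\alpha}$ and $\dgal{w_\gamma,v_\beta}$ yield the terms on the first three lines of \eqref{Eqtt}; the Kronecker-delta parts of those same $v$-$w$ double brackets, which encode the moment map, supply the fourth line; and the brackets involving $\dgal{z,z^k}$ or $\dgal{z,z^l}$ produce, after telescoping against the collapses $z^{k-1-j}\cdot z^j=z^{k-1}$, the two finite sums on the fifth line. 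Matching all coefficients and signs is the only delicate step and contains no new idea; it is carried out in full detail in Appendix~\ref{Ann:A1}.
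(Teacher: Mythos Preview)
Your proposal is correct and follows essentially the same route as the paper's proof. Both reduce the Poisson-bracket statement to an identity in $\Aalt/[\Aalt,\Aalt]$ via \eqref{relInv}, expand using the derivation rules and cyclic antisymmetry of the double bracket, substitute the atomic brackets \eqref{tadidaZ}--\eqref{tadideZ}, \eqref{tadidv}--\eqref{tadidu}, and convert traces of words $w_\mu v_\nu z^p w_\rho v_\sigma z^q$ into products $t^p_{\rho\nu}t^q_{\mu\sigma}$. The only differences are cosmetic: in part~(1) you reduce to the three atomic vanishings $\br{z^i,w_\beta}=\br{z^i,v_\alpha}=\br{z^i,z}=0$ (slightly slicker than the paper's direct computation of the full double bracket), and in part~(2) you decompose by derivation in the \emph{second} argument first, whereas the paper first splits the \emph{first} argument as $w_\gamma v_\epsilon \ast \dgal{u^k,-}+\dgal{w_\gamma v_\epsilon,-}\ast u^k$; the resulting term-by-term bookkeeping is the same. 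One small imprecision: not every surviving term is a product of \emph{two} $t$'s---the $e_0\otimes e_\infty$ piece of \eqref{tadidu} produces the bare $t_{\alpha\epsilon}^{k+l}$ and $t_{\gamma\beta}^{k+l}$ in the $\delta$-lines of \eqref{Eqtt}---but this does not affect the argument.
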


\medskip

\begin{rem}\label{sc}
 Here are some special cases of the relations \eqref{Eqtt}. For $k=l=0$ we have 
 \begin{equation*}
 \begin{aligned} 
    \br{t_{\gamma\epsilon }^0,t_{\alpha\beta}^0}=& 
\delta_{\alpha\epsilon} t_{\gamma\beta}^0-\delta_{\gamma\beta}t_{\alpha \epsilon}^0
    +\frac{1}{2}\Big[\delta_{\alpha\epsilon}- \delta_{\gamma\beta}+o(\gamma,\beta)+o(\epsilon,\alpha)-o(\epsilon,\beta)-o(\gamma,\alpha)\Big] 
(t_{\gamma \epsilon}^0 t_{\alpha \beta}^0  + t_{\alpha \epsilon}^{0} t_{\gamma \beta}^0)  \,.
\end{aligned}
\end{equation*}
In particular, for $\alpha=\beta$ and $\gamma=\epsilon$ we obtain
$\br{t_{\gamma \gamma}^0,t_{\alpha \alpha}^0}=0$\,. More generally, for $k,l\ge 1$  
\begin{equation*}
 \begin{aligned}  \label{ttsame}
    \br{t^k_{\gamma\gamma},t^l_{\alpha\alpha}}=
&\frac{1}{2}o(\gamma,\alpha)\,\,\left[ t^0_{\gamma \alpha}\,t^{l+k}_{\alpha\gamma}+ t^{k+l}_{\gamma \alpha}\,t^0_{\alpha\gamma} 
- t^k_{\gamma \alpha}\,t^{l}_{\alpha\gamma} -t^l_{\gamma \alpha}\,t^k_{\alpha\gamma} \right] \\
& +\frac12 [t_{\gamma\alpha}^0t_{\alpha \gamma}^{k+l} - t_{\gamma\alpha}^{k+l}t_{\alpha \gamma}^0]
+\frac{1}{2}  \,\left[\sum_{\sigma=1}^{l-1}+\sum_{\sigma=1}^{k-1}\right] 
\left(  t_{\gamma \alpha}^{\sigma} t_{\alpha \gamma}^{k+l-\sigma} - t_{\gamma \alpha}^{k+l-\sigma} t_{\alpha \gamma}^{\sigma}\right)  \,. 
\end{aligned}
\end{equation*}
If $l=0$ and $k\ge 0$, this becomes
\begin{equation*}
 \label{ttsame1}
    \br{t^k_{\gamma\gamma},t^0_{\alpha\alpha}}=
\frac{1}{2}o(\gamma,\alpha)\,\,\left[ t^0_{\gamma \alpha}\,t^{k}_{\alpha\gamma}+ t^{k}_{\gamma \alpha}\,t^0_{\alpha\gamma} 
- t^k_{\gamma \alpha}\,t^{0}_{\alpha\gamma} -t^0_{\gamma \alpha}\,t^k_{\alpha\gamma} \right] =0\,.
\end{equation*}
Note that  we also have $\br{t^k_{\alpha\alpha},t^l_{\alpha\alpha}}=0$ for any $k,l\ge 0$.
\end{rem}

\begin{rem} \label{Lodayzac} It is possible to write an analogue of \eqref{Eqtt} for  the functions $s_{\beta \alpha}^k:=\Bs_\beta Z^k \As_\alpha$. It is still true that $\br{h_i, s_{\beta \alpha}^k}=0$, but the expressions for $\br{s_{\epsilon \gamma}^k, s_{\beta \alpha}^l}$ are more complicated than \eqref{Eqtt}. Note that $s_{\beta \alpha}^k$ are trigonometric analogues of the functions $J_k^{\alpha\beta}$ considered in \cite[(3.45)]{AF}.
\end{rem}

\medskip

Lemma \ref{Lem:ttMQV} shows that the algebra $\Qal$ is Poisson. The degenerate integrability of $h_i$ follows from the following result.

\begin{prop} \label{Pr:DISu} We have $h_i\in\Qal$ for all $i$. The algebra $\Qal$ is finitely generated and $\dim \Qal=2nd-n$.  
\end{prop}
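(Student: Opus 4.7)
Here is my plan. First, to show $h_i = \tr Z^i \in \Qal$, I will exploit the moment map equation. Rewriting \eqref{lqt} as $XZX^{-1} = qUZ$ with $U = (\Id_n + W_d V_d)\cdots(\Id_n + W_1 V_1)$, conjugation invariance of the trace yields $\tr Z^i = q^i\tr((UZ)^i)$. Decomposing $U = \Id_n + (U - \Id_n)$ and expanding $(UZ)^i$, the unique summand with no occurrence of $(U - \Id_n)$ contributes $q^i \tr Z^i$; every other summand is the trace of a word in $Z, V_\alpha, W_\alpha$ containing at least one factor $W_\alpha V_\alpha$. By cyclic invariance together with the identity $V_\beta Z^j W_\alpha = t_{\alpha\beta}^j$, each such trace factors as a monomial in the $t_{\alpha\beta}^j$. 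Collecting terms gives $(1 - q^i)\tr Z^i \in \Qal$, and since $q$ is not a root of unity this forces $\tr Z^i \in \Qal$ for every $i \geq 1$.

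Next, finite generation follows from a Cayley--Hamilton argument. Newton's identities express the elementary symmetric polynomials $\sigma_1, \ldots, \sigma_n$ of the eigenvalues of $Z$ as polynomials in $\tr Z, \ldots, \tr Z^n$, and hence $\sigma_j \in \Qal$ by the previous step. Cayley--Hamilton then rewrites $Z^m$ for $m \geq n$ as a polynomial in $\Id_n, Z, \ldots, Z^{n-1}$ with coefficients in $\CC[\sigma_1, \ldots, \sigma_n] \subseteq \Qal$. Applying $V_\beta(-)W_\alpha$, every $t_{\alpha\beta}^m$ with $m \geq n$ lies in the subalgebra generated by $\{t_{\alpha\beta}^j : 0 \leq j < n\}$ together with $\{\sigma_1, \ldots, \sigma_n\}$, so $\Qal$ is generated by $nd^2 + n$ elements.

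For the dimension claim, I will analyse the forgetful morphism
\begin{equation*}
\rho : \MMt \longrightarrow \Mat_{n\times n} \times \Mat_{n \times 1}^d \times \Mat_{1 \times n}^d,\qquad (X, Z, V_\alpha, W_\alpha) \mapsto (Z, V_\alpha, W_\alpha),
\end{equation*}
which is $\Gl_n$-equivariant and descends to $\bar\rho : \Cnqdt \to \mathcal{N}$, where $\mathcal{N}$ is the GIT quotient of the image $\rho(\MMt)$ by $\Gl_n$. By the first fundamental theorem for $\Gl_n$-invariants of quiver representations (Le Bruyn--Procesi), $\CC[\mathcal{N}]$ is generated by traces of closed words in $Z, V_\alpha, W_\beta$. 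Any such trace containing at least one $W_\alpha V_\beta$ factor factorizes, via cyclicity and the scalar identity $V_\beta Z^k W_\alpha = t_{\alpha\beta}^k$, into a product of the $t_{\alpha\beta}^k$; traces without such a factor are of the form $\tr Z^k$. Thus $\bar\rho^\ast \CC[\mathcal{N}] = \Qal$. By Proposition \ref{comp}, a generic point of $\mathcal{N}$ lifts to a tuple with $Z$ regular semisimple, and the preimage of that point in $\Cnqdt$ is a coset of the centralizer $C_Z \cong (\CC^\times)^n$ modulo a (generically trivial) stabilizer, hence of dimension $n$. Combined with $\dim \Cnqdt = 2nd$, this gives $\dim \Qal = \dim \mathcal{N} = 2nd - n$.

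The main obstacle lies in Step 3. Two things must be checked carefully: (i) that the Le Bruyn--Procesi reduction genuinely exhausts all $\Gl_n$-invariants depending only on $(Z, V_\alpha, W_\alpha)$, where the factorization through scalars at the framing vertex is the crucial ingredient and uses $\dim \mathcal{V}_\infty = 1$; and (ii) that the generic fiber of $\bar\rho$ really has dimension $n$, i.e.\ for a generic point in the image both the stabilizer in $\Gl_n$ is trivial and $Z$ can be taken regular semisimple, which follows from the density statement of Proposition \ref{comp}. Steps 1 and 2 are essentially formal manipulations once the moment map is put into the form $XZX^{-1} = qUZ$ and one observes that every trace of a word containing a $WV$ pair factors through scalars.
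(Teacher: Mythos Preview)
Your proof is correct. Steps 1 and 2 (showing $h_i\in\Qal$ via the moment map equation and establishing finite generation via Cayley--Hamilton) are essentially identical to the paper's argument.

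For Step 3 (the dimension), both you and the paper use the same forgetful map $\rho=\pi:(X,Z,V_\alpha,W_\alpha)\mapsto(Z,V_\alpha,W_\alpha)$ and the same fibre calculation showing that, over a generic point with $Z$ regular semisimple, the set of admissible $X$ is a right coset of the centraliser $C_Z\cong(\CC^\times)^n$. The difference lies in how this is converted into $\dim\Qal=2nd-n$. The paper splits the computation into an upper and a lower bound: the upper bound $\dim\Qal\le 2nd-n$ comes from the non-degeneracy of the Poisson bracket together with $\{h_i,\Qal\}=0$ and the independence of the $h_i$; the lower bound is obtained by observing that on the ambient space $\CC^{n^2+2nd}$ the functions $t_{\alpha\beta}^k$ already cut out a $2nd$-dimensional family, and that restriction to $\pi(\mathcal M^\times)$ (of codimension at most $n$) can lower this by at most $n$. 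You instead invoke the Le~Bruyn--Procesi theorem, together with the key fact that $\dim\mathcal V_\infty=1$ forces every invariant trace to factor through the scalars $t_{\alpha\beta}^k$, to identify $\Qal$ with $\CC[\mathcal N]$ and read off the dimension in one stroke from the fibre dimension. Your route is more conceptual and avoids the symplectic input entirely; the paper's route is more self-contained in that it uses only the Poisson structure already at hand and does not appeal to invariant theory. One minor point: your fibre statement and the dimension formula tacitly require working on an irreducible component (namely $\mathcal C^\times$), as the paper does; this is worth making explicit.
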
 
\begin{proof} The equation \eqref{lqt} can be written as  
\begin{equation}\label{lqt1}
q^{-1}XZX^{-1}=(\Id_n+W_d V_d)\ldots (\Id_n+W_1 V_1)Z\,.
\end{equation}
Raising it to power $k$ gives
\begin{equation}\label{lqt2}
q^{-k}XZ^kX^{-1}=Z^k+\dots \,,
\end{equation}
where the dots represent terms of the form $Z^aW_\alpha V_\alpha Z^bW_\beta V_\beta\dots W_\gamma V_\gamma Z^c$, with $a, b, \dots, c\ge 0$. The trace of every such term is easily expressed in terms of $t_{\alpha\beta}^i$:
\begin{equation*}
\tr (Z^aW_\alpha V_\alpha Z^bW_\beta \dots V_\gamma Z^c)=\tr (V_\alpha Z^bW_\beta \dots V_\gamma Z^{c+a}W_\alpha)=(V_\alpha Z^bW_\beta) \dots (V_\gamma Z^{a+c}W_\alpha)=t_{\beta\alpha}^b\dots t_{\alpha\gamma}^{a+c}\,.
\end{equation*}
Thus, by taking traces in \eqref{lqt2} we obtain $(q^{-k}-1)\tr Z^k\in \Qal$, i.e. $h_k\in\Qal$. 
It is now clear that $\Qal$ is generated by $t_{\alpha\beta}^i$ with $0\le i\le n$, since $t_{\alpha\beta}^k=V_\beta Z^k W_\alpha$ for $k>n$ can be expressed through those generators by the Cayley--Hamilton theorem and the fact that $\tr Z^j\in\Qal$ for all $j$.   
It remains to calculate the (Krull) dimension of $\Qal$. Note that it coincides with the maximal number of algebraically independent elements of $\Qal$; this also equals the dimension of the span of $df$, $f\in\Qal$ at a generic point of $\Cnqdt$.  
Since the Poisson bracket on $\Cnqdt$ is non-degenerate and the functions $h_i$ are independent, the equations $\br {h_i, f}=0$ for $f\in\Qal$ imply that $\dim\Qal\le 2nd-n$. Hence, it is sufficient to show the opposite inequality, $\dim\Qal\ge 2nd-n$.

To this end, consider the component $\mathcal M^\times\subset\MMt$ as in Proposition \ref{comp}. Recall that $\Gl_n$ acts freely on $\mathcal M^\times$, so $\dim\mathcal M^\times=n^2+\dim\mathcal C^\times=n^2+2nd$. Consider the following $\Gl_n$-equivariant map: 
\begin{equation}\label{mmap}
\pi\,:\ \mathcal M^\times\to \CC^{n^2+2nd}\,,\qquad (X, Z, V_\alpha, W_\alpha)\mapsto (Z, V_\alpha, W_\alpha)\,.
\end{equation}
We claim that generic fibers of $\pi$ have dimension $n$. Indeed, take a generic point $(X, Z, V_\alpha, W_\alpha)$ in $\mathcal M^\times$, then by Proposition \ref{comp} $Z$ has simple spectrum, so we may assume it is in diagonal form. Then \eqref{lqt1} tells us that $X$ puts $\widetilde Z:=q(\Id_n+W_d V_d)\ldots (\Id_n+W_1 V_1)Z$ into a diagonal form.  Therefore, for a given $(Z, V_\alpha, W_\alpha)$, $X$ is determined by choosing an eigenbasis for $\widetilde Z$. Hence, $\pi^{-1}(Z, V_\alpha, W_\alpha)$ is $n$-dimensional. As a result, $\pi(\mathcal M^\times)$ has dimension $\ge n^2+2nd-n$. 

If we view elements of $\Qal$ as functions of $Z, V_\alpha, W_\alpha$, then it is straightforward to check that $\dim\Qal\ge 2nd$ at any point in $\CC^{n^2+2nd}$ where $Z$ has simple spectrum and, say, $W_{1,i}\ne 0$. The dimension of $\Qal$ may drop after restriction onto $\pi(\mathcal M^\times)$. However, $\dim\pi(\mathcal M^\times)\ge n^2+2nd-n$, 
and so $\pi(\mathcal M^\times)\subset \CC^{n^2+2nd}$ is of codimension at most $n$. Thus, the dimension of $\Qal$ reduces by at most $n$, that is, $\dim\Qal\ge 2nd-n$ on $\pi(\mathcal M^\times)$. As a corollary, $\dim\Qal\ge 2nd-n$ when viewed on $\mathcal M^\times$. Since the functions in $\Qal$ are constant along $\Gl_n$-orbits in $\mathcal M^\times$, the dimension of $\Qal$ is the same whether viewed on $\mathcal M^\times$ or on $\mathcal C^\times=\mathcal M^\times/\!/\Gl_n$. We conclude that the span of $\mathrm{d}f$, $f\in\Qal$ has dimension $\ge 2nd-n$ generically on $\mathcal C^\times$, as needed.  
\end{proof}

\begin{rem}
Similar results are true for the modified spin RS system given by the Hamiltonians $h_k=\tr Y^k$, $k=1, \dots, n$ on the variety $\Cnqd$. Namely, we can set 
$t_{\alpha\beta}^k:=\tr(W_\alpha V_\beta Y^k)$ and consider the algebra $\Qal$ generated by all $t_{\alpha\beta}^k$. Then the formulas \eqref{Eqtt} remain true for that case as well.  We also have an analogue of Proposition \ref{Pr:DISu} proved in the same manner, and so the Hamiltonians $h_k=\tr Y^k$ define a degenerately integrable system on the space $\Cnqd$.
\end{rem}



\subsection{Algebra of first integrals and Liouville integrability} \label{Sd2}

By Proposition \ref{Pr:DISu}, the Hamiltonians $h_i=\tr Z^i$, $i=1, \dots, n$ define a degenerately integrable system. More precisely, this is true on a connected component of the space $\Cnqdt$ on which we have the local coordinates $\xi\,:\, \hreg\to \Cnqdt$. In the real smooth case, any degenerately integrable system can be extended (in a non-canonical way) to a completely integrable system, see \cite{BJ, J08}. Therefore, it is natural to expect that there exists a complete set of algebraic first integrals in our case, as well as in the case of the Hamiltonians $\tr Y^i$. Since $h_i$ Poisson commute with any $t_{\beta \alpha}^k=\tr (W_\beta V_\alpha Z^k)$, we may look for complementary Hamiltonians inside the algebra $\Qal$ generated by all $t_{\beta \alpha}^k$. The algebra $\Qal$ can be regarded as the algebra of joint first integrals for the Hamiltonians $h_1, \dots, h_n$.

Before discussing the general case, let us remark on some cases where the complete integrability is easy to establish. 
The case $d=1$ is trivial, since the functions $h_k$, $k=1, \dots, n$ are enough for integrability. 
Another case is $d=2$, where we can complement the functions $h_k$, $k=1, \dots, n$ by $t_{11}^k$ with $k=1\dots, n$. The latter functions Poisson commute with $h_k$ and between themselves by Remark \ref{sc}.  

Let us now introduce an infinite-dimensional version of the algebra of first integrals.  Namely, given $d\ge 1$ we define $\Qd$ to be the commutative algebra freely generated by the symbols $T_{\beta \alpha}^k$ with $\alpha, \beta\in\{1, \dots, d\}$ and $k\in\N$. These algebras form an increasing chain $\Qal_1\subset\Qal_2\subset \dots$. 

\begin{prop}  \label{Pr:Qd}
The formulas \eqref{Eqtt} define a Poisson bracket on $\Qd$. As a result, we have an increasing chain of Poisson algebras $\Qal_1\subset\Qal_2\subset \dots$.
\end{prop}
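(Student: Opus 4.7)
The plan is to define the bracket on $\Qd$ by prescribing its value on pairs of generators via \eqref{Eqtt} and extending to all of $\Qd$ by the Leibniz rule, which is automatic since $\Qd$ is the polynomial algebra on the symbols $T^k_{\beta\alpha}$. Two axioms then remain to check on generators: antisymmetry and the Jacobi identity. Antisymmetry, $\{T^k_{\gamma\epsilon}, T^l_{\alpha\beta}\} + \{T^l_{\alpha\beta}, T^k_{\gamma\epsilon}\} = 0$, is a direct inspection of \eqref{Eqtt}: under the swap $(k,\gamma,\epsilon) \leftrightarrow (l,\alpha,\beta)$ every summand flips sign after using the skew-symmetry of $o(-,-)$, the symmetry of $\delta$, and a re-indexing of the tail sums in the last line.

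The central step is the Jacobi identity on triples of generators. The strategy is to lift the computation to the quasi-Hamiltonian quotients. By Lemma \ref{Lem:ttMQV}, for each $n \ge 1$ we have an algebra homomorphism
\begin{equation*}
\phi_n \colon \Qd \longrightarrow \CC[\Cnqdt], \qquad T^k_{\beta\alpha} \longmapsto t^k_{\beta\alpha} := V_\alpha Z^k W_\beta,
\end{equation*}
and \eqref{Eqtt} expresses precisely the restriction to these functions of the honest Poisson bracket on $\CC[\Cnqdt]$ coming from the quasi-Hamiltonian reduction of Section \ref{ss:Repr}. Hence the Jacobiator $J \in \Qd$ of any three generators lies in $\ker \phi_n$ for every $n$.

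It then remains to conclude that $J = 0$ as an element of $\Qd$. Since $J$ involves only finitely many $T^k_{\beta\alpha}$ with $k \le K$ (for some $K$ depending on the chosen triple), it suffices to show that, for $n$ sufficiently large, the corresponding functions $\{t^k_{\beta\alpha}\}_{k \le K,\, 1 \le \alpha,\beta \le d}$ are algebraically independent on $\Cnqdt$. Fixing $Z$ in generic diagonal form with eigenvalues $z_1, \ldots, z_n$, one has $t^k_{\beta\alpha} = \sum_{i=1}^n z_i^k V_{\alpha,i} W_{\beta,i}$; for $n \ge K+1$, the Vandermonde invertibility together with the fact that rank-one matrices span $\Mat_{d\times d}$ permits prescribing arbitrary values for the $d \times d$ matrices $(t^k_{\beta\alpha})_{\alpha,\beta}$, $k=0,\ldots,K$, by suitable choices of $V_\alpha, W_\alpha$. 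The main obstacle is to show that this freedom survives imposition of the moment map constraint \eqref{lqt}; this should follow from the dimension count from the proof of Proposition \ref{Pr:DISu}, which guarantees that the image of $\pi$ in \eqref{mmap} has dimension $\ge n^2 + 2nd - n$, far exceeding $d^2(K+1)$ once $n$ is large. Should this route prove delicate, one can fall back on a direct, if lengthy, term-by-term verification of the Jacobi identity from \eqref{Eqtt}.
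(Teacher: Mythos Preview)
Your overall strategy matches the paper's: define the bracket on generators by \eqref{Eqtt}, extend by Leibniz, and deduce Jacobi by mapping $\Qd$ into a family of honest Poisson algebras where the relations \eqref{Eqtt} are known to hold, then use that the intersection of the kernels is zero.

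The key difference is the choice of target. You map into $\CC[\Cnqdt]$, the coordinate ring of the \emph{reduced} space, and then have to worry about whether algebraic independence of the $t^k_{\beta\alpha}$ survives the moment map constraint \eqref{lqt}. You correctly flag this as the main obstacle, but your proposed fix is not sufficient: knowing that $\pi(\mathcal M^\times)$ has dimension $\ge n^2+2nd-n$ does not by itself imply that the specific finite collection $\{t^k_{\beta\alpha}: k\le K\}$ remains algebraically independent on it --- a high-dimensional subvariety can still lie inside the zero set of a nontrivial polynomial relation among those particular functions. The argument in Proposition~\ref{Pr:DISu} yields only a lower bound on $\dim\Qal$, not independence of a prescribed finite set of generators.

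The paper avoids this entirely by mapping instead into $\CC[\Rep(\CC\bar{Q},\aalpha)]^{\Gl_n}$, the $\Gl_n$-invariant functions on the \emph{unreduced} representation space. There the matrices $Z, V_\alpha, W_\alpha$ are free, with no constraint whatsoever, so functional independence of $\{t^k_{\beta\alpha}: k\le K\}$ for $n$ large is a straightforward Jacobian computation on the locus where $Z$ is diagonal. The point you may be missing is that the quasi-Poisson bracket on $\Rep(\CC\bar{Q},\aalpha)$, while not Poisson globally, \emph{does} satisfy Jacobi when restricted to $\Gl_n$-invariant functions (this is the content of \cite[Proposition~5.1.2]{VdB1}, equivalently the general fact that quasi-Poisson brackets fail Jacobi only by terms along the group action). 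Since the formulas \eqref{Eqtt} are derived in the appendix from identities in $\Aal/[\Aal,\Aal]$ via \eqref{relInv}, they hold already on the unreduced space, and you never need to impose the moment map.
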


\begin{proof} Recall the representation spaces $\Rep(\CC\bar{Q}, \bar{\alpha})\cong \CC^{2n^2+2nd}$ whose points are represented by the matrix data \eqref{pts}. 
For any $n\ge 1$ we have a homomorphism $\varphi_n\,:\, \Qd\to \CC[\Rep(\CC\bar{Q}, \bar{\alpha})]^{\Gl_n}$ defined by $T_{\beta \alpha}^k\mapsto t_{\beta \alpha}^k$. 
We claim that 
\begin{equation}\label{krn}
\bigcap_n \ker\varphi_n=0\,.
\end{equation} 
To see that, it is enough to check that $\{t_{\beta\alpha}^i\,:\, \alpha, \beta=1,\dots, d,\ i=1,\dots k\}$ are functionally independent as elements of $\CC[\Rep(\CC\bar{Q}, \bar{\alpha})]$ if $n$ is sufficiently large. 
This can be checked on the subspace where $Z=\diag(z_1, \dots , z_n)$, by considering the matrix $(\partial t_{\beta\alpha}^i/\partial z_j)$ of size $kd^2\times n$ and showing that its rank is $kd^2$ for large $n$. 
This is a straightforward exercise left to the reader.

It is clear now that if we define an antisymmetric bracket on $\Qd$ by \eqref{Eqtt}, then the map $\varphi_n$ intertwines it with the quasi-Poisson bracket on $\CC[\Rep(\CC\bar{Q}, \bar{\alpha})]$. Since Jacobi identity holds on $\CC[\Rep(\CC\bar{Q}, \bar{\alpha})]^{\Gl_n}$, it must  then hold on $\Qd$ due to \eqref{krn}.  The fact that the resulting Poisson bracket is compatible with the inclusions $\Qal_1\subset\Qal_2\subset \dots$ is clear from \eqref{Eqtt}.  
\end{proof}

We can now construct an infinite family of central elements in $\Qd$. For this, let us consider 
\begin{equation} \label{Sspin1}
 S=(\Id_n+W_d V_d)\ldots (\Id_n+W_1V_1)Z\,.
\end{equation}
It is easy to see that for any $k$, $\tr S^k-\tr Z^k$ can be written as a polynomial in $t_{\beta \alpha}^i$ (see the proof of Proposition \ref{Pr:DISu}). Let us denote this polynomial as $h_{k,d}$. For example, we have   
\begin{equation*}
h_{1,1}=t_{11}^1\,,\ h_{2,1}=2t_{11}^2+(t_{11}^1)^2\,,\ h_{1,2}=t_{22}^1+t_{11}^1+t_{12}^0t_{21}^1\,.
\end{equation*}
Note that $h_{k,d}$ does not depend on $n$. It is shown in \ref{Ann:eta} that $\{\tr S^k, t_{\alpha\beta}^i\}=0$ on $\Rep(\CC\bar{Q}, \bar{\alpha})$ and therefore
\begin{equation}\label{central}
\{h_{k,d}, t_{\alpha\beta}^i\}=0\,.
\end{equation} 
Let us introduce $H_{k,d}:=h_{k,d}(T_{\beta \alpha}^i)$ by formally replacing $t_{\beta \alpha}^i$ with $T_{\beta \alpha}^i$. For example, $H_{1,1}=T_{11}^1$, $H_{2,1}=2T_{11}^2+(T_{11}^1)^2$.

\begin{prop} \label{Pr:Hd} The elements $H_{k,d}$ are central in $\Qd$. The subalgebra $\mathcal H_d\subset \Qd$ generated by all $H_{k,\alpha}$ with $1\le \alpha\le d$ and $k\in\N$ is Poisson commutative. 
\end{prop}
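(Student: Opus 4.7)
The strategy is to transfer the vanishing identity \eqref{central}, which lives in the representation space, back to the free algebra $\Qd$ via the Poisson homomorphisms $\varphi_n \colon \Qd \to \CC[\Rep(\CC\bar{Q}, \aalpha)]^{\Gl_n}$ that were introduced in the proof of Proposition \ref{Pr:Qd}. The key input is \eqref{krn}, which says that $\bigcap_n \ker\varphi_n = 0$, together with the fact that each $\varphi_n$ is a Poisson morphism sending $T_{\beta\alpha}^i$ to $t_{\beta\alpha}^i$ and hence $H_{k,d}$ to $h_{k,d}$.

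For the first assertion, I would argue as follows. Since $\{-,-\}$ is a biderivation and $\Qd$ is generated as a commutative algebra by the $T_{\beta\alpha}^i$, to show that $H_{k,d}$ is central it suffices to check that $\{H_{k,d}, T_{\beta\alpha}^i\} = 0$ in $\Qd$ for every $i,\alpha,\beta$. Applying $\varphi_n$ and using that it intertwines Poisson brackets, the image of this element is $\{h_{k,d}, t_{\beta\alpha}^i\}$ in $\CC[\Rep(\CC\bar{Q}, \aalpha)]^{\Gl_n}$, which vanishes by \eqref{central}. Since this is true for every $n$, the element $\{H_{k,d}, T_{\beta\alpha}^i\}$ lies in $\bigcap_n \ker\varphi_n$, and is therefore zero by \eqref{krn}.

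For the second assertion, the point is to exploit the compatibility of the Poisson structures along the chain $\Qal_1 \subset \Qal_2 \subset \dots$. Inspection of the explicit formula \eqref{Eqtt} shows that the bracket of two generators $T_{\gamma\epsilon}^k, T_{\alpha\beta}^l$ is a polynomial in $T_{xy}^z$ with $x,y \in \{\alpha,\beta,\gamma,\epsilon\}$, so $\Qal_{\alpha}$ is a Poisson subalgebra of $\Qal_d$ for each $\alpha \le d$. Moreover, $H_{k,\alpha}$ is by construction a polynomial in the $T_{\beta\gamma}^i$ with $\beta,\gamma \le \alpha$ and does not depend on the ambient $d$, so the first part of the proposition, applied with $\alpha$ in place of $d$, yields that $H_{k,\alpha}$ is central in $\Qal_\alpha$. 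To conclude $\{H_{k,\alpha}, H_{l,\beta}\} = 0$ for $\alpha \le \beta \le d$, note that $H_{l,\beta}$ is central in $\Qal_\beta$ and that $H_{k,\alpha} \in \Qal_\alpha \subset \Qal_\beta$; hence the bracket vanishes, and $\mathcal{H}_d$ is Poisson commutative.

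The only delicate point is the use of \eqref{krn}: one must trust that the formal identity $\{H_{k,d}, T_{\beta\alpha}^i\} = 0$ can be detected by evaluating on sufficiently many representation spaces. Once this separation argument is in hand, everything else reduces either to the biderivation axioms or to the identity \eqref{central} established in Appendix \ref{Ann:eta}, so the main obstacle is really packaged into Proposition \ref{Pr:Qd} rather than into the present statement.
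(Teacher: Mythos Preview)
Your proof is correct and follows essentially the same approach as the paper: for the first assertion you transfer \eqref{central} to $\Qd$ via the Poisson maps $\varphi_n$ and invoke \eqref{krn}, and for the second you use centrality of $H_{k,\alpha}$ in $\Qal_\alpha$ together with the Poisson inclusions $\Qal_\alpha\subset\Qal_\beta$ for $\alpha\le\beta$. The paper's proof is simply a terser version of the same argument; the only addition it makes is the parenthetical remark that the commutativity can alternatively be derived directly from \eqref{br:sz}, Lemma~\ref{Lem:sBis} and \eqref{relInv}.
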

\begin{proof} 
According to \eqref{central}, for any fixed $n$ the functions $h_{k,d}$ Poisson commute with all $t_{\beta \alpha}^i$ on 
$\Rep(\CC\bar{Q}, \bar{\alpha})$. We then use \eqref{krn} to conclude that $H_{k,d}\in\mathcal Z(\Qd)$. 
The commutativity of $H_{k,\alpha}$ for all $k\in\N$ and $\alpha=1,\dots, d$ is obvious from the 
inclusions $\Qal_\beta\subset\Qal_\alpha$ for $\beta<\alpha$. (Alternatively, this also follows from \eqref{br:sz}, Lemma \ref{Lem:sBis} and \eqref{relInv}.) 
\end{proof}

The algebra generated by all $H_{k,\alpha}$ can be viewed as a subalgebra of Gelfand--Tsetlin type in $\Qd$. As a corollary, we obtain a completely integrable system on each $\Cnqdt$.

\begin{thm}
The functions $h_{k,\alpha}$ with $\alpha=1, \dots d$ and $k=1, \dots n$ define a completely integrable system on $\Cnqdt$, thus extending the degenerately integrable system defined by the Hamiltonians $h_k=\tr Z^k$.    
\end{thm}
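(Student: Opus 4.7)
The proof has three parts: Poisson commutativity, extension of the Hamiltonians $h_k=\tr Z^k$, and functional independence of $nd$ Hamiltonians on the $2nd$-dimensional $\Cnqdt$.

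\emph{Commutativity.} Each $h_{k,\alpha}$ equals $\varphi_n(H_{k,\alpha})$ for the Poisson homomorphism $\varphi_n\colon \Qd\to\Qal$ from the proof of Proposition~\ref{Pr:Qd}. By Proposition~\ref{Pr:Hd}, the subalgebra $\mathcal H_d\subset\Qd$ generated by the $H_{k,\alpha}$ is Poisson commutative; applying $\varphi_n$ yields pairwise Poisson commutativity of the $h_{k,\alpha}$ on $\Cnqdt$.

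\emph{Extension.} By construction, $h_{k,\alpha}|_{\Cnqdt}=\tr(S_\alpha^k)-\tr(Z^k)$, with $S_\alpha=(\Id_n+W_\alpha V_\alpha)\cdots(\Id_n+W_1V_1)\,Z$. The moment map equation \eqref{lqt1} gives $S_d=q^{-1}XZX^{-1}$, hence $\tr(S_d^k)=q^{-k}\tr(Z^k)$ and
\begin{equation*}
h_{k,d}|_{\Cnqdt}=(q^{-k}-1)\,h_k.
\end{equation*}
Since $q$ is not a root of unity, each $h_k$ lies in the algebra generated by $\{h_{k,d}\}_{k}$, so the new integrable system extends the degenerately integrable one from Theorem~\ref{mainthm2}.

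\emph{Independence.} As $h_{k,d}$ is a nonzero multiple of $h_k$, it suffices to show that the coefficients of the characteristic polynomials $\chi_Z,\chi_{S_1},\dots,\chi_{S_{d-1}}$—altogether $nd$ functions—are algebraically independent on $\Cnqdt$. I would work in the local coordinates $(x_i,a_i^\alpha,b_i^\alpha)$ of Section~\ref{Slocal}, in which
\begin{equation*}
(S_\alpha)_{ij}=Z_{ij}+\sum_{\beta=1}^{\alpha}a_i^\beta b_j^\beta,\qquad S_\alpha=S_{\alpha-1}+A_\alpha B_\alpha,
\end{equation*}
and proceed by induction on $\alpha$, using the matrix determinant lemma in the form
\begin{equation*}
\chi_{S_\alpha}(\lambda)\;=\;\chi_{S_{\alpha-1}}(\lambda)\;-\;B_\alpha\,\operatorname{adj}(\lambda\Id_n-S_{\alpha-1})\,A_\alpha.
\end{equation*}
At a generic point, $S_{\alpha-1}$ has simple spectrum, and the second summand is a scalar polynomial in $\lambda$ of degree $n-1$ whose $n$ coefficients are bilinear forms in $(A_\alpha,B_\alpha)$ that are generically linearly independent. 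Varying the $2n$ new parameters $(a_i^\alpha,b_i^\alpha)$ then moves the coefficients of $\chi_{S_\alpha}$ in $n$ directions independent of those of $\chi_{S_{\alpha-1}}$, supplying the $n$ new functionally independent Hamiltonians needed at step~$\alpha$.

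\emph{Main obstacle.} The subtle point is that $(a_i^\alpha)$ also enters $Z_{ij}$ through $f_{ij}=\sum_\alpha a_i^\alpha b_j^\alpha$ and is subject to the affine constraint $\sum_\alpha a_i^\alpha=1$, so the new parameters couple back into the earlier $S_\beta$. One must therefore exhibit a convenient generic point—say, $x_i$ pairwise distinct, $Z$ of simple spectrum, and the spin vectors $(a_i^\alpha,b_i^\alpha)$ chosen generically subject to the affine constraint—at which the Jacobian of the map $(x_i,a_i^\alpha,b_i^\alpha)\mapsto(\chi_Z,\chi_{S_1},\dots,\chi_{S_{d-1}})$ attains full rank $nd$. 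Verifying this rank condition, perhaps by a limiting argument where $A_\alpha B_\alpha$ is turned on one step at a time so that the inductive rank-$1$ perturbation analysis applies cleanly, is where the bulk of the technical work lies.
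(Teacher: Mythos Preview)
Your commutativity and extension arguments are correct and match the paper's setup. The gap is in the independence step: you yourself flag that in the coordinates $(x_i,a_i^\alpha,b_i^\alpha)$ the matrix $Z$ depends on \emph{all} the spin data through $f_{ij}=\sum_\alpha a_i^\alpha b_j^\alpha$, so varying $(a_i^\alpha,b_i^\alpha)$ at step $\alpha$ perturbs not only $A_\alpha B_\alpha$ but also $S_0=Z$ and hence every earlier $S_\beta$. This destroys the triangular structure you need for the inductive Jacobian argument, and you do not actually carry out the ``limiting argument'' you allude to. As stated, the proof is incomplete.

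The paper avoids this coupling entirely by changing coordinates: instead of diagonalising $X$, it diagonalises $Z$. A short lemma shows that near a generic point of $\Cnqdt$ the $2nd-n$ quantities $z_i$ (eigenvalues of $Z$) together with $v_{\alpha,i}=V_{\alpha,i}$, $w_{\alpha,i}=W_{\alpha,i}$ for $1\le\alpha<d$ are functionally independent; this uses the moment map equation and the elementary fact that a rank-one perturbation of a regular semisimple matrix can move its eigenvalues independently. In these coordinates $S_\alpha=(\Id_n+W_\alpha V_\alpha)\cdots(\Id_n+W_1V_1)Z$ depends only on $z_i$ and on $(v_{\beta,i},w_{\beta,i})$ with $\beta\le\alpha$, so the Jacobian of $(\tr Z^k,\tr S_1^k,\dots,\tr S_{d-1}^k)$ with respect to $(z_i,v_{1,i},\dots,v_{d-1,i})$ is genuinely block upper-triangular. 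Each diagonal block is then handled by specialising $V_\beta=W_\beta=0$ for $\beta<\alpha$, reducing to the $\alpha=1$ computation. The moral is that the right slice for this argument is the one where $Z$, not $X$, is diagonal; your rank-one perturbation idea is the correct mechanism, but it only works cleanly once the background matrix is decoupled from the spin variables being varied.
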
 

\begin{rem}
Strictly speaking, the above result is valid on the connected component $\mathcal C^\times\subset \Cnqdt$, see \ref{ss:TadCoord}. We will ignore this subtlety in the proof below.  
\end{rem}

\begin{proof} Let us introduce
\begin{equation} \label{Sspin2}
 S_\alpha=(\Id_n+W_\alpha V_\alpha)\ldots (\Id_n+W_1V_1)Z\,, \qquad \alpha=1,\dots, d\,. 
\end{equation}
By definition, we have $\tr S_\alpha^k=\tr Z^k+h_{k,\alpha}$, and from the moment map equation we have $\tr S_d^k=q^{-k}\tr Z^k$. Thus, it is enough to prove that the functions $\tr Z^k$ and $\tr S_{\alpha}^k$ with $k=1, \dots n$, $1\le \alpha< d$ are functionally independent. We will use the following lemma.  
\begin{lem} Near a generic point of $\Cnqdt$, the $2nd-n$ local functions $z_i$, $v_{\alpha, i}:=V_{\alpha,i}$,  $w_{\alpha, i}:=W_{\alpha,i}$ with $i=1, \dots, n$, $1\le \alpha<d$ are functionally independent. 
 \end{lem}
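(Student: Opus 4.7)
The plan is to exhibit a local slice on which the stated $2nd-n$ functions naturally appear as part of a coordinate system on $\Cnqdt$. By Proposition~\ref{comp}, at a generic point of $\mathcal{C}^\times\subset\Cnqdt$ one can choose a representative $(X,Z,V_\alpha,W_\alpha)\in\mathcal{M}^\times$ with $Z$ of simple spectrum. I would use the $\Gl_n$-action to diagonalize $Z=\diag(z_1,\ldots,z_n)$, and then use the residual diagonal torus $T\cong(\CC^\times)^n$ to normalize $W_{d,i}=1$ for all $i$ (possible on a Zariski-open set where $W_{d,i}\neq 0$). Together these conditions cut out a local slice $\Sigma\subset\MMt$ of dimension $n^2+2nd-n^2=2nd=\dim\Cnqdt$, and $\Sigma$ is locally a degree-$n!$ cover of $\Cnqdt$, so checking functional independence on $\Sigma$ is enough.

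On $\Sigma$ the data $z_i$, $V_{\alpha,i}$ (all $\alpha$), $W_{\alpha,i}$ ($\alpha<d$), together with $X$, are constrained only by the moment map equation $XZX^{-1}=\widetilde Z$, where $\widetilde Z:=q(\Id_n+\mathbf{1}\,V_d)A$ with $\mathbf{1}=W_d=(1,\ldots,1)^T$ and $A:=(\Id_n+W_{d-1}V_{d-1})\cdots(\Id_n+W_1V_1)Z$. The existence of an invertible intertwiner $X$ between $Z$ and $\widetilde Z$ is equivalent to the two matrices having the same characteristic polynomial, yielding $n$ polynomial relations among $(z_i,V_{\alpha,i},W_{\alpha<d,i})$. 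I would argue that at a generic point these relations can be solved for the $n$ unknowns $V_{d,i}$ in terms of the rest; once $V_d$ is eliminated, $X$ is free to range over the $n$-dimensional affine space $\{X:XZ=\widetilde ZX\}$ of intertwiners. Hence local coordinates on $\Sigma$ may be taken to be $(z_i,V_{\alpha<d,i},W_{\alpha<d,i})$ together with $n$ parameters describing $X$, so these $2nd-n$ functions are functionally independent as part of a coordinate system.

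The hard part is justifying the generic solvability of the matching equations for $V_d$. By the implicit function theorem this reduces to invertibility of the Jacobian of the $n$ matching relations with respect to the $V_{d,k}$ at a generic point. Since $\partial\widetilde Z/\partial V_{d,k}=q\,\mathbf{1}\,A_{k,\cdot}$, writing $\widetilde Z=PDP^{-1}$ with $D$ diagonal, the diagonal entries of $P^{-1}(\mathbf{1}\,A_{k,\cdot})P$ in the eigenbasis are $(P^{-1}\mathbf{1})_i(AP)_{ki}$, and the resulting $n\times n$ matrix has determinant proportional to $\det(A)\det(P)\prod_i(P^{-1}\mathbf{1})_i$. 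Each factor is nonzero on a Zariski-open set ($A$ is a product of invertible matrices, $P$ is invertible by construction, and $\mathbf{1}$ generically avoids the coordinate hyperplanes of the eigenbasis of $\widetilde Z$), so the Jacobian is non-degenerate and the lemma follows.
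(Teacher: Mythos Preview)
Your proof is correct and follows essentially the same strategy as the paper: diagonalize $Z$, normalize $W_{d,i}=1$ via the residual torus, and then argue that $V_d$ can be solved from the eigenvalue-matching constraint so that the remaining data $(z_i,V_{\alpha<d,i},W_{\alpha<d,i})$ together with the $n$ intertwiner parameters for $X$ give local coordinates on the slice. The only difference is in how the solvability for $V_d$ is justified: the paper simply invokes the ``elementary fact that the eigenvalues of a regular semisimple matrix can be independently perturbed by a small rank-one perturbation,'' whereas you carry out the first-order perturbation computation explicitly and identify the Jacobian determinant as $q^n\det(A)\det(P)\prod_i(P^{-1}\mathbf 1)_i$. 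Your explicit version is a fine substitute for the paper's appeal to folklore; the two arguments are really the same computation at different levels of detail.

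One small point worth tightening: you assert that each factor in the Jacobian determinant is nonzero on a Zariski-open set, but you should make sure this open set meets the slice $\Sigma$. For $\det(A)$ and $\det(P)$ this is automatic (they are invertible on all of $\MMt$). For $\prod_i(P^{-1}\mathbf 1)_i\neq 0$, note that on $\Sigma$ one may take $P=X$, so the condition becomes $(X^{-1}W_d)_i\neq 0$ for all $i$; this is a nontrivial open condition on $\Sigma$ and its nonemptiness deserves a word (e.g., exhibit a single point, or observe that it holds on the non-spin locus $\mathcal C_{n,1,q}^\times\subset\Sigma$).
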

\begin{proof}[Proof (of the lemma).]
It is sufficient to show that for any pairwise distinct $z_i$ and generic $V_\alpha$, $W_\alpha$ with $\alpha<d$, one can find $V_d\in \CC^n$ and $X\in \Gl_n$ such that the moment map equation \eqref{lqt1} is satisfied with $Z=\diag(z_1, \dots, z_n)$ and with $W_{d,i}=1$. In its turn, it is enough to find $V_d$ so that the matrix
$S_d=(\Id_n+W_d V_d)\ldots (\Id_n+W_1 V_1)Z$
has the eigenvalues $q^{-1}z_1, \dots, q^{-1}z_n$. For fixed generic $V_\alpha, W_\alpha$, $\alpha=1, \dots, d-1$, we can view  $S_d$ as a rank-one perturbation of $S_{d-1}$. It is then an elementary fact that the eigenvalues of a regular semisimple matrix can be independently perturbed by a small rank-one perturbation, so we are done.  
 \end{proof}
As a consequence of the lemma, we can use the above $(z_i,v_{\alpha,i},w_{\alpha,i})$ as part of a local coordinate system of $\Cnqdt$. We have
 \begin{equation} \label{loc}
 \tr Z^k=\sum_i z_i^k, \quad t_{\alpha \beta}^k=\sum_i w_{\alpha,i}v_{\beta,i}z_i^k\,.
 \end{equation}
We therefore may simply view $\tr Z^k$ and $\tr S_{\alpha}^k$ with $\alpha<d$ and $1\le k\le n$ as polynomials of $(z_i,v_{\alpha,i},w_{\alpha,i})$, and we need to show that these polynomials are functionally independent. We will show that for any $\alpha=1, \dots, d-1$, the polynomials $\tr Z^k$ and $\tr S_{\beta}^k$ with $k=1, \dots, n$ and $\beta\le\alpha$ are independent. The proof is inductive. For $\alpha=1$, we want to prove that $\tr Z^k$ and $\tr S_{1}^k$, $k=1,\dots,n$, are functionally independent. We have $\tr S_1^k=\tr((\Id_n+W_1 V_1)Z)^k=\tr Z^k+k\, t_{11}^k+\ldots$, where the dots represent a polynomial in $t_{11}^l$ with $l<k$. Hence it is sufficient to show the functional independence of $\tr Z^k$ and $t_{11}^k$ with $k=1,\dots,n$. We can do this by looking at $2n\times 2n$ Jacobian matrix $J$ of derivatives of these functions with respect to $(z_1, \dots, z_n)$ and $(v_{1,1}, \dots, v_{1,n})$. This has a block structure 
$\begin{pmatrix} 
  J_0 & \ast \\ 
  0 & J_1 
\end{pmatrix}$  
where $J_0$ is the Vandermonde matrix for $z_1, \dots, z_n$, and $J_1$ has entries $\frac{\partial t_{11}^k}{\partial v_{1,i}}=w_{1,i}z_i^k$. Both $J_0, J_1$ are obviously nondegenerate for generic $z_i$ and $w_{1,i}$. This proves the $\alpha=1$ case.

The general case is similar: we form a Jacobian matrix of derivatives of $\tr Z^k$ and $\tr S_{\beta}^k$ with respect to the variables $z_i$ and $v_{\gamma,i}$. It similarly has an upper-triangular block structure, with the $n\times n$ blocks $J_0, \dots, J_{\alpha}$ along the diagonal. By induction, we only need to check that the last block $J_{\alpha}$ is non-degenerate. Its entries are $\frac{\partial \tr (S_{\alpha})^k}{\partial v_{\alpha,i}}$. To show that it is (generically) nondegenerate, we may choose $V_\beta=W_\beta=0$ for all $\beta<\alpha$, in which case $S_{\alpha}=(1+V_\alpha W_\alpha)Z$ and so this case can be analysed in the same way  
as for $\alpha=1$. This finishes the proof of the theorem.  
\end{proof}

\subsection{Explicit integration} \label{ssFlowDIS}
We begin by integrating the flows for the functions $h_k=\tr Z^k$ (and for their analogues, $\tr Y^k$). For $h_k$ the formulas are essentially the same as in \cite{RaS}.  
The main difference is that we work on a completed phase space, and that our flows are intrinsically Hamiltonian.

\begin{prop}\label{dynam}
Let $t$ denote the time flow associated to $\frac1k \tr Z^k$ for any $k \in \N^\times$. Given an initial position $(X, Z, V_\alpha, W_\alpha)$ in $\Cnqdt$, the solution at time $t$  is given by
\begin{equation}  \label{FlowRS}
X(t)=Xe^{-tZ^k}\,,\quad Z(t)=Z\,,\quad V_\alpha(t)=V_\alpha\,,\quad W_\alpha(t)=W_\alpha\,.
\end{equation}
Similarly, if $\tau$ denotes the time flow associated to $\frac1k \tr Y^k$, then the solution at time $\tau$ defined by an initial position $(X, Y, V_\alpha, W_\alpha)$ in $\Cnqd$ is given by
\begin{equation} \label{FlowH}
 X(\tau)=X e^{-\tau Y^k}+Y^{-1}(e^{-\tau Y^k}-1)\,,\quad Y(\tau)=Y\,,\quad V_\alpha(\tau)=V_\alpha\,,\quad W_\alpha(\tau)=W_\alpha\,.
 \end{equation}
These flows are complete when viewed on the corresponding Calogero--Moser spaces.  
\end{prop}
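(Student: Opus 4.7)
The plan is to reduce the Hamiltonian equations of motion on the representation spaces to single-bracket computations in $\Aalt$ (resp.\ $\Aal$) via the formula \eqref{relBrDyn}, integrate the resulting linear ODEs in closed form, and verify that the solutions preserve both the moment map equations and the ambient invertibility conditions for all complex time.

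For the flow of $H=\tfrac{1}{k}\tr Z^k$ I would first establish that $H$ Poisson commutes with each of $Z$, $V_\alpha$, and $W_\alpha$, so only $X$ evolves. By \eqref{relBrDyn} each such bracket on the representation space equals a matrix entry of the single bracket $\{z^k,-\}=m\circ\dgal{z^k,-}$ in $\Aalt$. Lemma \ref{invyz} gives $\{z^k,z\}=0$ directly. For the framing arrows, \eqref{tadideZ} yields $m\,\dgal{z,w_\alpha}=0$ and $m\,\dgal{z,v_\alpha}=0$, and these vanishings persist after passing to $z^k$ by applying the inner-bimodule derivation
\[
\dgal{z^k,b}=\sum_{i+j=k-1}z^i\cdot_i\dgal{z,b}\cdot_i z^j
\]
and summing, using that $w_\alpha\in e_0\Aalt e_\infty$ (so $w_\alpha z^j=0$ for $j\ge 1$) and similarly for $v_\alpha$. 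The only surviving bracket is with $X$: the same power rule applied to \eqref{tadidaZ} telescopes to $\{z^k,x\}=-k\,xz^k$, giving $\br{\tr Z^k,X_{ij}}=-k\,(XZ^k)_{ij}$. Combined with the Hamiltonian convention this produces a linear ODE of the form $\dot X=c\,XZ^k$ with $Z$ frozen; since $Z$ commutes with $e^{sZ^k}$ this integrates at once to the exponential formula \eqref{FlowRS}.

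The calculation for $\tfrac{1}{k}\tr Y^k$ proceeds in parallel, now using \eqref{tadidb}--\eqref{tadide}. The same argument yields $\br{\tr Y^k,Y_{ij}}=\br{\tr Y^k,V_\alpha}=\br{\tr Y^k,W_\alpha}=0$. The essential new feature is that $\dgal{x,y}$ contains the extra constant term $e_0\otimes e_0$, which on raising $y$ to the $k$-th power contributes a summand proportional to $y^{k-1}$ via the same derivation rule. The resulting single bracket is $\{y^k,x\}=-k(y^{k-1}+xy^k)$, producing an inhomogeneous linear equation $\dot X=c(Y^{k-1}+XY^k)$ on $\Cnqd$ with $Y$ constant and (generically) invertible. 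Its unique solution with initial value $X$ is precisely \eqref{FlowH}: the first summand $Xe^{-\tau Y^k}$ is the homogeneous solution, and the second summand $Y^{-1}(e^{-\tau Y^k}-1)$ is the particular solution forced by the $Y^{k-1}$ term, as a direct differentiation confirms.

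Finally, completeness of the flows on the Calogero--Moser quotients follows from the fact that only $X$ evolves, via an entire expression in the time parameter, provided the moment map and the remaining invertibility conditions are preserved. For \eqref{FlowRS}, $Z$ commutes with $e^{-tZ^k}$, so $X(t)ZX(t)^{-1}Z^{-1}=XZX^{-1}Z^{-1}$ and \eqref{lqt} holds throughout; invertibility of $X(t)$ is obvious. For \eqref{FlowH} a short algebraic manipulation yields the pleasant identities
\[
1+X(\tau)Y=(1+XY)\,e^{-\tau Y^k},\qquad 1+YX(\tau)=(1+YX)\,e^{-\tau Y^k},
\]
so the multiplicative moment map \eqref{lq} is preserved and the required invertibility of $1+XY$ and $1+YX$ is maintained for all $\tau$. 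The principal technical nuisance throughout is careful bookkeeping of the inner-bimodule structure in the first-slot derivation rule for powers of double brackets; beyond this I anticipate no serious obstacle, and the explicit exponential formulas read off directly from the ODEs.
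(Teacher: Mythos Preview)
Your argument is correct and follows essentially the same route as the paper: reduce to single brackets via \eqref{relBrDyn}, read off the ODEs from \eqref{tadidaZ}--\eqref{tadideZ} (resp.\ \eqref{tadida}--\eqref{tadide}), integrate, and check that the moment map and invertibility constraints persist. Your explicit identities $1+X(\tau)Y=(1+XY)e^{-\tau Y^k}$ and $1+YX(\tau)=(1+YX)e^{-\tau Y^k}$ are a nice addition that the paper leaves implicit.

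Two small points. First, the aside that ``$w_\alpha z^j=0$ for $j\ge1$'' is a red herring: the vanishing of $\{z^k,w_\alpha\}$ and $\{z^k,v_\alpha\}$ comes directly from the cancellation $e_0 z^{k-1}\cdot zw_\alpha - z\cdot z^{k-1}w_\alpha=0$ after applying $m$ to the inner-derivation expansion, not from any idempotent killing. Second, you assume $Y$ is invertible when solving the inhomogeneous ODE, whereas on $\Cnqd$ it need not be; the paper explicitly remarks that \eqref{FlowH} remains well defined for singular $Y$, since $Y^{-1}(e^{-\tau Y^k}-1)=-\sum_{m\ge1}\frac{(-\tau)^m}{m!}Y^{mk-1}$ is an entire function of $Y$. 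You should add this observation to close the argument for all initial data.
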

\begin{proof}
Let us write the flow corresponding to $\frac{1}{k}\tr Z^k$. Using \eqref{relBrDyn} together with the relations \eqref{tadidaZ}--\eqref{tadideZ}, one obtains the following equations:
\begin{equation*}
\dot{X}=-XZ^k\,,\quad \dot{Z}=0\,,\quad\dot{V}_\alpha=0\,,\quad \dot{W}_\alpha=0\,,
\end{equation*}
which imply \eqref{FlowRS}. 
For the flow corresponding to $\frac{1}{k}\tr Y^k$, the equations can be obtained by the same method leading to
 \begin{equation*}
\dot{X}=-XY^k-Y^{k-1}\,,\quad \dot{Y}=0\,,\quad\dot{V}_\alpha=0\,,\quad \dot{W}_\alpha=0\,,
\end{equation*}
which are integrated by \eqref{FlowH}. Note that the expression for $X(\tau)$ is well-defined even when the matrix $Y$ is singular.

The completeness of the flows is now clear, since the evolution described by \eqref{FlowRS} and \eqref{FlowH} preserves the invertibility of the factors appearing in the moment map equations \eqref{lqt} and \eqref{lq}, respectively.
\end{proof}

We can also integrate all the flows corresponding to the Hamiltonians $\tr S_\alpha^k$. Note that the matrices $S_\alpha$ represent the elements $s_\alpha$ \eqref{salpha0}. The function $\tr S_\alpha^k$ defines a vector field on the representation space of $\mathcal A^\times$ by the formula \eqref{relBrDyn}. This vector field is given explicitly as follows. 
\begin{prop} 
The vector field associated to the function $\tr S_{\alpha}^k$ is given by
\begin{equation*}
 \begin{aligned}
  & \dot{X}=-kXS_\alpha^k\,,\quad \dot{Z}=k(S_\alpha^k Z - ZS_\alpha^k)\,,\\ 
&\dot{V_\beta}= - kV_\beta S_\alpha^k\,,\quad  
\dot{W_\beta}=kS_\alpha^k W_\beta\,\quad\text{for }\beta \leq \alpha,  \\
&\dot{V_\beta}=0,\,\, 
\dot{W_\beta}=0\,\quad \text{for }\beta > \alpha.
 \end{aligned}
\end{equation*}
We also have $\dot S_\alpha=0$.
\end{prop}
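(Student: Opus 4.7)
The plan is to verify each equation by a direct application of Van den Bergh's formalism, leveraging the double brackets already computed in Lemma \ref{Lem:s}. By formula \eqref{relBrDyn}, the action of the Hamiltonian vector field of $\tr s_\alpha^k$ on any matrix entry $b_{ij}$ is $\br{s_\alpha^k,b}_{ij}$ with $\br{-,-}=m\circ\dgal{-,-}$. Hence it suffices to compute $\br{s_\alpha^k,b}$ in $\Aalt$ for $b\in\{x,z,v_\beta,w_\beta\}$ and then translate to matrices.

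The main observation is the following reduction, obtained from the Leibniz rule for the first argument of a double bracket together with the inner bimodule structure $u\ast(P\otimes Q)\ast v=Pv\otimes uQ$:
\begin{equation*}
\dgal{s_\alpha^k,b}\,=\,\sum_{i=0}^{k-1} s_\alpha^{k-1-i}\ast\dgal{s_\alpha,b}\ast s_\alpha^i\,.
\end{equation*}
For a single summand $P\otimes Q$ in $\dgal{s_\alpha,b}$, the $i$-th piece equals $Ps_\alpha^i\otimes s_\alpha^{k-1-i}Q$, which after applying $m$ yields $Ps_\alpha^{k-1}Q$, \emph{independent of $i$}. Summing over $i$ gives $kPs_\alpha^{k-1}Q$. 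So for any expression $\dgal{s_\alpha,b}=\sum_j P_j\otimes Q_j$ one has $\br{s_\alpha^k,b}=k\sum_j P_js_\alpha^{k-1}Q_j$.

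Next I would apply this recipe to each case of Lemma \ref{Lem:s}. For $b=z$, the four terms of \eqref{br:sz} produce $s_\alpha^k z$, $zs_\alpha^k$, $s_\alpha^k z$, $zs_\alpha^k$ with signs $+,-,+,-$, and the factor $\tfrac12$ cancels with the repetition, giving $\br{s_\alpha^k,z}=k(s_\alpha^kz-zs_\alpha^k)$. The case $b=x$ is analogous and gives $-kxs_\alpha^k$. For $b=v_\beta$ and $b=w_\beta$ with $\beta\le\alpha$ the two summands in \eqref{br:svw} both collapse to the same expression and combine to $-kv_\beta s_\alpha^k$ and $ks_\alpha^k w_\beta$ respectively. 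For $\beta>\alpha$, the two summands in \eqref{br:svw2} carry opposite signs but produce identical monomials upon applying the recipe, so they cancel and the bracket vanishes. Translating these five identities from $\Aalt$ to $\Rep(\Aalt,\aalpha)$ yields the displayed evolution equations for $X,Z,V_\beta,W_\beta$.

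Finally, for $\dot{S}_\alpha=0$, I would give two independent verifications. First, the derivation property of the vector field applied to the algebraic identity \eqref{salpha0} together with the computed equations gives a telescoping sum: writing $u_\gamma=1+W_\gamma V_\gamma$, one has $\dot u_\gamma=k[S_\alpha^k,u_\gamma]$ for $\gamma\le\alpha$, and combining with $\dot Z=k[S_\alpha^k,Z]$ the successive terms in $\dot S_\alpha=\sum_\gamma u_\alpha\cdots u_{\gamma+1}\dot u_\gamma u_{\gamma-1}\cdots u_1Z+u_\alpha\cdots u_1\dot Z$ collapse to $k(S_\alpha^{k+1}-S_\alpha^{k+1})=0$. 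Alternatively, from Lemma \ref{Lem:sBis} with $\alpha=\beta$ one has $\dgal{s_\alpha,s_\alpha}=\tfrac12(e_0\otimes s_\alpha^2-s_\alpha^2\otimes e_0)$; the recipe then gives $\br{s_\alpha^k,s_\alpha}=\tfrac12 k(s_\alpha^{k+1}-s_\alpha^{k+1})=0$. No genuine obstacle is expected: the entire argument is a structured but mechanical bookkeeping of Lemma \ref{Lem:s} through the Leibniz rule, and the uniform collapse $Ps_\alpha^i\cdot s_\alpha^{k-1-i}Q=Ps_\alpha^{k-1}Q$ is what makes all the case-by-case computations identical in spirit.
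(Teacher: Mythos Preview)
Your proposal is correct and follows essentially the same approach as the paper: the paper's proof merely cites \eqref{relBrDyn} together with Lemma~\ref{Lem:s} for the equations on $X,Z,V_\beta,W_\beta$, and \eqref{relBrDyn} with Lemma~\ref{Lem:sBis} for $\dot S_\alpha=0$, which is precisely what you carry out in detail. Your explicit ``collapse'' observation $m\big(s_\alpha^{k-1-i}\ast(P\otimes Q)\ast s_\alpha^{i}\big)=Ps_\alpha^{k-1}Q$ is the routine step the paper leaves implicit, and your second argument for $\dot S_\alpha=0$ (via $\dgal{s_\alpha,s_\alpha}$) is exactly the one the paper uses, while your telescoping argument is a correct but unnecessary alternative.
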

\begin{proof}
The first group of relations is obtained by using \eqref{relBrDyn} together with the relations \eqref{br:sz}--\eqref{br:svw2}. The fact that $\dot S_\alpha=0$ follows from  \eqref{relBrDyn} and Lemma \ref{Lem:sBis}. 
\end{proof}

The following theorem is an immediate corollary.

\begin{thm}\label{dynamS}
Let $t$ denote the time flow associated to $\frac1k \tr S_\alpha^k$. Given an initial position $(X, Z, V_\alpha, W_\alpha)$ in $\Cnqdt$, the solution at time $t$  is given by
\begin{align*} 
X(t)&=Xe^{-tS_\alpha^k}\,,\quad Z(t)=e^{tS_\alpha^k}Ze^{-tS_\alpha^k}\,,\\
V_\beta(t)&=V_\beta e^{-tS_\alpha^k}\,,\quad W_\beta(t)=e^{tS_\alpha^k} W_\beta\quad\text{for}\ \beta\le\alpha\,,\\
V_\beta(t)&=V_\beta\,,\quad W_\beta(t)=W_\beta\quad\text{for}\ \beta>\alpha\,.
\end{align*} 
The flow is complete on $\Cnqdt$. 
\end{thm}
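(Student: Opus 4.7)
The plan is to verify directly that the proposed formulas satisfy the equations of motion given in the preceding proposition (after rescaling the Hamiltonian by $1/k$), and then to confirm that the flow preserves the moment map equation \eqref{lqt1} together with all the invertibility conditions \eqref{invert}, which is what ensures completeness on $\Cnqdt$.

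First, I would exploit the fact, already recorded in the previous proposition, that $\dot S_\alpha = 0$; hence $S_\alpha$, and therefore $S_\alpha^k$, is constant in $t$. After rescaling, the equations of motion reduce to $\dot X = -X S_\alpha^k$, $\dot Z = S_\alpha^k Z - Z S_\alpha^k$, $\dot V_\beta = -V_\beta S_\alpha^k$ and $\dot W_\beta = S_\alpha^k W_\beta$ for $\beta \le \alpha$, together with trivial evolution for $\beta > \alpha$. These are linear ODEs with constant coefficient $S_\alpha^k$, which the proposed formulas solve by inspection. As a consistency check, one must verify that $S_\alpha$ computed from the evolved data is indeed constant; this follows because each factor with $\beta \le \alpha$ conjugates as $\Id + W_\beta(t)V_\beta(t) = e^{t S_\alpha^k}(\Id + W_\beta V_\beta) e^{-t S_\alpha^k}$, the adjacent conjugations in the product telescope, $Z(t) = e^{t S_\alpha^k} Z e^{-t S_\alpha^k}$, and $S_\alpha^k$ commutes with $S_\alpha$.

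Next I would check that the flow lies in $\MMt$ by showing that the moment map relation $XZX^{-1} = qS_d$ is preserved. A direct computation gives $X(t) Z(t) X(t)^{-1} = X Z X^{-1}$, since the factors $e^{\pm t S_\alpha^k}$ cancel. On the other hand, $S_d(t) = (\Id + W_d V_d)\cdots(\Id + W_{\alpha+1} V_{\alpha+1})\, S_\alpha(t)$, where the leading factors are frozen and $S_\alpha$ is constant; thus $S_d(t) = S_d$. Hence \eqref{lqt1} is preserved, and the flow descends to $\Cnqdt$ via the free $\Gl_n$-action.

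Finally, for completeness on $\Cnqdt$, I would note that $X(t)$ and $Z(t)$ remain invertible because $e^{\pm t S_\alpha^k}$ is always invertible, and each intermediate matrix $Z + A_1 B_1 + \dots + A_\beta B_\beta$ is either a conjugate of its initial value (when $\beta \le \alpha$) or a product of fixed invertible factors with a constant $S_\alpha$ (when $\beta > \alpha$); in either case invertibility is preserved for all $t \in \mathbb{C}$. The only mild bookkeeping obstacle is keeping track of the asymmetry between the two ranges $\beta \le \alpha$ and $\beta > \alpha$, but no substantive difficulty arises.
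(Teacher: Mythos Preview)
Your proposal is correct and follows exactly the approach the paper takes: the theorem is stated there as an immediate corollary of the preceding proposition, since once $\dot S_\alpha=0$ is known the equations become linear with constant coefficients and integrate to the displayed exponentials. Your additional verification that the moment map equation and the invertibility conditions \eqref{invert} are preserved is precisely the completeness argument the paper uses for the analogous flows in Proposition~\ref{dynam}, spelled out in more detail than the paper itself provides here.
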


\begin{rem}
  A result similar to Theorem \ref{dynamS} can be obtained for $Y$ instead of $Z$ if we consider the analogue of Lemmas \ref{Lem:s} and \ref{Lem:sBis} in that case.  
\end{rem}

\begin{rem}
One can enlarge the Gelfand--Tsetlin subalgebra $\mathcal H_d$ by adding the elements $T_{\alpha\alpha}^0$, $\alpha=1, \dots, d$. We already know that $\{T_{\alpha\alpha}^0, T_{\beta\beta}^0\}=0$, see Remark \ref{sc}. To see that each of $T_{\beta\beta}^0$ Poisson commutes with $\mathcal H_d$, we check that $\{t_{\beta\beta}^0, \tr S_\alpha^k\}=0$ for all $\alpha$ and $k$ and then use \eqref{krn}. Since $t_{\beta\beta}^0=V_\beta W_\beta$, the fact that $\{t_{\beta\beta}^0, \tr S_\alpha^k\}=0$ is immediate from Theorem \ref{dynamS}.     
\end{rem}

\begin{rem}
The phase space of the (real) trigonometric RS system can be obtained from the moduli space of  flat $SU(n)$-connections on a torus with one puncture \cite{GN}, see also \cite{FockRosly}. Building on this relation, it is remarked in \cite{FGNR} that the self-duality of this system could be seen as a manifestation of a natural action of the mapping class group of the punctured torus.  
A proof of this statement (in the framework of {\it finite-dimensional} quasi-Hamiltonian reduction) can be found in \cite{FK12}. Similarly, the spin system can be linked to the moduli space of flat connections on a torus with several punctures. Indeed, by fixing the values of the first integrals $t_{\beta\beta}^0=V_\beta W_\beta$, one fixes the conjugacy classes of the matrices $\Id_n+W_\beta V_\beta$. The corresponding subvariety of $\Cnqdt$ can then be interpreted as a character variety of the torus with $d$ punctures. Therefore, the quasi-Hamiltonian reduction that leads to $\Cnqdt$ should be compatible with a natural action of the mapping class group of a torus with $d$ punctures. We will return to this question elsewhere.     
\end{rem}

\subsection{Lax matrix with spectral parameter} \label{ss:spec} 
Another approach to the integrability of the spin RS system uses a Lax matrix with spectral parameter \cite{KrZ}.  In our context, such a Lax matrix is given by 
\begin{equation*}
{Z}_\eta=Z+\eta S\,,\quad \text{where}\ S=(\Id_n+W_d V_d)\ldots (\Id_n+W_1 V_1)Z\,.
\end{equation*}
Here $\eta\in\CC$ is the spectral parameter. Note that $S=q^{-1}XZX^{-1}$ due to the moment map equation, thus $Z_\eta$ can be written entirely in terms of $X,Z$.
To see the connection with \cite{KrZ}, we use \eqref{salpharep} to rewrite $S$ as $S=Z+\sum_\alpha A_\alpha B_\alpha$. Then $Z_\eta$ takes the form
\begin{equation*}
Z_\eta=(1+\eta)Z+\eta\sum_{\alpha} A_\alpha B_\alpha\,.
\end{equation*}
If $Z$ has the form as in \eqref{Tadiffeo}, $Z_\eta$ can be easily identified with the trigonometric Lax matrix from \cite{KrZ}. 

The following result is proved in \ref{Ann:Spectral}. 
\begin{thm} \label{Thm:TadInv}
For any $\mu,\eta \in \CC$ and $k,l \in \N$, we have that $\br{\tr Z_\mu^k, \tr Z_\eta^l}=0$. 
\end{thm}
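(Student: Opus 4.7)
The plan is to compute $\dgal{z_\mu, z_\eta}$ in closed form in $\Aalt$ and show that its structure forces the trace bracket of any powers to vanish via a short telescoping argument, much in the spirit of \cite[Lemma A.3]{CF} used to prove Lemma~\ref{invyz}. Writing $z_\mu = s_0 + \mu s_d$ with $s_0 = z$, I would first apply bilinearity together with Lemma~\ref{Lem:sBis} (extended to $\alpha, \beta \in \{0, d\}$ using $s_0 = z$, as noted in the paper) to expand
\begin{equation*}
\dgal{z_\mu, z_\eta} = \dgal{s_0, s_0} + \eta\,\dgal{s_0, s_d} + \mu\,\dgal{s_d, s_0} + \mu\eta\,\dgal{s_d, s_d}.
\end{equation*}
Grouping the terms by which tensor factor carries $e_0$, the sums $z^2 + \eta zs_d + \mu s_dz + \mu\eta s_d^2$ and $z^2 + \eta s_dz + \mu zs_d + \mu\eta s_d^2$ factor as $z_\mu z_\eta$ and $z_\eta z_\mu$ respectively, while the tail contributions $\tfrac12(s_d \otimes z - z \otimes s_d)$ from $\dgal{s_0, s_d}$ and $\dgal{s_d, s_0}$ combine with weight $\tfrac{\mu+\eta}{2}$. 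Substituting $s_d = (z_\eta - z_\mu)/(\eta - \mu)$ and $z = (\eta z_\mu - \mu z_\eta)/(\eta - \mu)$ and simplifying gives, for $\mu \neq \eta$,
\begin{equation*}
\dgal{z_\mu, z_\eta} = \tfrac12 e_0 \otimes z_\mu z_\eta - \tfrac12 z_\eta z_\mu \otimes e_0 + \kappa(z_\eta \otimes z_\mu - z_\mu \otimes z_\eta), \qquad \kappa = \tfrac{\mu+\eta}{2(\eta-\mu)},
\end{equation*}
so that $\dgal{z_\mu, z_\eta}$ lies entirely in the subalgebra of $\Aalt \otimes \Aalt$ generated by $e_0$, $z_\mu$, $z_\eta$.

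Next, setting $a = z_\mu$, $b = z_\eta$, $P = \dgal{a, b}$, I would apply the Van den Bergh derivation rules (outer bimodule on the first argument, standard on the second). After applying the multiplication map $m$, the inner sum over $i$ in $\dgal{a^k, b}$ collapses to a factor $k$, yielding
\begin{equation*}
m(\dgal{a^k, b^l}) = k \sum_{j=0}^{l-1} \sum_r b^j\, P'_r\, a^{k-1}\, P''_r\, b^{l-1-j}.
\end{equation*}
The four summands of $P$ split into two telescoping pairs: the terms $\tfrac12 e_0 \otimes ab$ and $-\tfrac12 ba \otimes e_0$ together contribute $\tfrac{k}{2}[a^k, b^l]$, while $\kappa b \otimes a$ and $-\kappa a \otimes b$ together contribute $-k\kappa [a^k, b^l]$. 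Hence $m(\dgal{z_\mu^k, z_\eta^l}) = k(\tfrac12 - \kappa)[z_\mu^k, z_\eta^l] = \tfrac{k\mu}{\mu-\eta}[z_\mu^k, z_\eta^l]$.

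Using \eqref{relInv} and the fact that the trace of a commutator vanishes in any matrix representation, this gives $\br{\tr Z_\mu^k, \tr Z_\eta^l} = 0$ for all $\mu \neq \eta$. Since $\br{\tr Z_\mu^k, \tr Z_\eta^l}$ is a polynomial in $(\mu, \eta)$, the identity extends to $\mu = \eta$ as well. The main hurdle in this approach will be the initial rearrangement of $\dgal{z_\mu, z_\eta}$ into the closed form above: the raw terms involve the non-commuting products $zs_d$ and $s_dz$, and one has to verify carefully that they assemble into precisely the pair $(z_\mu z_\eta, z_\eta z_\mu)$. Once this structural observation is in place, the remaining telescoping calculation and trace argument are routine.
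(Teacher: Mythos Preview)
Your proof is correct and follows essentially the same route as the paper: compute $\dgal{z_\mu,z_\eta}$ from the known brackets $\dgal{z,z}$, $\dgal{s_d,z}$, $\dgal{s_d,s_d}$, put it into a form built from $e_0$, $z_\mu$, $z_\eta$, and then show that $m\circ\dgal{z_\mu^k,z_\eta^l}$ is a commutator. The only cosmetic difference is where the substitution happens. The paper leaves the tail in the mixed form $\tfrac12(z_\mu\otimes z-z\otimes z_\mu)+\tfrac12(z_\eta\otimes z-z\otimes z_\eta)$, applies the mod-commutators identity $\tfrac{1}{kl}\br{z_\mu^k,z_\eta^l}=P'z_\mu^{k-1}P''z_\eta^{l-1}$, and only then substitutes $z=\tfrac{1}{\mu-\eta}(\mu z_\eta-\eta z_\mu)$; this has the advantage that the case $\mu=\eta$ is visibly zero without any continuity argument. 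You instead eliminate $z$ and $s_d$ immediately, obtaining the clean two-variable form with the explicit constant $\kappa$, and then your telescoping sum is very transparent; the price is needing the polynomiality remark to cover $\mu=\eta$. Both arrangements are fine and the underlying computation is the same.
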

This implies that if we expand $\tr Z_\eta^k$ into a series in $\eta$, $\tr Z_\eta^k=\sum_{i=0}^k\eta^i r_{k,i}$, then $\{r_{k,i}, r_{l,j}\}=0$ for all $k,l,i,j$. In this way we recover the recipe for constructing first integrals from \cite{KrZ}. We remark that $r_{k,0}=\tr Z^k$, while each $r_{k,i}$ for $i>0$ can be rewritten as a product of $t_{\alpha\beta}^k$, i.e. they belong to the algebra $\mathcal Q$ of the first integrals considered above. 

Note that the integrals $r_{k,i}$ are not sufficient to construct a completely integrable system on $\Cnqdt$. Indeed, they all are functions of $X,Z$ and so do not distinguish points of $\Cnqdt$ that have the same $X,Z$ but different $V_\alpha, W_\alpha$. Assuming $d\le n$, it follows from the results of \cite{KrZ, Kr} that the maximum number of independent Poisson commuting Hamiltonians that can be obtained from the $r_{k,i}$ is $nd-d(d-1)/2$, which is strictly less than $nd$.  

We can use $r_{k,i}$ to construct a different commutative subalgebra in $\Qd$ compared to the subalgebra $\mathcal H_d$ of Gelfand--Tsetlin type constructed above. Namely, just replace all $t_{\alpha\beta}^k$ by $T_{\alpha\beta}^k$ in the expression for $r_{k,i}$. Denote the resulting subalgebra as $\mathcal R_d$. We do not know whether it can be enlarged to a bigger commutative subalgebra of $\Qd$ which would produce a completely integrable system on $\Cnqdt$. We only note that even if this is possible, the resulting integrable system will be different from the one constructed from the subalgebra $\mathcal H_d$. To see this, it suffices to check that there are elements in $\mathcal H_d$ and $\mathcal R_d$ that do not commute. One can check, for instance, that $\{t_{11}^1, r_{k,1}\}\ne0$ in general.


\appendix 


\section{ }\label{Ann:tadpole}

\subsection{Computations with double brackets} \label{Ann:Brack} 

We gather some results that we need when performing computations with double brackets in the other appendices. 

Firstly, we have noted that if $(\Aal,\dgal{-,-})$ is a double quasi-Poisson algebra,  its double quasi-Poisson bracket satisfies the cyclic antisymmetry rule  $\dgal{b,a}=-\dgal{a,b}^\circ$ and the derivation property $\dgal{a,bc}=b\dgal{a,c}+\dgal{a,b}c$, i.e. $\dgal{a,bc}=b\dgal{a,c}' \otimes \dgal{a,c}''+\dgal{a,b}' \otimes \dgal{a,b}'' c$ using Sweedler's notation. (This is true for the less restrictive assumption that $\Aal$ has a double bracket.)  There is a similar derivation property in the first argument for the \emph{inner} bimodule structure $\ast$, see \cite[(2.4)]{VdB1}, which gives $\dgal{bc,a}=\dgal{b,a}\ast c+b \ast \dgal{c,a}$ or more explicitly, $\dgal{bc,a}=\dgal{b,a}' c \otimes \dgal{b,a}'' +\dgal{c,a}' \otimes b \dgal{c,a}''$.   

Secondly, note that from the above properties we get that for any $a,b \in \Aal$ where $a$ has an inverse $a^{-1}$ 
\begin{equation} \label{BrInv}
  \dgal{b,a^{-1}}=-a^{-1}\dgal{b,a}a^{-1}\,,\quad 
\dgal{a^{-1},b}=- a^{-1} \ast \dgal{a,b} \ast a^{-1}. 
\end{equation}
 
Thirdly, if $a=a_1 \ldots a_k$ and $a'=a_1' \ldots a_l'$ are two elements of $\Aal$ written in terms of generators, then the double bracket between them is given by 
\begin{equation*}
\begin{aligned}
    \dgal{a, a'}=&\sum_{s=1}^k\sum_{t=1}^l (a_1 \ldots a_{s-1}) \ast (a_1' \ldots a_{t-1}') 
\dgal{a_s,a_t'} (a_{t+1} \ldots a_{l}) \ast (a_{s+1} \ldots a_{k}) \\
=&\sum_{s=1}^k\sum_{t=1}^l  (a_1' \ldots a_{t-1}') \dgal{a_s,a_t'}' (a_{s+1} \ldots a_{k}) \otimes (a_1 \ldots a_{s-1}) \dgal{a_s,a_t'}'' (a_{t+1} \ldots a_{l}) \,,
\end{aligned}
\end{equation*}
using the derivation properties as above.

\subsection{Proof of Lemma \ref{Lem:s}} \label{Ann:br1}

We show the claim by descending induction, starting from $\alpha=d$. So, the first step is to show that 
\begin{subequations}
  \begin{align}
\dgal{s_d,z}=&\frac12 (s_d \otimes z - zs_d \otimes e_0 + e_0 \otimes s_dz - z \otimes s_d) \label{br:szd}\\
\dgal{s_d,x}=&\frac12 (s_d\otimes x - xs_d \otimes e_0 - e_0 \otimes s_d x - x \otimes s_d) \label{br:sxd} \\
\dgal{s_d,v_\beta}=& -\frac12 (v_\beta s_d \otimes e_0 + v_\beta \otimes s_d) \quad 
\dgal{s_d,w_\beta}=\frac12 (s_d \otimes w_\beta + e_0 \otimes s_d w_\beta)\,. \label{br:svwd}
  \end{align}
\end{subequations}
To compute such double brackets, we use the relation $s_d=(\Phi_0)^{-1}\phi z$ and obtain  
\begin{equation}\label{terms}
  \dgal{s_d,a}=\dgal{\Phi_0^{-1},a}\ast \phi z+ \Phi_0^{-1}\ast \dgal{\phi,a}\ast z + \Phi_0^{-1}\phi \ast \dgal{z,a}\,.
\end{equation}
The first term can be calculated with the help of \eqref{Phim}:
\begin{equation*}
  \dgal{\Phi_0^{-1},a}=-\Phi_0^{-1}\ast  \dgal{\Phi_0,a} \ast  \Phi_0^{-1}=
-\frac12 (a \Phi_0^{-1} \otimes e_0- \Phi_0^{-1} \otimes e_0 a + a e_0 \otimes \Phi_0^{-1} - e_0 \otimes \Phi_0^{-1} a)\,.
\end{equation*}
(Note that in the case when $a=v_\beta$ or $a=w_\beta$, some of the terms in this expression vanish due to $e_0 v_\beta=w_\beta e_0=0$.) 
The second term in \eqref{terms} is calculated using \eqref{Phim2}--\eqref{Phim3}, while for the third term we use \eqref{tadidaZ}--\eqref{tadideZ}. Doing this for each of the cases $a=x, z, v_\beta, w_\beta$ verifies \eqref{br:szd}--\eqref{br:svwd}. We leave the details to the reader.

\medskip

For the induction step, recall that  $s_\alpha=u_{\alpha+1} s_{\alpha+1}$ for $u_{\alpha}=(1+w_{\alpha}v_{\alpha})^{-1}$. Therefore,
\begin{equation*}
  \dgal{s_\alpha,a}=\dgal{u_{\alpha+1},a} \ast s_{\alpha+1}+ u_{\alpha+1}\ast \dgal{s_{\alpha+1},a}\,.
\end{equation*}
The second term is given by the induction hypothesis, while we can find the first term using the easily verified formulas
\begin{equation*}
  \begin{aligned}
\dgal{u_{\alpha+1},z}=&
\frac12 (u_{\alpha+1} \otimes z - z u_{\alpha+1} \otimes e_0 - e_0 \otimes u_{\alpha+1} z + z \otimes u_{\alpha+1})\,,\\
\dgal{u_{\alpha+1},x}=&
\frac12 (u_{\alpha+1} \otimes x - x u_{\alpha+1} \otimes e_0 - e_0 \otimes u_{\alpha+1} x + x \otimes u_{\alpha+1})\,, \\
\dgal{u_{\alpha+1},v_\beta}=&\frac12 \delta_{(\alpha+1, \beta)}(v_\beta u_{\alpha+1} \otimes e_0 + v_\beta \otimes u_{\alpha+1}) + \frac12 o(\alpha+1,\beta) (v_\beta u_{\alpha+1} \otimes e_0 - v_\beta \otimes u_{\alpha+1})\,, \\
\dgal{u_{\alpha+1},w_\beta}=&-\frac12 \delta_{(\alpha+1, \beta)}(e_0 \otimes _{\alpha+1} w_\beta + u_{\alpha+1} \otimes w_\beta) + \frac12 o(\alpha+1,\beta) (e_0 \otimes u_{\alpha+1} w_\beta - u_{\alpha+1} \otimes w_\beta)\,. 
  \end{aligned}
\end{equation*}

In order to prove \eqref{br:sz}--\eqref{br:svw2}, we need to consider the cases $a=x, z, v_\beta, w_\beta$. 
We will do the case $a=w_\beta$, leaving the other cases to the reader. 

First, if $\beta \leq \alpha$, we can use \eqref{br:svw} and since $o(\alpha+1,\beta)=-1$ we get  
\begin{equation*}
\begin{aligned}
  \dgal{s_\alpha,w_\beta}=&-\frac12 (s_{\alpha+1} \otimes u_{\alpha+1} w_\beta - u_{\alpha+1}s_{\alpha+1} \otimes w_\beta) 
+\frac12 (s_{\alpha+1} \otimes u_{\alpha+1} w_\beta + e_0 \otimes u_{\alpha+1}s_{\alpha+1} w_\beta) \\
=&\frac12 (e_0 \otimes s_{\alpha} w_\beta +s_\alpha \otimes w_\beta)\,.
\end{aligned}
\end{equation*}
Next, if $\beta=\alpha+1$, we still use \eqref{br:svw} and find 
\begin{equation*}
\begin{aligned}
  \dgal{s_\alpha,w_{\alpha+1}}=&-\frac12 (s_{\alpha+1} \otimes u_{\alpha+1} w_\beta + u_{\alpha+1}s_{\alpha+1} \otimes w_\beta)
+\frac12 (s_{\alpha+1} \otimes u_{\alpha+1} w_{\alpha+1} + e_0 \otimes u_{\alpha+1}s_{\alpha+1} w_{\alpha+1}) \\
=&\frac12 (e_0 \otimes s_{\alpha} w_{\alpha+1} - s_\alpha \otimes w_{\alpha+1})\,.
\end{aligned}
\end{equation*}
Finally, if $\beta > \alpha+1$, we need \eqref{br:svw2} and since $o(\alpha+1,\beta)=+1$ we get  
\begin{equation*}
\begin{aligned}
  \dgal{s_\alpha,w_{\alpha+1}}=&\frac12  (s_{\alpha+1} \otimes u_{\alpha+1} w_\beta - u_{\alpha+1}s_{\alpha+1}  \otimes w_\beta)
+ \frac12 (e_0 \otimes u_{\alpha+1} s_{\alpha+1} w_\beta - s_{\alpha+1} \otimes u_{\alpha+1} w_\beta ) \\
=& \frac12  (e_0 \otimes s_{\alpha} w_\beta - s_{\alpha}  \otimes w_\beta)\,. &\quad \qed
\end{aligned}
\end{equation*}

\subsection{Proof of Lemma \ref{Lem:sBis}} \label{Ann:brBis}

Using the cyclic antisymmetry of the double bracket, we only need to show that for any $\alpha \geq \beta$,  
 \begin{equation} \label{br:ssAll}
   \dgal{s_\alpha,s_\beta}= \frac12 (e_0 \otimes s_\alpha s_\beta - s_\beta s_\alpha \otimes e_0 
+ s_\alpha \otimes s_\beta -  s_\beta \otimes s_\alpha) \,.
 \end{equation}
Using the elements $r_\gamma=1+w_\gamma v_\gamma$ and the definition of $s_\beta$ as \eqref{salpha0}, we can write $s_\beta=r_\beta \ldots r_1 z$. Therefore, we find that 
 \begin{equation} \label{ss:ab}
   \dgal{s_\alpha,s_\beta}=r_\beta\dots r_1 \dgal{s_\alpha,z} + \sum_{\gamma=1}^\beta r_\beta\dots r_{\gamma+1}
 \dgal{s_\alpha,r_\gamma}r_{\gamma-1} \dots r_1 z\,.
 \end{equation}
To find the double brackets $\dgal{s_\alpha,r_\gamma}$, we use \eqref{br:svw} and we obtain  
 \begin{equation*} 
   \dgal{s_\alpha,r_\gamma}=\frac12 (e_0 \otimes s_\alpha r_\gamma - r_\gamma \otimes s_\alpha 
s_\alpha \otimes r_\gamma  - r_\gamma s_\alpha \otimes e_0 )\,, \quad 1\leq \gamma \leq \alpha\,.
 \end{equation*}
For any $\beta \leq \alpha$, it remains to  substitute these double brackets  together with \eqref{br:sz} back in \eqref{ss:ab}, and we obtain \eqref{br:ssAll} after simplification. \qed

\subsection{Proof of Proposition \ref{Pr:TadPoi}} \label{Ann:C1} 

Recall that $f_k:= \tr(X^k)$ and $g^k_{\gamma \epsilon}=\tr(\As^\gamma \Bs^\epsilon  X^k)$. 
First, we need to compute the Poisson brackets between those functions. 
We have remarked in \ref{ss:Repr} that the Poisson bracket $\br{-,-}$ on $\Cnqd$ is (globally) defined from the corresponding Lie bracket $\br{-,-}$ on $\Lambda^q/[\Lambda^q,\Lambda^q]$ by \eqref{relInv}. In fact, it is sufficient to compute that bracket in $\Aal/[\Aal,\Aal]$, then projects into $\Lambda^q/[\Lambda^q,\Lambda^q]$. Assuming that $x$ is invertible, the same holds in $\Aalt$.  Therefore, we need the following lemma.  
\begin{lem} \label{Lodayxac}
 For any $k,l\geq 1$, the following identities hold in $\Aalt/[\Aalt,\Aalt]$
\begin{subequations}
 \begin{align}
  \br{x^k,x^l}=&0\,, \quad 
\br{x^k,a_\alpha b_\beta x^l}=k\, a_\alpha b_\beta x^{k+l}\,, \label{xxacx}\\
\br{a_\gamma b_\epsilon x^k,a_\alpha b_\beta x^l}=&
\frac12 \left(\sum_{r=1}^k-\sum_{r=1}^l \right) 
\left(a_\alpha b_\beta x^r a_\gamma b_\epsilon x^{k+l-r}
+a_\alpha b_\beta x^{k+l-r} a_\gamma b_\epsilon x^{r} \right) \nonumber \\
&+\frac12 o(\alpha,\gamma) (a_\gamma b_\epsilon x^k a_\alpha b_\beta x^l 
+a_\alpha b_\epsilon x^k a_\gamma b_\beta x^l) \nonumber \\
&+\frac12 o(\epsilon,\beta) (a_\alpha b_\beta x^k a_\gamma b_\epsilon x^l 
-a_\alpha b_\epsilon x^k a_\gamma b_\beta x^l) \nonumber \\
&+\frac12 [o(\epsilon,\alpha)+\delta_{\alpha \epsilon}]\,a_\alpha b_\epsilon x^k a_\gamma b_\beta x^l 
-\frac12 [o(\beta,\gamma)+\delta_{\beta \gamma}]\,a_\alpha b_\epsilon x^k a_\gamma b_\beta x^l \nonumber \\
&+\delta_{\alpha \epsilon} \left(zx^k+\sum_{\lambda=1}^{\epsilon-1} a_\lambda b_\lambda x^k \right)
a_\gamma b_\beta x^l - \delta_{\beta \gamma}\,\, a_\alpha b_\epsilon x^k
\left(zx^l + \sum_{\mu=1}^{\beta-1} a_\mu b_\mu x^l \right)\,. \label{acxacx}
 \end{align}
\end{subequations}
\end{lem}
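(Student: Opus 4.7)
\textbf{Proof plan for Lemma \ref{Lodayxac}.} The plan is to compute the relevant double brackets in $\Aalt$, then pass to the Loday bracket $\br{-,-}$ on $\Aalt/[\Aalt,\Aalt]$ by applying the multiplication $m(u\otimes v)=uv$ and projecting modulo commutators. In the quotient we may freely cycle products, which kills most of the terms that Leibniz expansions produce, and this is what makes the final formula tractable.

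For the first identity $\br{x^k,x^l}=0$, note that $\dgal{x,x}=\frac{1}{2}(x^2\otimes e_0 - e_0\otimes x^2)$ by \eqref{tadida}, so the same argument used to prove Lemma \ref{invyz} (iterated Leibniz together with \cite[Lemma A.3]{CF}) shows that every summand of $\dgal{x^k,x^l}$ multiplies to a commutator. For the second identity, expand via Leibniz in the second argument:
$$\dgal{x^k, a_\alpha b_\beta x^l} = \dgal{x^k, a_\alpha}\, b_\beta x^l + a_\alpha \dgal{x^k, b_\beta}\, x^l + a_\alpha b_\beta \dgal{x^k,x^l}.$$
The last term projects to zero. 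The brackets $\dgal{x^k,a_\alpha}$ and $\dgal{x^k,b_\beta}$ are obtained by induction on $k$ from $\dgal{x,w_\alpha}$ in \eqref{tadidd} and $\dgal{x,b_\alpha}$ in \eqref{tadSpin1}, using Leibniz on the first argument. After multiplying and cycling in $\Aalt/[\Aalt,\Aalt]$, the inductive telescoping collapses to $k\, a_\alpha b_\beta x^{k+l}$.

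The main step is the third identity. Apply Leibniz on both arguments of $\dgal{a_\gamma b_\epsilon x^k,\, a_\alpha b_\beta x^l}$, using the inner-bimodule rule $\dgal{uv,-}=u*\dgal{v,-}+\dgal{u,-}*v$ on the left and the outer-bimodule rule $\dgal{-,uv}=u\dgal{-,v}+\dgal{-,u}v$ on the right. The expansion is governed by the elementary double brackets $\dgal{a_\gamma,a_\alpha}$, $\dgal{a_\gamma,b_\beta}$, $\dgal{b_\epsilon,a_\alpha}$, $\dgal{b_\epsilon,b_\beta}$, together with the mixed brackets involving $x^k$ and $x^l$ computed as above. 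Their contributions organise as follows. The bracket $\dgal{a_\gamma,a_\alpha}$ from \eqref{tadidw} supplies the $o(\alpha,\gamma)$ term, while $\dgal{b_\epsilon,b_\beta}$ from \eqref{tadSpin3} supplies the $o(\epsilon,\beta)$ term. The brackets $\dgal{a_\gamma,b_\beta}$ from \eqref{tadSpin2} and $\dgal{b_\epsilon,a_\alpha}$ (obtained from it by cyclic antisymmetry) supply the terms with $\tfrac{1}{2}[o(\beta,\gamma)+\delta_{\beta\gamma}]$ and $\tfrac{1}{2}[o(\epsilon,\alpha)+\delta_{\alpha\epsilon}]$; crucially, the $-\delta_{\alpha\beta}e_\infty\otimes s_{\beta-1}$ piece of \eqref{tadSpin2}, unpacked via \eqref{salpha}, produces the full $s_{\epsilon-1}x^k=zx^k+\sum_{\lambda<\epsilon}a_\lambda b_\lambda x^k$ and $s_{\beta-1}x^l$ contributions. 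Finally, the $x$-mixed brackets \eqref{tadidd} and \eqref{tadSpin1} combine, with each factor of $x$ in $x^k$ and $x^l$ contributing one summand, to produce the two telescoping sums $\sum_{r=1}^k$ and $\sum_{r=1}^l$ with opposite signs dictated by whether the active $x$ sits on the first or the second side of the bracket. After multiplying out and reducing modulo commutators by cyclic permutation, the surviving monomials assemble into \eqref{acxacx}.

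The main obstacle is the combinatorial bookkeeping in the third step: the Leibniz expansion produces $O(k+l)$ summands from $x^k$ and $x^l$, and in the quotient $\Aalt/[\Aalt,\Aalt]$ many of these become equal after cycling and must be paired off carefully. Particular care is needed with the signs arising from the cyclic antisymmetry $\dgal{a,b}=-\dgal{b,a}^\circ$ (which flips $o(\alpha,\beta)$) and with extracting the explicit $z+\sum_\lambda a_\lambda b_\lambda$ form of the $s_{\beta-1}$-piece in $\dgal{a_\alpha,b_\beta}$, since that is what produces the asymmetric last line of \eqref{acxacx}.
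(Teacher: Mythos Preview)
Your plan is correct and is precisely the direct computation one would carry out: expand $\dgal{a_\gamma b_\epsilon x^k,a_\alpha b_\beta x^l}$ by the inner/outer Leibniz rules, feed in the elementary brackets \eqref{tadida}, \eqref{tadidd}, \eqref{tadidw}, \eqref{tadSpin1}--\eqref{tadSpin3}, apply $m$, and reduce modulo commutators. Your attribution of terms is accurate: the $o(\alpha,\gamma)$ and $o(\epsilon,\beta)$ lines come from $\dgal{a_\gamma,a_\alpha}$ and $\dgal{b_\epsilon,b_\beta}$; the $\tfrac12[o(\epsilon,\alpha)+\delta_{\alpha\epsilon}]$ and $\tfrac12[o(\beta,\gamma)+\delta_{\beta\gamma}]$ lines, together with the $s_{\epsilon-1}$, $s_{\beta-1}$ contributions, come from $\dgal{b_\epsilon,a_\alpha}$ and $\dgal{a_\gamma,b_\beta}$ via \eqref{tadSpin2} and \eqref{salpha}; and the $\big(\sum_{r=1}^k-\sum_{r=1}^l\big)$ block arises from the cross-brackets of $x^k$, $x^l$ against $a_\alpha b_\beta$, $a_\gamma b_\epsilon$ after cycling.

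The paper itself does not write out this computation: its entire proof is the sentence ``The proof can be seen as a special case of \cite[Lemma 3.2]{F}.'' So you are supplying what the paper outsources. The only practical remark is that the $-\tfrac12 b_\beta a_\gamma\otimes e_0$ piece of \eqref{tadSpin2} (and its $\circ$-partner from $\dgal{b_\epsilon,a_\alpha}$) does not carry an $o$- or $\delta$-factor and must be absorbed into the other lines after cycling; keep track of it separately or you will miscount by a term $\pm\tfrac12 a_\alpha b_\epsilon x^k a_\gamma b_\beta x^l$.
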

The proof can be seen as a special case of \cite[Lemma 3.2]{F}. By taking the traces and using the identity \eqref{relInv}, 
we obtain the Poisson brackets between the functions $(f_k, g^{k}_{\alpha \beta})$. 
To write them in terms of $f^k, g^k_{\alpha\beta}$ and $h_{\gamma \epsilon}^{k,l}=\tr(\As_\gamma \Bs_\epsilon X^k Z X^l)$, we use
\begin{equation*}
  \tr (A_\alpha B_\beta X^{k} A_\gamma B_\epsilon X^{l})=
(\Bs_\beta X^{k} \As_\gamma) (\Bs_\epsilon X^{l}\As_\alpha)=g_{\gamma \beta}^k g_{\alpha \epsilon}^l\,,
\end{equation*}
and similar variants. 
\begin{lem} \label{LemPoisson}
 For any $\alpha,\beta=1,\ldots,d$ and $k,l\geq 1$,  
\begin{subequations}
 \begin{align}
\br{f_k,f_l}
=\,&0\,, \\
\br{f_k,g_{\alpha \beta}^l}
=\,&k\, g_{\alpha \beta}^{k+l}\,, \\
\br{g_{\gamma \epsilon}^k,g_{\alpha \beta}^l}
=\,&
\frac12 \left(\sum_{r=1}^k-\sum_{r=1}^l \right) 
\left(g_{\gamma \beta}^r g_{\alpha \epsilon}^{k+l-r} + g_{\gamma \beta}^{k+l-r} g_{\alpha \epsilon}^{r}\right) \nonumber \\
&+\frac12 o(\alpha,\gamma) \left(g_{\gamma \beta}^l g_{\alpha \epsilon}^{k} + g_{\gamma \epsilon}^{k} g_{\alpha \beta}^{l}\right) 
+\frac12 o(\epsilon,\beta) \left(g_{\gamma \beta}^k g_{\alpha \epsilon}^l - g_{\gamma \epsilon}^{k} g_{\alpha \beta}^{l}\right) \nonumber \\
&+\frac12 [o(\epsilon,\alpha)+\delta_{\alpha \epsilon}-o(\beta,\gamma)-\delta_{\beta \gamma}]\,
g_{\gamma \epsilon}^k g_{\alpha \beta}^{l} \nonumber \\
&+\delta_{\alpha \epsilon}  h_{\gamma \beta}^{l,k} 
+\delta_{\alpha \epsilon} \sum_{\lambda=1}^{\epsilon-1} g_{\gamma \lambda}^k g_{\lambda \beta}^l 
-\delta_{\beta \gamma} h_{\alpha \epsilon}^{k,l} - \delta_{\beta \gamma}\sum_{\mu=1}^{\beta-1} g_{\alpha \mu}^l g_{\mu \epsilon}^k\,.
 \end{align}
\end{subequations}
\end{lem}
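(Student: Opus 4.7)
The plan is to deduce Lemma \ref{LemPoisson} directly from Lemma \ref{Lodayxac} by applying the trace and invoking the identity \eqref{relInv}, which asserts $\br{\tr a, \tr b} = \tr \br{a,b}$ where the right-hand side uses the Lie bracket on $\Aalt/[\Aalt,\Aalt]$ induced from $\dgal{-,-}$. Since Lemma \ref{Lodayxac} already provides that Lie bracket on the cyclic word level, no further double-bracket computation is needed. The task reduces to a systematic conversion of each term on the right-hand side of \eqref{xxacx}--\eqref{acxacx} into an expression in $f_k$, $g^k_{\alpha\beta}$, and $h^{k,l}_{\gamma\epsilon}$ via the trace.

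First I would dispose of the two easy cases. Applying $\tr$ to \eqref{xxacx} immediately gives $\br{f_k,f_l}=0$ and $\br{f_k,g^l_{\alpha\beta}}=k\tr(\As_\alpha\Bs_\beta X^{k+l})=k\, g^{k+l}_{\alpha\beta}$, proving the first two identities. The main computation is then $\br{g^k_{\gamma\epsilon},g^l_{\alpha\beta}}$, obtained by tracing \eqref{acxacx}. The key substitutions are the cyclic rewriting
\begin{equation*}
\tr(\As_\alpha\Bs_\beta X^r \As_\gamma \Bs_\epsilon X^s)=(\Bs_\beta X^r \As_\gamma)(\Bs_\epsilon X^s \As_\alpha)=g^r_{\gamma\beta}\,g^s_{\alpha\epsilon},
\end{equation*}
which converts every purely polynomial word in $a,b,x$ into a product of two $g$'s, together with
\begin{equation*}
\tr(z x^k \As_\gamma \Bs_\beta x^l)=\tr(\As_\gamma \Bs_\beta X^l Z X^k)=h^{l,k}_{\gamma\beta},\qquad
\tr(\As_\alpha \Bs_\epsilon x^k z x^l)=h^{k,l}_{\alpha\epsilon},
\end{equation*}
which take care of the terms carrying the $\delta_{\alpha\epsilon}$ and $\delta_{\beta\gamma}$ coefficients in \eqref{acxacx}. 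A line-by-line match then produces the six groups of terms in the statement of the lemma.

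The main obstacle is purely combinatorial: one must carefully track how the ordering-function coefficients $o(\cdot,\cdot)$ and Kronecker-delta contributions in \eqref{acxacx} redistribute after the trace. In particular, one needs to observe that the pair of summands $a_\alpha b_\beta x^r a_\gamma b_\epsilon x^{k+l-r}+a_\alpha b_\beta x^{k+l-r}a_\gamma b_\epsilon x^r$ becomes, upon tracing, $g^r_{\gamma\beta}g^{k+l-r}_{\alpha\epsilon}+g^{k+l-r}_{\gamma\beta}g^r_{\alpha\epsilon}$, thereby accounting for the symmetrised double sum $\bigl(\sum_{r=1}^k-\sum_{r=1}^l\bigr)$ in the claimed formula. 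Once the matching is verified for each of the six lines, no further estimation or clever argument is required; the identity follows. The resulting equality of polynomial expressions in the generators $f_k, g^k_{\alpha\beta}, h^{k,l}_{\gamma\epsilon}$ is then the content of Lemma \ref{LemPoisson}.
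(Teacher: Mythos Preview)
Your proposal is correct and follows exactly the paper's own approach: the paper derives Lemma~\ref{LemPoisson} by applying the trace to the identities of Lemma~\ref{Lodayxac} via \eqref{relInv}, then rewriting the resulting words using the cyclic identity $\tr(\As_\alpha\Bs_\beta X^k \As_\gamma\Bs_\epsilon X^l)=g^k_{\gamma\beta}g^l_{\alpha\epsilon}$ and the definition of $h^{k,l}_{\gamma\epsilon}$. Your term-by-term matching is the intended argument, and no additional ingredient is needed.
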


Our goal is to show that for the functions 
$(f_k,g^k_{\gamma \epsilon})$ generating the ring of functions at a generic point, the following equalities hold
\begin{equation*}
 \xi^\ast\br{f_k,f_l}=\br{\xi^* f_k , \xi^* f_l}\,, \quad 
\xi^\ast\br{f_k,g^l_{\alpha \beta}}=\br{\xi^* f_k , \xi^* g^l_{\alpha \beta}}\,, \quad 
\xi^\ast\br{g^k_{\gamma \epsilon},g^l_{\alpha \beta}}=\br{\xi^* g^k_{\gamma \epsilon}, \xi^* g^l_{\alpha \beta}}\,,
\end{equation*}
where we compose with $\xi$ the identities from Lemma \ref{LemPoisson} in the left hand sides, and use the expressions \eqref{xifg} in the right hand sides. 
In fact, we will be quite pedantic and prove these identities also after summing over $\alpha$ and/or $\gamma$ ranging from $1$ to $d$. This allows us to show that the Poisson brackets given in Proposition \ref{Pr:TadPoi} are correct one at a time. Note that in local coordinates, we use $\xi^\ast X_{ij}=\delta_{ij} x_i$, $\xi^\ast (\As_\alpha  \Bs_\beta)_{ij}= \aaa_i^\alpha \ccc_j^\beta$ while we simply write $\xi^\ast Z_{ij}=Z_{ij}$.

To show that the brackets in \eqref{Eqh1} are correct, first notice that $\br{x_i,x_j}=0$ implies $\xi^\ast\br{f_k,f_l}=\br{\xi^* f_k , \xi^* f_l}$ as both expressions vanish. Second, recall that by assumption $\sum_\alpha \aaa_i^\alpha=1$ for all $i$. Thus, from $\br{x_i, \ccc_j^\beta}=\delta_{ij} x_i \ccc_j^\beta$, 
\begin{equation*}
 \begin{aligned}
 \sum_{\alpha} \br{\xi^* f_k , \xi^* g^l_{\alpha \beta}}=&\sum_{i,j=1}^n\br{x_i^k, \ccc_j^\beta x_j^l}
=\sum_{i=1}^n k\, x_i^{k+l} \ccc_i^\beta\,, \\
\sum_{\alpha} \xi^\ast\br{f_k,g^l_{\alpha \beta}}=& 
k\, \sum_{\alpha} \xi^*\tr(\As_\alpha \Bs_\beta X^{k+l})
=k\, \sum_{i=1}^n \ccc_i^\beta x_i^{k+l}\,,
 \end{aligned}
\end{equation*}
and we get  $\xi^\ast\br{f_k, \sum_{\alpha} g^l_{\alpha \beta}}=\br{\xi^* f_k , \xi^* \sum_{\alpha} g^l_{\alpha \beta}}$. 
Third, without summing, we get again that $\xi^\ast\br{f_k,g^l_{\alpha \beta}}=\br{\xi^* f_k , \xi^* g^l_{\alpha \beta}}$ using $\br{\aaa_i^\alpha,x_j}=0$.

To see that we need \eqref{Eqh4}, \eqref{Eqh3} and \eqref{Eqh2}, we will respectively sum over all values of $\alpha$ and $\gamma$, all values of $\gamma$ and finally not sum at all the functions 
$\xi^\ast\br{g^k_{\gamma \epsilon},g^l_{\alpha \beta}}$ and $\br{\xi^* g^k_{\gamma \epsilon}, \xi^* g^l_{\alpha \beta}}$, to show that they agree. We get from Lemma \ref{LemPoisson} that we can write 
\begin{equation}
 \begin{aligned} \label{xigg}
 \xi^\ast \br{g^k_{\gamma \epsilon},g^l_{\alpha \beta}}=\,&
\frac12 \left(\sum_{r=1}^k-\sum_{r=1}^l \right) \sum_{i,j=1}^n
\left(  \ccc_i^\beta x_i^r \aaa_i^\gamma \ccc_j^\epsilon x_j^{k+l-r} \aaa_j^\alpha
+\ccc_i^\beta x_i^{k+l-r} \aaa_i^\gamma \ccc_j^\epsilon x_j^r \aaa_j^\alpha
\right) \\
&+\frac12 o(\alpha,\gamma) \sum_{i,j=1}^n
\left( \ccc_j^\epsilon x_j^k \aaa_j^\alpha \ccc_i^\beta x_i^l \aaa_i^\gamma
+  \ccc_j^\epsilon x_j^k \aaa_j^\gamma \ccc_i^\beta x_i^l \aaa_i^\alpha \right) \\
&+\frac12 o(\epsilon,\beta)  \sum_{i,j=1}^n
\left(  \ccc_i^\beta x_i^k \aaa_i^\gamma \ccc_j^\epsilon x_j^l \aaa_j^\alpha
-  \ccc_j^\epsilon x_j^k \aaa_j^\gamma \ccc_i^\beta x_i^l \aaa_i^\alpha \right)  \\
&+\frac12 [o(\epsilon,\alpha)+\delta_{\alpha \epsilon}]\,
\sum_{i,j=1}^n  \ccc_j^\epsilon x_j^k \aaa_j^\gamma \ccc_i^\beta x_i^l\aaa_i^\alpha
-\frac12 [o(\beta,\gamma)+\delta_{\beta \gamma}]\,
\sum_{i,j=1}^n  \ccc_j^\epsilon x_j^k \aaa_j^\gamma \ccc_i^\beta x_i^l \aaa_i^\alpha \\
&+\delta_{\alpha \epsilon} \sum_{i,j=1}^n 
\left(Z_{ij}  + \sum_{\lambda=1}^{\epsilon-1} \aaa_i^\lambda \ccc_j^\lambda \right) x_j^k \aaa^\gamma_j \ccc_i^\beta x_i^l  -\delta_{\beta \gamma} \sum_{i,j=1}^n  
\left(Z_{ji}+\sum_{\mu=1}^{\beta-1} \aaa_j^\mu \ccc_i^\mu\right)x_i^l \aaa_i^\alpha \ccc_j^\epsilon x_j^k\,,
 \end{aligned}
\end{equation}. 

 In the first case, we have to prove 
\begin{equation} \label{Poi1}
\sum_{\gamma,\alpha=1}^d \xi^\ast\br{g^k_{\gamma \epsilon},g^l_{\alpha \beta}}=\sum_{i,j=1}^n\br{ \ccc_j^\epsilon x_j^k, \ccc_i^\beta x_i^l}\,.
\end{equation}
The right-hand side of \eqref{Poi1} can be read as 
\begin{equation*}
 \begin{aligned}
  \eqref{Poi1}_{RHS}=&
\sum_{i,j=1}^n\left( \br{ \ccc_j^\epsilon ,  x_i^l}  x_j^k \ccc_i^\beta
+\br{ x_j^k, \ccc_i^\beta }   \ccc_j^\epsilon x_i^l +\br{ \ccc_j^\epsilon , \ccc_i^\beta }  x_j^k x_i^l \right) \\
=&(k-l) \sum_{i=1}^n \ccc_i^\epsilon , \ccc_i^\beta  x_i^{k+l}
+\frac12 \sum_{\substack{i,j=1\\ i\neq j}}^n  x_j^k x_i^l \frac{x_j+x_i}{x_j-x_i} (\ccc_j^\epsilon \ccc_i^\beta + \ccc_i^\epsilon\ccc_j^\beta) \\
&+\sum_{i,j=1}^n  x_j^k x_i^l(\ccc_i^\beta Z_{ij} - \ccc_j^\epsilon Z_{ji}) +\frac12 o(\epsilon,\beta)
\sum_{i,j=1}^n  x_j^k x_i^l(\ccc_i^\epsilon\ccc_j^\beta-\ccc_j^\epsilon \ccc_i^\beta)  \\
&+\sum_{i,j=1}^n  x_j^k x_i^l \ccc_i^\beta \sum_{\lambda=1}^{\epsilon-1}\aaa_i^\lambda (\ccc_j^\lambda-\ccc_j^\epsilon)
-\sum_{i,j=1}^n  x_j^k x_i^l \ccc_j^\epsilon \sum_{\mu=1}^{\beta-1}\aaa_j^\mu (\ccc_i^\mu-\ccc_i^\beta).
 \end{aligned}
\end{equation*}
Now,  the left-hand side of \eqref{Poi1} can be written from \eqref{xigg}, after summing over $\alpha,\gamma$ and using the condition $\sum_\gamma \aaa_i^\gamma=1$ when possible. We get 
\begin{subequations}
 \begin{align}
  \eqref{Poi1}_{LHS}=&
\frac12  \sum_{i,j=1}^n  \ccc_i^\beta \ccc_j^\epsilon
\left(\sum_{r=1}^k-\sum_{r=1}^l \right) \left(  x_i^r   x_j^{k+l-r}+ x_i^{k+l-r} x_j^r\right) \label{Poi1a} \\
&+\frac12   \sum_{i,j=1}^n \ccc_j^\epsilon \ccc_i^\beta x_j^k x_i^l
\sum_{\alpha,\gamma=1}^d o(\alpha,\gamma) \left(\aaa_i^\gamma \aaa_j^\alpha  
+ \aaa_i^\alpha \aaa_j^\gamma  \right) 
+\frac12 o(\epsilon,\beta)  \sum_{i,j=1}^n x_j^k x_i^l
\left( \ccc_j^\beta  \ccc_i^\epsilon - \ccc_j^\epsilon  \ccc_i^\beta  \right) \label{Poi1b} \\
&+\frac12 \sum_{\alpha=1}^d[o(\epsilon,\alpha)+\delta_{\alpha \epsilon}]\,
\sum_{i,j=1}^n \aaa_i^\alpha \ccc_j^\epsilon x_j^k  \ccc_i^\beta x_i^l
-\frac12 \sum_{\gamma=1}^d [o(\beta,\gamma)+\delta_{\beta \gamma}]\,
\sum_{i,j=1}^n  \ccc_j^\epsilon x_j^k \aaa_j^\gamma \ccc_i^\beta x_i^l  \label{Poi1c}\\
&+\sum_{i,j=1}^n x_j^k x_i^l (Z_{ij} \ccc_i^\beta - Z_{ji} \ccc_j^\epsilon)
+\sum_{i,j=1}^n \sum_{\lambda=1}^{\epsilon-1} \aaa_i^\lambda \ccc_j^\lambda  x_j^k  \ccc_i^\beta x_i^l  - \sum_{i,j=1}^n  \sum_{\mu=1}^{\beta-1} \aaa_j^\mu \ccc_i^\mu x_i^l  \ccc_j^\epsilon x_j^k \,, \label{Poi1d}
 \end{align}
\end{subequations} 
To reduce this expression further, remark that by definition of the ordering function $o(-,-)$
\begin{equation*}
 \sum_{\alpha,\gamma=1}^d o(\alpha,\gamma) \left(\aaa_i^\gamma \aaa_j^\alpha  
+ \aaa_i^\alpha \aaa_j^\gamma  \right) = 
\sum_{\alpha<\gamma} \left(\aaa_i^\gamma \aaa_j^\alpha  + \aaa_i^\alpha \aaa_j^\gamma  \right) 
- \sum_{\alpha>\gamma}  \left(\aaa_i^\gamma \aaa_j^\alpha  
+ \aaa_i^\alpha \aaa_j^\gamma  \right)=0\,,
\end{equation*}
after relabelling the indices in the second sum, so that the first term of \eqref{Poi1b} disappears. Then, write \eqref{Poi1a} as 
\begin{equation*}
 \begin{aligned}
\eqref{Poi1a}=& (k-l) \sum_{i=1}^n  \ccc_i^\beta \ccc_i^\epsilon x_i^{k+l}
+ \frac12  \sum_{\substack{i,j=1\\ i\neq j}}^n  \ccc_i^\beta \ccc_j^\epsilon
  \left(\sum_{r=1}^k-\sum_{r=1}^l \right) \left(  x_i^r   x_j^{k+l-r}+ x_i^{k+l-r} x_j^r\right)\,,
 \end{aligned}
\end{equation*}
so that the sum for $i\neq j$ can be written as (here we assume $k>l$, the case $k<l$ is exactly the same) 
\begin{equation}
 \begin{aligned} \label{Sumrkl}
&\, \frac12  \sum_{\substack{i,j=1\\ i\neq j}}^n  \ccc_i^\beta \ccc_j^\epsilon
\sum_{r=l+1}^k \frac{x_i-x_j}{x_i-x_j}\left(  x_i^r   x_j^{k+l-r}+ x_i^{k+l-r} x_j^r\right) \\
=&\frac12  \sum_{\substack{i,j=1\\ i\neq j}}^n  \ccc_i^\beta \ccc_j^\epsilon\frac{x_i+x_j}{x_i-x_j}
\left(x_i^{k} x_j^{l}-x_i^{l} x_j^{k}\right)\,
=\,-\frac12  \sum_{\substack{i,j=1\\ i\neq j}}^n x_i^{l} x_j^{k} \frac{x_i+x_j}{x_i-x_j}
\left(\ccc_j^\beta \ccc_i^\epsilon+\ccc_i^\beta \ccc_j^\epsilon \right) \,,
 \end{aligned}
\end{equation}
after relabelling indices to obtain last equality. 
Finally, let's look at the terms in \eqref{Poi1c}-\eqref{Poi1d} with no factor $Z_{ij}$.  They can be written as 
\begin{equation*}
\frac12 \sum_{i,j=1}^n x_j^k x_i^l 
\left( 
\left[\sum_{\alpha\geq \epsilon}-\sum_{\alpha=1}^{\epsilon-1}\right]\aaa_i^\alpha \ccc_j^\epsilon \ccc_i^\beta -\left[\sum_{\gamma\geq \beta}-\sum_{\gamma=1}^{\beta-1}\right] \ccc_j^\epsilon  \aaa_j^\gamma \ccc_i^\beta + 2\sum_{\lambda=1}^{\epsilon-1} \aaa_i^\lambda \ccc_j^\lambda    \ccc_i^\beta   
- 2 \sum_{\mu=1}^{\beta-1} \aaa_j^\mu \ccc_i^\mu   \ccc_j^\epsilon
\right) \,,
\end{equation*} 
and if we split the sum  $\sum_{\alpha\geq \epsilon}$ as $\sum_{\alpha=1}^d-\sum_{\alpha=1}^{\epsilon-1}$ and do the same with the sum over $\gamma \geq \beta$, we get after using the conditions $\sum_\alpha \aaa_i^\alpha=1$ (and the same for $\gamma$)
\begin{equation*}
\begin{aligned}
 &\sum_{i,j=1}^n x_j^k x_i^l 
\left( 
-\sum_{\alpha=1}^{\epsilon-1}\aaa_i^\alpha \ccc_j^\epsilon \ccc_i^\beta +\sum_{\gamma=1}^{\beta-1} \ccc_j^\epsilon  \aaa_j^\gamma \ccc_i^\beta + \sum_{\lambda=1}^{\epsilon-1} \aaa_i^\lambda \ccc_j^\lambda    \ccc_i^\beta - \sum_{\mu=1}^{\beta-1} \aaa_j^\mu \ccc_i^\mu   \ccc_j^\epsilon
\right)\\
=&\sum_{i,j=1}^n x_j^k x_i^l 
\left( \sum_{\lambda=1}^{\epsilon-1} \aaa_i^\lambda (\ccc_j^\lambda-\ccc_j^\epsilon)    \ccc_i^\beta - \sum_{\mu=1}^{\beta-1} \aaa_j^\mu (\ccc_i^\mu - \ccc_i^\beta)   \ccc_j^\epsilon \right)\,.
\end{aligned}
\end{equation*} 
Summing together all the terms, we have reduced the left-hand side of \eqref{Poi1} to the form 
\begin{equation*}
 \begin{aligned}
  \eqref{Poi1}_{LHS}=&
(k-l) \sum_{i=1}^n  \ccc_i^\beta \ccc_i^\epsilon x_i^{k+l}
-\frac12  \sum_{\substack{i,j=1\\ i\neq j}}^n x_i^{l} x_j^{k} \frac{x_i+x_j}{x_i-x_j}
 \left(\ccc_j^\beta \ccc_i^\epsilon+\ccc_i^\beta \ccc_j^\epsilon \right) +\frac12 o(\epsilon,\beta)  \sum_{i,j=1}^n x_j^k x_i^l
\left( \ccc_j^\beta  \ccc_i^\epsilon - \ccc_j^\epsilon  \ccc_i^\beta  \right)  \\
&+\sum_{i,j=1}^n x_j^k x_i^l (Z_{ij} \ccc_i^\beta - Z_{ji} \ccc_j^\epsilon)
+\sum_{i,j=1}^n x_j^k x_i^l 
\left( \sum_{\lambda=1}^{\epsilon-1} \aaa_i^\lambda (\ccc_j^\lambda-\ccc_j^\epsilon)    \ccc_i^\beta - \sum_{\mu=1}^{\beta-1} \aaa_j^\mu (\ccc_i^\mu - \ccc_i^\beta)   \ccc_j^\epsilon \right)\,.
 \end{aligned}
\end{equation*} 
This is precisely the right-hand side of $\eqref{Poi1}$. 
In the second case, we show
\begin{equation} \label{Poi2}
\sum_{\gamma=1}^d \xi^\ast\br{g^k_{\gamma \epsilon},g^l_{\alpha \beta}}=\sum_{i,j=1}^n\br{ \ccc_j^\epsilon x_j^k, \aaa_i^\alpha \ccc_i^\beta x_i^l}\,.
\end{equation}

The right-hand side of \eqref{Poi2} can be read as 
\begin{equation*}
 \begin{aligned}
  \eqref{Poi2}_{RHS}=&
\sum_{i,j=1}^n\left( \br{ \ccc_j^\epsilon ,  x_i^l}  x_j^k  \aaa_i^\alpha \ccc_i^\beta
+\br{ x_j^k, \ccc_i^\beta }   \ccc_j^\epsilon \aaa_i^\alpha x_i^l +\br{ \ccc_j^\epsilon , \ccc_i^\beta }  x_j^k x_i^l \aaa_i^\alpha +  \br{ \ccc_j^\epsilon ,\aaa_i^\alpha}   x_j^k  x_i^l \ccc_i^\beta \right) \\
 &=(k-l) \sum_{i=1}^n \ccc_i^\epsilon \aaa_i^\alpha \ccc_i^\beta  x_i^{k+l}
 +\frac12 \sum_{\substack{i,j=1\\ i\neq j}}^n  x_j^k x_i^l \frac{x_j+x_i}{x_j-x_i} (\ccc_i^\epsilon\ccc_j^\beta \aaa_i^\alpha + \ccc_j^\epsilon\ccc_i^\beta \aaa_j^\alpha) \\
 &+\frac12 \sum_{\kappa=1}^d o(\alpha,\kappa)  \sum_{i,j=1}^n x_j^k  x_i^l \ccc_i^\beta \ccc_j^\epsilon 
(\aaa_j^\kappa \aaa_i^\alpha+\aaa_i^\kappa \aaa_j^\alpha)
+\frac12 o(\epsilon,\beta) \sum_{i,j=1}^n  x_j^k x_i^l(\ccc_i^\epsilon\ccc_j^\beta-\ccc_j^\epsilon \ccc_i^\beta) \aaa_i^\alpha \\
 & -\sum_{i,j=1}^n  x_j^k x_i^l \aaa_i^\alpha \ccc_j^\epsilon \sum_{\mu=1}^{\beta-1}\aaa_j^\mu (\ccc_i^\mu-\ccc_i^\beta)-\delta_{(\alpha<\epsilon)} \sum_{i,j=1}^n x_j^k  x_i^l \ccc_i^\beta \aaa_i^\alpha \ccc_j^\epsilon \\
& +\delta_{\epsilon \alpha}\, \sum_{i,j=1}^n  x_j^k x_i^l \ccc_i^\beta \left(Z_{ij} + \sum_{\lambda=1}^{\epsilon-1} \aaa_i^\lambda \ccc_j^\lambda  \right) 
-\sum_{i,j=1}^n  x_j^k x_i^l \ccc_j^\epsilon Z_{ji} \aaa_i^\alpha \,,
 \end{aligned}
\end{equation*} 
after some easy simplifications. To get the left-hand side, we  sum \eqref{xigg} over $\gamma$ and we write 
\begin{subequations}
 \begin{align}
  \eqref{Poi2}_{LHS}=& (k-l)\sum_{i=1}^n \aaa_i^\alpha \ccc_i^\beta \ccc_j^\epsilon x_i^{k+l} - 
\frac12  \sum_{\substack{i,j=1\\ i\neq j}}^n x_j^k x_i^l \frac{x_i+x_j}{x_i-x_j}\left(\aaa_i^\alpha \ccc_j^\beta \ccc_i^\epsilon + \aaa_j^\alpha \ccc_i^\beta \ccc_j^\epsilon \right) \label{Poi2a} \\
&+\frac12 \sum_{\gamma=1}^d o(\alpha,\gamma) \sum_{i,j=1}^n x_j^k x_i^l \ccc_j^\epsilon \ccc_i^\beta
\left(\aaa_i^\gamma   \aaa_j^\alpha  + \aaa_i^\alpha  \aaa_j^\gamma \right) 
+\frac12 o(\epsilon,\beta)  \sum_{i,j=1}^n x_j^k x_i^l \aaa_i^\alpha 
\left( \ccc_j^\beta  \ccc_i^\epsilon -  \ccc_j^\epsilon  \ccc_i^\beta \right) \label{Poi2b} \\
&+\frac12 [o(\epsilon,\alpha)+\delta_{\alpha \epsilon}]\,
\sum_{i,j=1}^n x_j^k x_i^l  \ccc_j^\epsilon \aaa_i^\alpha \ccc_i^\beta 
-\frac12 \sum_{\gamma=1}^d [o(\beta,\gamma)+\delta_{\beta \gamma}]\,
\sum_{i,j=1}^n x_j^k x_i^l  \aaa_i^\alpha \ccc_j^\epsilon  \aaa_j^\gamma \ccc_i^\beta \label{Poi2c}\\
&+\delta_{\alpha \epsilon} \sum_{i,j=1}^n x_j^k x_i^l \ccc_i^\beta
\left(Z_{ij}  + \sum_{\lambda=1}^{\epsilon-1} \aaa_i^\lambda \ccc_j^\lambda \right)     
- \sum_{i,j=1}^n  x_j^k x_i^l \left(Z_{ji}+\sum_{\mu=1}^{\beta-1} \aaa_j^\mu \ccc_i^\mu\right) \aaa_i^\alpha \ccc_j^\epsilon \,, \label{Poi2d}
 \end{align}
\end{subequations} 
where we used an argument similar to \eqref{Sumrkl} to rewrite the first line of \eqref{xigg} in order to obtain \eqref{Poi2a}. Next,  we can write after rearranging terms
\begin{equation*}
 \begin{aligned}
  &\eqref{Poi2c}+\eqref{Poi2d} \\
=& +\frac12 [1-2\delta_{(\alpha<\epsilon)}]\,
\sum_{i,j=1}^n x_j^k x_i^l  \ccc_j^\epsilon \aaa_i^\alpha \ccc_i^\beta 
-\frac12 \sum_{\gamma=1}^d [1-2\delta_{(\beta>\gamma)}]\,
\sum_{i,j=1}^n x_j^k x_i^l  \aaa_i^\alpha \ccc_j^\epsilon  \aaa_j^\gamma \ccc_i^\beta \\
&+\delta_{\alpha \epsilon} \sum_{i,j=1}^n x_j^k x_i^l \ccc_i^\beta
\left(Z_{ij}  + \sum_{\lambda=1}^{\epsilon-1} \aaa_i^\lambda \ccc_j^\lambda \right)     
- \sum_{i,j=1}^n  x_j^k x_i^l Z_{ji} \aaa_i^\alpha \ccc_j^\epsilon 
- \sum_{i,j=1}^n  x_j^k x_i^l \aaa_i^\alpha \ccc_j^\epsilon \sum_{\mu=1}^{\beta-1} \aaa_j^\mu \ccc_i^\mu \\
=& -\delta_{(\alpha<\epsilon)}\,
\sum_{i,j=1}^n x_j^k x_i^l  \ccc_j^\epsilon \aaa_i^\alpha \ccc_i^\beta 
+  \sum_{i,j=1}^n x_j^k x_i^l  \aaa_i^\alpha \ccc_j^\epsilon \sum_{\mu=1}^{\beta-1} \aaa_j^\mu (\ccc_i^\beta-\ccc_i^\mu) \\
&+\delta_{\alpha \epsilon} \sum_{i,j=1}^n x_j^k x_i^l \ccc_i^\beta
\left(Z_{ij}  + \sum_{\lambda=1}^{\epsilon-1} \aaa_i^\lambda \ccc_j^\lambda \right)     
- \sum_{i,j=1}^n  x_j^k x_i^l Z_{ji} \aaa_i^\alpha \ccc_j^\epsilon \,,
 \end{aligned}
\end{equation*}
where we used again the condition $\sum_{\gamma=1}^d \aaa_j^\gamma=1$. It is not hard to see that replacing the terms in  \eqref{Poi2c} and \eqref{Poi2d} by this last expression gives that $\eqref{Poi2}_{LHS}$ and 
$\eqref{Poi2}_{RHS}$ coincide.  
In the third case, we need to prove that 
\begin{equation} \label{Poi3}
 \xi^\ast\br{g^k_{\gamma \epsilon},g^l_{\alpha \beta}}=\sum_{i,j=1}^n\br{ \aaa_j^\gamma \ccc_j^\epsilon x_j^k, \aaa_i^\alpha \ccc_i^\beta x_i^l}\,.
\end{equation}
By antisymmetry in \eqref{Eqh3}, we can write 
\begin{equation}
\begin{aligned} \label{Eqh3bis}
  \br{\aaa_j^\gamma,\ccc_i^\beta}=&-\delta_{\beta \gamma}Z_{ji}+\aaa_j^\gamma Z_{ji}-
\frac12 \delta_{(j \neq i)}\frac{x_i+x_j}{x_i-x_j}\ccc_i^\beta (\aaa_i^\gamma-\aaa_j^\gamma)
+\delta_{(\gamma<\beta)}\aaa_j^\gamma \ccc_i^\beta \nonumber \\
&+\aaa_j^\gamma \sum_{\mu=1}^{\beta-1}\aaa_j^\mu (\ccc_i^\mu-\ccc_i^\beta) 
-\delta_{\beta \gamma} \sum_{\mu=1}^{\beta-1} \aaa_j^\mu \ccc_i^\mu 
-\frac12 \sum_{\sigma=1}^d o(\gamma,\sigma)\ccc_i^\beta 
(\aaa_i^\sigma \aaa_j^\gamma+\aaa_j^\sigma \aaa_i^\gamma) \,,
\end{aligned}
\end{equation}
so that the right-hand side  yields 
\begin{equation*}
 \begin{aligned}
  \eqref{Poi3}_{RHS}=&(k-l) \sum_{i=1}^n \aaa_i^\gamma \ccc_i^\epsilon \aaa_i^\alpha \ccc_i^\beta  x_i^{k+l}+
\sum_{i,j=1}^nx_j^k x_i^l \left(\br{\aaa_j^\gamma,\aaa_i^\alpha}\ccc_j^\epsilon \ccc_i^\beta + \br{\aaa_j^\gamma, \ccc_i^\beta}\ccc_j^\epsilon  \aaa_i^\alpha +  \br{ \ccc_j^\epsilon ,\aaa_i^\alpha}   \aaa_j^\gamma  \ccc_i^\beta  +\br{ \ccc_j^\epsilon , \ccc_i^\beta }   \aaa_j^\gamma  \aaa_i^\alpha  \right) \\
 =& 
 (k-l) \sum_{i=1}^n \aaa_i^\gamma \ccc_i^\epsilon \aaa_i^\alpha \ccc_i^\beta  x_i^{k+l} 
+\frac12 \sum_{\substack{i,j=1\\ i\neq j}}^n  x_j^k x_i^l \frac{x_j+x_i}{x_j-x_i} (\ccc_i^\epsilon \ccc_j^\beta \aaa_j^\gamma \aaa_i^\alpha+\ccc_j^\epsilon \ccc_i^\beta \aaa_i^\gamma \aaa_j^\alpha)   \\
&+\frac12 o(\alpha,\gamma) \sum_{i,j=1}^n  x_j^k x_i^l \ccc_j^\epsilon \ccc_i^\beta
(\aaa_j^\gamma \aaa_i^\alpha+\aaa_i^\gamma \aaa_j^\alpha)    
+\frac12 o(\epsilon,\beta) \sum_{i,j=1}^n  x_j^k x_i^l  \aaa_i^\alpha \aaa_j^\gamma 
(\ccc_i^\epsilon\ccc_j^\beta-\ccc_j^\epsilon \ccc_i^\beta)\\
&+[\delta_{(\gamma<\beta)} -\delta_{(\alpha<\epsilon)}] \sum_{i,j=1}^n  x_j^k x_i^l \aaa_j^\gamma \ccc_j^\epsilon \ccc_i^\beta \aaa_i^\alpha  \\
&+\delta_{\epsilon \alpha}\sum_{i,j=1}^n  x_j^k x_i^l  \ccc_i^\beta \aaa_j^\gamma \left( Z_{ij}+\sum_{\lambda=1}^{\epsilon-1} \aaa_i^\lambda \ccc_j^\lambda \right)  -\delta_{\beta \gamma}\sum_{i,j=1}^n  x_j^k x_i^l \ccc_j^\epsilon  \aaa_i^\alpha \left(Z_{ji}+\sum_{\mu=1}^{\beta-1} \aaa_j^\mu \ccc_i^\mu \right) \,.
 \end{aligned}
\end{equation*} 
This is obtained by simplifying terms without any non obvious manipulation. Now, remark that we can write 
$o(\epsilon,\alpha)=\delta_{(\epsilon<\alpha)}-\delta_{(\epsilon>\alpha)}=1-\delta_{\epsilon \alpha}-2\delta_{(\epsilon>\alpha)}$ 
so that 
\begin{equation*}
 \frac12 [o(\epsilon,\alpha)+\delta_{\epsilon\alpha}- o(\beta,\gamma)- \delta_{\beta\gamma}] 
=[\delta_{(\gamma<\beta)} -\delta_{(\alpha<\epsilon)}]\,.
\end{equation*}
We can also repeat the argument in \eqref{Sumrkl}, to get
\begin{equation*}
 \begin{aligned}
  &\frac12 \left(\sum_{r=1}^k-\sum_{r=1}^l \right) \sum_{i,j=1}^n \aaa_j^\alpha \ccc_i^\beta \aaa_i^\gamma \ccc_j^\epsilon
\left(  x_i^r  x_j^{k+l-r}+ x_i^{k+l-r} x_j^r \right)\\
=&(k-l) \sum_{i=1}^n \aaa_i^\gamma \ccc_i^\epsilon \aaa_i^\alpha \ccc_i^\beta  x_i^{k+l}  
-\frac12  \sum_{\substack{i,j=1\\ i\neq j}}^n x_i^{l} x_j^{k} \frac{x_i+x_j}{x_i-x_j}
\left(\aaa_j^\gamma\ccc_j^\beta \aaa_i^\alpha\ccc_i^\epsilon+\aaa_i^\gamma\ccc_i^\beta \aaa_j^\alpha\ccc_j^\epsilon \right)\,.
 \end{aligned}
\end{equation*}
Incorporating these two facts in $\eqref{Poi3}_{RHS}$ gives us 
\begin{equation*}
 \begin{aligned}
  \eqref{Poi3}_{RHS}=&
 \frac12 \left(\sum_{r=1}^k-\sum_{r=1}^l \right) \sum_{i,j=1}^n \aaa_j^\alpha \ccc_i^\beta \aaa_i^\gamma \ccc_j^\epsilon
\left(  x_i^r  x_j^{k+l-r}+ x_i^{k+l-r} x_j^r \right)   \\
&+\frac12 o(\alpha,\gamma) \sum_{i,j=1}^n  x_j^k x_i^l \ccc_j^\epsilon \ccc_i^\beta
(\aaa_j^\gamma \aaa_i^\alpha+\aaa_i^\gamma \aaa_j^\alpha)    
+\frac12 o(\epsilon,\beta) \sum_{i,j=1}^n  x_j^k x_i^l  \aaa_i^\alpha \aaa_j^\gamma 
(\ccc_i^\epsilon\ccc_j^\beta-\ccc_j^\epsilon \ccc_i^\beta)\\
&+ \frac12 [o(\epsilon,\alpha)+\delta_{\epsilon\alpha}- o(\beta,\gamma)- \delta_{\beta\gamma}] \sum_{i,j=1}^n  x_j^k x_i^l \aaa_j^\gamma \ccc_j^\epsilon \ccc_i^\beta \aaa_i^\alpha  \\
&+\delta_{\epsilon \alpha}\sum_{i,j=1}^n  x_j^k x_i^l  \ccc_i^\beta \aaa_j^\gamma \left( Z_{ij}+\sum_{\lambda=1}^{\epsilon-1} \aaa_i^\lambda \ccc_j^\lambda \right)  -\delta_{\beta \gamma}\sum_{i,j=1}^n  x_j^k x_i^l \ccc_j^\epsilon  \aaa_i^\alpha \left(Z_{ji}+\sum_{\mu=1}^{\beta-1} \aaa_j^\mu \ccc_i^\mu \right) \,.
 \end{aligned}
\end{equation*} 
This is nothing else than \eqref{xigg}, which is $\eqref{Poi3}_{LHS}$ as desired.  \qed

\subsection{Proof of Proposition \ref{ggFinal}}\label{Ann:C2}

We need the following lemma. 
\begin{lem} \label{ggMedium}
 For any $\epsilon,\gamma=1,\ldots,d$ and $j,k,l=1,\ldots,n$, 
\begin{subequations}
 \begin{align}
  \br{\ccc_j^\epsilon, f_{kl}}=&(Z_{kj}\ccc^\epsilon_l-Z_{jl}\ccc_j^\epsilon) 
+(Z_{lj}-Z_{kj}) f_{kl}+\frac12 \delta_{(j\neq k)}\frac{x_j+x_k}{x_j-x_k}\ccc_j^\epsilon 
( f_{jl}- f_{kl}) 
\nonumber \\  &+\frac12
\delta_{(j\neq l)}\frac{x_j+x_l}{x_j-x_l}(\ccc_j^\epsilon  f_{kl}+\ccc_l^\epsilon f_{kj})
+\frac12 \ccc_l^\epsilon  f_{kj}-\frac12 \ccc_j^\epsilon  f_{jl} \nonumber \\
&+ f_{kl} \sum_{\lambda=1}^{\epsilon-1}(\ccc_j^\lambda-\ccc_j^\epsilon)(\aaa_l^\lambda-\aaa_k^\lambda) \\
\br{\aaa_i^\gamma, f_{kl}}=& \aaa_i^\gamma Z_{il}-\aaa_k^\gamma Z_{il}
+\frac12 \delta_{(i\neq k)}\frac{x_i+x_k}{x_i-x_k}(\aaa_k^\gamma-\aaa_i^\gamma)( f_{il}- f_{kl})
\nonumber \\ &+\frac12
\delta_{(i\neq l)}\frac{x_i+x_l}{x_i-x_l} f_{kl}(\aaa_{l}^\gamma-\aaa_i^\gamma)
+\frac12 \aaa_i^\gamma  f_{il}-\frac12 \aaa_k^\gamma  f_{il} \nonumber \\
&+\frac12 \sum_{\sigma=1}^d o(\gamma,\sigma)  f_{kl}[\aaa_i^\gamma(\aaa_k^\sigma-\aaa_l^\sigma)
+\aaa_i^\sigma(\aaa_k^\gamma-\aaa_l^\gamma)]
 \end{align}
\end{subequations}
\end{lem}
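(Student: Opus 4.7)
The plan is to establish both brackets by direct computation using the derivation (Leibniz) property of the Poisson bracket together with the explicit coordinate formulas from Proposition \ref{brbr}. Since $f_{kl} = \sum_\alpha \aaa_k^\alpha \ccc_l^\alpha$, we write
\[
\br{\ccc_j^\epsilon, f_{kl}} = \sum_\alpha \br{\ccc_j^\epsilon, \aaa_k^\alpha}\ccc_l^\alpha + \sum_\alpha \aaa_k^\alpha \br{\ccc_j^\epsilon, \ccc_l^\alpha},
\]
and analogously for $\br{\aaa_i^\gamma, f_{kl}}$. For the first bracket we then substitute \eqref{Eqh3} (after antisymmetrising $\br{\ccc_j^\epsilon, \aaa_k^\alpha} = -\br{\aaa_k^\alpha, \ccc_j^\epsilon}$) and \eqref{Eqh4}; for the second we substitute \eqref{Eqh2} and \eqref{Eqh3}.

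After substitution, the resulting expression splits naturally into three families of terms that can be simplified independently. The first family consists of the $Z_{\ast\ast}$-terms coming from $\aaa_i^\alpha Z_{ij} - \delta_{\alpha\beta}Z_{ij}$ in \eqref{Eqh3} and $-\ccc_i^\alpha Z_{ij} + \ccc_j^\beta Z_{ji}$ in \eqref{Eqh4}; summing over $\alpha$ and collapsing via $\sum_\alpha \aaa_k^\alpha = 1$ and $\sum_\alpha \aaa_k^\alpha \ccc_l^\alpha = f_{kl}$ yields the terms $(Z_{kj}\ccc_l^\epsilon - Z_{jl}\ccc_j^\epsilon) + (Z_{lj} - Z_{kj})f_{kl}$ (and similarly for the $\aaa_i^\gamma$ case). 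The second family consists of the rational terms involving $\frac{x_\ast+x_\ast}{x_\ast-x_\ast}$; these also simplify mechanically using the same contraction, producing the factors $\delta_{(j\neq k)}$ and $\delta_{(j\neq l)}$ visible in the statement.

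The real bookkeeping challenge is the third family, arising from the nested triangular sums $\sum_{\gamma=1}^{\beta-1}\aaa_i^\gamma(\ccc_j^\gamma - \ccc_j^\beta)$ in \eqref{Eqh3}, the corresponding sums in \eqref{Eqh4}, and the ordering-function pieces $\frac12 o(-,-)\,(\cdots)$ throughout \eqref{Eqh2}--\eqref{Eqh4}. The strategy here is to group these contributions, swap orders of summation where possible, and repeatedly apply $\sum_\alpha \aaa_k^\alpha = 1$ (equivalently $\sum_\alpha \br{\xi, \aaa_k^\alpha} = 0$) and the identity $o(\alpha,\sigma) = \delta_{(\alpha<\sigma)} - \delta_{(\sigma<\alpha)}$ to eliminate the $\alpha$-dependence. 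After these cancellations, the $o$-contributions combine with the triangular sums to produce exactly the single remaining sum $f_{kl}\sum_{\lambda=1}^{\epsilon-1}(\ccc_j^\lambda - \ccc_j^\epsilon)(\aaa_l^\lambda - \aaa_k^\lambda)$ in the first formula, and the clean expression $\tfrac12\sum_\sigma o(\gamma,\sigma)f_{kl}[\aaa_i^\gamma(\aaa_k^\sigma - \aaa_l^\sigma) + \aaa_i^\sigma(\aaa_k^\gamma - \aaa_l^\gamma)]$ in the second.

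The main obstacle I anticipate is purely combinatorial: ensuring that all the upper-triangular cutoffs $\sum_{\gamma=1}^{\beta-1}$ are correctly matched against the $o(-,-)$ contributions when the summation indices are interchanged. The most delicate point is that the collapse of the ordering-function sum against the triangular sum relies on a small identity: after summing over $\alpha$ with $\sum_\alpha \aaa_k^\alpha = 1$, the residual combination $\sum_\alpha \aaa_k^\alpha \cdot \sum_{\gamma=1}^{\alpha-1}(\cdots)_\gamma$ minus its symmetric counterpart must be rewritten as a single sum $\sum_{\lambda=1}^{\epsilon-1}(\cdots)$ indexed by $\epsilon$ rather than by the dummy $\alpha$. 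Once this re-indexing is carried out and the remaining residual terms are checked to cancel, the statement follows.
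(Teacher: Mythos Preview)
Your proposal is correct and follows essentially the same approach as the paper: expand $f_{kl}=\sum_\alpha \aaa_k^\alpha \ccc_l^\alpha$ via Leibniz, substitute the explicit brackets \eqref{Eqh2}--\eqref{Eqh4}, and then carry out the combinatorial bookkeeping using $\sum_\alpha \aaa_k^\alpha=1$ and the identity $o(\alpha,\sigma)=\delta_{(\alpha<\sigma)}-\delta_{(\sigma<\alpha)}$ to collapse the triangular and ordering-function sums. The paper's proof does exactly this, handling the ``third family'' by pairing specific terms (e.g.\ the fourth, sixth and eleventh; the fifth and twelfth; then the seventh with the thirteenth after relabelling) rather than describing the mechanism abstractly, but the content is the same.
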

\begin{proof}
As usual, we use the normalisation $\sum_\alpha \aaa_k^\alpha=1$, and we compute from \eqref{Eqh3}--\eqref{Eqh4} that 
\begin{equation}
 \begin{aligned} \label{cAC}
  \br{\ccc_j^\epsilon, f_{kl}}=&\sum_{\alpha=1}^d \left(\br{\ccc_j^\epsilon,\aaa^\alpha_k}\ccc^\alpha_l + \aaa^\alpha_k  \br{\ccc_j^\epsilon,\ccc^\alpha_l} \right)\\
=& 
\ccc^\epsilon_l Z_{kj} - f_{kl}  Z_{kj} +\frac12 \delta_{(j\neq k)}\frac{x_j+x_k}{x_j-x_k}  \ccc_j^\epsilon ( f_{jl}- f_{kl}) - \sum_{\alpha=1}^{\epsilon-1}  \ccc^\alpha_l \aaa_k^\alpha \ccc_j^\epsilon  \\
&- f_{kl}  \sum_{\lambda=1}^{\epsilon-1}\aaa_k^\lambda (\ccc_j^\lambda-\ccc_j^\epsilon) 
 + \ccc^\epsilon_l  \sum_{\lambda=1}^{\epsilon-1} \aaa_k^\lambda \ccc_j^\lambda 
 +\frac12 \sum_{\alpha=1}^d  \sum_{\kappa=1}^d o(\alpha,\kappa) \ccc^\alpha_l \ccc_j^\epsilon (\aaa_j^\kappa \aaa_k^\alpha+\aaa_k^\kappa \aaa_j^\alpha)  \\
&+\frac12 \delta_{(j\neq l)}\frac{x_j+x_l}{x_j-x_l} (\ccc_j^\epsilon  f_{kl} + \ccc_l^\epsilon  f_{kj})  
+ f_{kl}  Z_{lj}  - \ccc_j^\epsilon Z_{jl} \\
&
+\frac12 \sum_{\alpha=1}^d  o(\epsilon,\alpha)  \aaa^\alpha_k(\ccc_l^\epsilon\ccc_j^\alpha-\ccc_j^\epsilon \ccc_l^\alpha)  +  f_{kl} \sum_{\lambda=1}^{\epsilon-1} \aaa_l^\lambda (\ccc_j^\lambda-\ccc_j^\epsilon) 
-\sum_{\alpha=1}^d  \sum_{\mu=1}^{\alpha-1} \aaa^\alpha_k \ccc_j^\epsilon \aaa_j^\mu (\ccc_l^\mu-\ccc_l^\alpha) 
 \end{aligned}
\end{equation}
Our aim is to reduce some of these thirteen terms, mostly using properties of the ordering function $o(-,-)$. Summing the fourth, sixth and eleventh terms of \eqref{cAC} together yields  
\begin{equation*}
 \begin{aligned} 
 &- \sum_{\lambda=1}^{\epsilon-1}  \aaa_k^\lambda \ccc^\lambda_l  \ccc_j^\epsilon  
+  \sum_{\lambda=1}^{\epsilon-1} \aaa_k^\lambda \ccc_j^\lambda \ccc^\epsilon_l
+\frac12 \left[ \sum_{\lambda=\epsilon+1}^d - \sum_{\lambda=1}^{\epsilon-1} \right]   (\aaa^\lambda_k\ccc_j^\lambda\ccc_l^\epsilon - \aaa^\lambda_k \ccc_l^\lambda \ccc_j^\epsilon) \\
&=\frac12 \sum_{\substack{\lambda=1\\ \lambda\neq \epsilon}}^d    (\aaa^\lambda_k\ccc_j^\lambda\ccc_l^\epsilon - \aaa^\lambda_k \ccc_l^\lambda \ccc_j^\epsilon)
=\frac12 ( f_{kj}\ccc_l^\epsilon -  f_{kl} \ccc_j^\epsilon)\,.
 \end{aligned}
\end{equation*}
The fifth and twelfth terms of \eqref{cAC} give
\begin{equation*}
- f_{kl}  \sum_{\lambda=1}^{\epsilon-1}\aaa_k^\lambda (\ccc_j^\lambda-\ccc_j^\epsilon) 
  +  f_{kl} \sum_{\lambda=1}^{\epsilon-1} \aaa_l^\lambda (\ccc_j^\lambda-\ccc_j^\epsilon) 
=  f_{kl} \sum_{\lambda=1}^{\epsilon-1} (\aaa_l^\lambda-\aaa_k^\lambda) (\ccc_j^\lambda-\ccc_j^\epsilon) 
\end{equation*}
Relabelling indices, we transform the seventh terms from \eqref{cAC} as
\begin{equation*}
 \begin{aligned} 
\frac12 \left[ \sum_{\alpha=1}^d  \sum_{\kappa=\alpha+1}^d - \sum_{\alpha=1}^d  \sum_{\kappa=1}^{\alpha-1} \right]   
\ccc^\alpha_l \ccc_j^\epsilon (\aaa_j^\kappa \aaa_k^\alpha+\aaa_k^\kappa \aaa_j^\alpha)  
&=\frac12 \sum_{\alpha=1}^d  \sum_{\mu=1}^{\alpha-1}  \ccc_j^\epsilon
 (\ccc^\mu_l  \aaa_j^\alpha \aaa_k^\mu+\ccc^\mu_l  \aaa_k^\alpha \aaa_j^\mu
-\ccc^\alpha_l  \aaa_j^\mu \aaa_k^\alpha-\ccc^\alpha_l  \aaa_k^\mu \aaa_j^\alpha) \\
&=\frac12 \sum_{\alpha=1}^d  \sum_{\mu=1}^{\alpha-1}  \ccc_j^\epsilon (\ccc^\mu_l-\ccc^\alpha_l)
 (  \aaa_j^\alpha \aaa_k^\mu+  \aaa_k^\alpha \aaa_j^\mu)\,,
 \end{aligned}
\end{equation*}
which can be summed with the thirteen terms in \eqref{cAC} to yield 
\begin{equation*}
 \begin{aligned} 
&\frac12  \sum_{\alpha=1}^d  \sum_{\mu=1}^{\alpha-1}   \ccc_j^\epsilon (\ccc_l^\alpha-\ccc_l^\mu) 
\left(2 \aaa^\alpha_k \aaa_j^\mu-\aaa_j^\alpha \aaa_k^\mu- \aaa_k^\alpha \aaa_j^\mu\right)
=\frac12  \sum_{\alpha=1}^d  \sum_{\mu=1}^{\alpha-1}   \ccc_j^\epsilon (\ccc_l^\alpha-\ccc_l^\mu) 
\left(\aaa^\alpha_k \aaa_j^\mu-\aaa_j^\alpha \aaa_k^\mu\right) \\
=&\frac12  \sum_{\alpha=1}^d  \sum_{\mu=1}^{\alpha-1} \ccc_j^\epsilon (\ccc_l^\alpha-\ccc_l^\mu) \aaa^\alpha_k \aaa_j^\mu
-\frac12  \sum_{\alpha=1}^d  \sum_{\mu=\alpha+1}^{d}   \ccc_j^\epsilon (\ccc_l^\mu-\ccc_l^\alpha) \aaa_j^\mu \aaa_k^\alpha \\
=&\frac12  \sum_{\alpha=1}^d  \sum_{\substack{\mu=1\\ \mu \neq \alpha}}^{d} \ccc_j^\epsilon (\ccc_l^\alpha-\ccc_l^\mu) \aaa^\alpha_k \aaa_j^\mu 
=\frac12  \sum_{\alpha=1}^d  \sum_{\mu=1}^{d} \ccc_j^\epsilon (\aaa^\alpha_k \ccc_l^\alpha \aaa_j^\mu-\aaa_j^\mu \ccc_l^\mu \aaa^\alpha_k)  
= \frac12  \ccc_j^\epsilon \left( f_{kl} -  f_{jl} \right)\,.
 \end{aligned}
\end{equation*}
Introducing the different terms back in \eqref{cAC}, we find 
\begin{equation*}
 \begin{aligned} 
  \br{\ccc_j^\epsilon, f_{kl}}=&
(\ccc^\epsilon_l Z_{kj}  - \ccc_j^\epsilon Z_{jl}) + (Z_{lj} -  Z_{kj})  f_{kl} 
  +\frac12 \delta_{(j\neq k)}\frac{x_j+x_k}{x_j-x_k}  \ccc_j^\epsilon ( f_{jl}- f_{kl})  \\
&+\frac12 \delta_{(j\neq l)}\frac{x_j+x_l}{x_j-x_l} (\ccc_j^\epsilon  f_{kl} + \ccc_l^\epsilon  f_{kj})  
+\frac12 (\ccc_l^\epsilon  f_{kj} - \ccc_j^\epsilon   f_{jl} ) \\
&+  f_{kl} \sum_{\lambda=1}^{\epsilon-1}(\ccc_j^\lambda-\ccc_j^\epsilon)  (\aaa_l^\lambda-\aaa_k^\lambda) \,,
 \end{aligned}
\end{equation*}
as desired. 
For the second identity,  we need \eqref{Eqh2} and\eqref{Eqh3bis}, then the same kind of manipulations allow to find 
$\br{\aaa_i^\gamma, f_{kl}}$.  
\end{proof}


To establish Proposition \ref{ggFinal}, we have from Lemma \ref{ggMedium} and the identity  $\sum_{\gamma=1}^d \aaa_i^\gamma=1$ that 
\begin{equation}
 \begin{aligned}  \label{acAC}
 \br{f_{ij},f_{kl}}=& \sum_{\gamma=1}^d\left(\br{\aaa_i^\gamma,  f_{kl}}\ccc_j^\gamma +\aaa_i^\gamma \br{\ccc_j^\gamma,  f_{kl}} \right) \\
=&( f_{ij}-  f_{kj}) Z_{il} +
 \frac12 \delta_{(i\neq k)}\frac{x_i+x_k}{x_i-x_k}
 ( f_{kj}- f_{ij}) ( f_{il}- f_{kl})   \\
& +\frac12 \delta_{(i \neq l)}\frac{x_i+x_l}{x_i-x_l} ( f_{lj}- f_{ij})   f_{kl} 
+\frac12  f_{ij}   f_{il}  - \frac12  f_{kj}  f_{il} \\
&+\frac12 \sum_{\gamma=1}^d \sum_{\sigma=1}^d o(\gamma,\sigma)  f_{kl}  \big(\aaa_i^\gamma \ccc_j^\gamma (\aaa_k^\sigma-\aaa_l^\sigma)+\aaa_i^\sigma  (\aaa_k^\gamma  - \aaa_l^\gamma) \ccc_j^\gamma\big) \\
&+ Z_{kj}  f_{il}-Z_{jl}  f_{ij}
 +(Z_{lj}-Z_{kj}) f_{kl} \\
&+\frac12 \delta_{(j\neq k)}\frac{x_j+x_k}{x_j-x_k}  f_{ij}  ( f_{jl}- f_{kl}) 
+\frac12 \delta_{(j\neq l)}\frac{x_j+x_l}{x_j-x_l}( f_{ij}  f_{kl}+ f_{il} f_{kj})\\
 &+\frac12  f_{il}  f_{kj}  -\frac12  f_{ij}  f_{jl} + f_{kl} \sum_{\gamma=1}^d \sum_{\lambda=1}^{\gamma-1}\aaa_i^\gamma(\ccc_j^\lambda-\ccc_j^\gamma)(\aaa_l^\lambda-\aaa_k^\lambda)\,.
 \end{aligned}
\end{equation}
The sums in the third line of \eqref{acAC} can be re-expressed as follows :  
\begin{equation*}
 \begin{aligned} 
&\frac12 f_{kl}\, \sum_{\gamma=1}^d \left[\sum_{\sigma=\gamma+1}^d- \sum_{\sigma=1}^{\gamma-1}\right]    \big(\aaa_i^\gamma \ccc_j^\gamma (\aaa_k^\sigma-\aaa_l^\sigma)+\aaa_i^\sigma  (\aaa_k^\gamma  - \aaa_l^\gamma) \ccc_j^\gamma\big) \\
=& 
\frac12  f_{kl}\, \sum_{\gamma=1}^d \left[\sum_{\sigma=\gamma+1}^d- \sum_{\sigma=1}^{\gamma-1}\right]   \big(\aaa_i^\gamma \ccc_j^\gamma (\aaa_k^\sigma-\aaa_l^\sigma) - \aaa_i^\gamma (\aaa_k^\sigma  - \aaa_l^\sigma) \ccc_j^\sigma\big) \,,
 \end{aligned}
\end{equation*}
after swapping the labels $\sigma \leftrightarrow \gamma$ for the second term in the sums. This is nothing else that 
\begin{equation*}
 \frac12  f_{kl}\, \sum_{\gamma=1}^d \left[\sum_{\sigma=\gamma+1}^d- \sum_{\sigma=1}^{\gamma-1}\right]  \aaa_i^\gamma  ( \ccc_j^\gamma - \ccc_j^\sigma) (\aaa_k^\sigma-\aaa_l^\sigma)\,.
\end{equation*}
Summing with the last term of \eqref{acAC}, we get 
\begin{equation*}
\begin{aligned}
  &\frac12  f_{kl}\, \sum_{\gamma=1}^d \left[\sum_{\lambda=\gamma+1}^d- \sum_{\lambda=1}^{\gamma-1}\right]  \aaa_i^\gamma  (\ccc_j^\gamma - \ccc_j^\lambda) (\aaa_k^\lambda-\aaa_l^\lambda)
+ f_{kl} \sum_{\gamma=1}^d \sum_{\lambda=1}^{\gamma-1}\aaa_i^\gamma(\ccc_j^\lambda-\ccc_j^\gamma)(\aaa_l^\lambda-\aaa_k^\lambda)\\
=&\frac12  f_{kl}\, \sum_{\gamma=1}^d \sum_{\substack{\lambda=1\\\lambda\neq\gamma}}^d  \aaa_i^\gamma  (\ccc_j^\gamma - \ccc_j^\lambda) (\aaa_k^\lambda-\aaa_l^\lambda)
=\frac12  f_{kl}\, \sum_{\gamma=1}^d \sum_{\lambda=1}^d  
\aaa_i^\gamma\left(\ccc_j^\gamma \aaa_k^\lambda - \ccc_j^\gamma \aaa_l^\lambda - \aaa_k^\lambda \ccc_j^\lambda+\aaa_l^\lambda \ccc_j^\lambda \right)  \\
&=\frac12  f_{kl} \big( f_{ij} -  f_{ij} -  f_{kj} +  f_{lj}    \big) 
= \frac12  f_{kl} f_{lj}  - \frac12  f_{kl}  f_{kj}\,. 
\end{aligned}
\end{equation*}
We can thus rewrite \eqref{acAC} as 
\begin{equation*}
 \begin{aligned}  
 \br{f_{ij},f_{kl}}
=&f_{ij} Z_{il} - f_{kj} Z_{il}  +
 \frac12 \delta_{(i\neq k)}\frac{x_i+x_k}{x_i-x_k}
(f_{kj}f_{il}- f_{kj}f_{kl} -f_{ij} f_{il} +f_{ij} f_{kl}) \\
& +\frac12 \delta_{(i \neq l)}\frac{x_i+x_l}{x_i-x_l} (f_{lj}f_{kl}-f_{ij}f_{kl})   
+\frac12 f_{ij}  f_{il}  - \frac12 f_{kj} f_{il} +\frac12 f_{kl}f_{lj}  - \frac12 f_{kl} f_{kj}  \\
&+ Z_{kj} f_{il}-Z_{jl} f_{ij}+ Z_{lj}f_{kl}  - Z_{kj}f_{kl}
+\frac12 \delta_{(j\neq k)}\frac{x_j+x_k}{x_j-x_k}  (f_{ij} f_{jl} - f_{ij} f_{kl})  \\  
&+\frac12 \delta_{(j\neq l)}\frac{x_j+x_l}{x_j-x_l}(f_{ij} f_{kl}+f_{il}f_{kj})
 +\frac12 f_{il} f_{kj}-\frac12 f_{ij} f_{jl} \,.
 \end{aligned}
\end{equation*}
Now, remark that we can rearrange terms to obtain
\begin{equation*}
 \begin{aligned}  
 \br{f_{ij},f_{kl}}
=& f_{ij} \left(Z_{il}+\frac12 f_{il}\right) - f_{kj}  \left(Z_{il} +\frac12 f_{il} \right)
+ f_{il} \left(Z_{kj} + \frac12 f_{kj} \right) - f_{ij} \left(Z_{jl} + \frac12 f_{jl} \right)  \\
& + f_{kl} \left(Z_{lj} + \frac12 f_{lj}\right) -  f_{kl}\left(Z_{kj} + \frac12 f_{kj} \right) 
+\frac12 \delta_{(i \neq l)}\frac{x_i+x_l}{x_i-x_l} (f_{lj}f_{kl}-f_{ij}f_{kl})    \\
&+ \frac12 \delta_{(i\neq k)}\frac{x_i+x_k}{x_i-x_k} (f_{kj}f_{il}- f_{kj}f_{kl} -f_{ij} f_{il} +f_{ij} f_{kl}) \\
&+\frac12 \delta_{(j\neq k)}\frac{x_j+x_k}{x_j-x_k}  (f_{ij} f_{jl} 
- f_{ij} f_{kl}) +\frac12 \delta_{(j\neq l)}\frac{x_j+x_l}{x_j-x_l}(f_{ij} f_{kl}+f_{il}f_{kj})\,.
 \end{aligned}
\end{equation*}
After that, a simple rearrangement using 
\begin{equation}\label{Eq:Z12f} 
Z_{ij}+\frac12 f_{ij}=\frac12 \frac{x_i+qx_j}{x_i-q x_j} f_{ij}
\end{equation}
leads to \eqref{Eq:ffAF}. This proves the first claim of Proposition \ref{ggFinal}.
To prove the second claim, we use Lemma \ref{ggMedium} to get  
\begin{equation*}
 \begin{aligned}
     \br{\aaa_i^\gamma,f_{kk}}=& \frac12 \delta_{(i\neq k)}\frac{x_i+x_k}{x_i-x_k}(\aaa_k^\gamma-\aaa_i^\gamma)f_{ik} 
+ (\aaa_i^\gamma-\aaa_k^\gamma) \left(Z_{ik}+\frac12f_{ik}\right)\,, \\
\br{\ccc_j^\epsilon,f_{kk}}=&\frac12\delta_{(j\neq k)}\frac{x_j+x_k}{x_j-x_k}(\ccc_j^\epsilon f_{jk}+\ccc_k^\epsilon f_{kj}) 
+ \ccc_k^\epsilon \left(Z_{kj} + \frac12 f_{kj}\right)  - \ccc_j^\epsilon \left(Z_{jk} + \frac12 f_{jk}\right)\,.
 \end{aligned}
\end{equation*} 
Using \eqref{Eq:Z12f} we get, after summation over $k$, the relations \eqref{trz2}--\eqref{trz3}. The remaining relation \eqref{trz1} is obvious. 
\qed

\subsection{Proof of Lemma \ref{Lem:ttMQV}} \label{Ann:A1}

Let  $u\in \{y,z\}$, and remark that we can write $\dgal{u,u}=-\frac12 [u^2 \otimes e_0 - e_0 \otimes u^2]$, together with
\begin{equation} \label{Equvw}
  \dgal{u, w_\alpha}= \frac12 e_{0}\otimes uw_\alpha-\frac12 u\otimes w_\alpha\,,\quad 
\dgal{u, v_\alpha}= \frac12 v_\alpha u\otimes e_0-\frac12 v_\alpha\otimes u\,.
\end{equation}  
Now,  consider the elements $w_\alpha  v_\beta u^l\in \Aal$ for any $l\in \N$ and $\alpha,\beta=1,\ldots,d$. The following statement holds in $\Aal/[\Aal,\Aal]$ if $u=y$, and $\Aalt/[\Aalt,\Aalt]$ if $u=z$. 
\begin{lem} \label{Lem:tt} For any  $k,l\geq 1$ and $\alpha,\beta,\gamma,\epsilon=1,\ldots,d$, we have  that  
\begin{equation}
 \begin{aligned} 
\br{u^k,w_\alpha v_\beta u^l}=&0\,, \\
    \br{w_\gamma v_\epsilon u^k,w_\alpha v_\beta u^l}=& \frac{1}{2}
\left[o(\gamma,\beta)+o(\epsilon,\alpha)-o(\epsilon,\beta)-o(\gamma,\alpha)\right]
w_\alpha v_\epsilon u^k\,w_\gamma v_\beta u^l \\
&+\frac{1}{2}o(\gamma,\beta)\,\, w_\alpha v_\beta\, w_\gamma v_\epsilon u^{k+l}
+\frac{1}{2}o(\epsilon,\alpha)\,\, w_\alpha v_\beta u^{k+l}\,w_\gamma v_\epsilon \\
&-\frac{1}{2}o(\epsilon,\beta)\,\, w_\alpha v_\beta u^{k}\,w_\gamma v_\epsilon u^l
-\frac{1}{2}o(\gamma,\alpha)\,\, w_\alpha v_\beta u^{l}\,w_\gamma v_\epsilon u^{k} \\
&-\delta_{\gamma\beta}\left[w_\alpha v_\epsilon u^{k+l}
+\frac{1}{2}  w_\alpha v_\beta\,w_\gamma v_\epsilon u^{k+l}
+\frac{1}{2}w_\alpha v_\epsilon u^k\,w_\gamma v_\beta u^l \right]  \\
&+\delta_{\alpha\epsilon}\left[w_\gamma v_\beta u^{k+l}
+\frac{1}{2}  w_\alpha v_\beta u^{k+l}\,w_\gamma v_\epsilon 
+\frac{1}{2}w_\alpha v_\epsilon u^k\,w_\gamma v_\beta u^l \right] \\
&-\frac{1}{2}  \,\left[  \sum_{\tau=1}^{k-1}  w_\alpha v_\beta u^{k+l-\tau} w_\gamma v_\epsilon u^{\tau}
+ \sum_{\sigma=1}^l w_\alpha v_\beta u^{k+\sigma} w_\gamma v_\epsilon u^{l-\sigma}  \right] \\
&+\frac{1}{2}  \,\left[
\sum_{\sigma=1}^{l-1}  w_\alpha v_\beta u^{\sigma} w_\gamma v_\epsilon u^{k+l-\sigma}
+ \sum_{\tau=1}^k  w_\alpha v_\beta u^{k-\tau} w_\gamma v_\epsilon u^{l+\tau}\right]   \label{Eq:tt} 
\end{aligned}
\end{equation}
The same holds without the sums if $k=0$, $l=0$ or both $k=l=0$. 
\end{lem}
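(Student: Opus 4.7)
The plan is to reduce everything to the known double brackets among the generators via the Leibniz rules of \ref{Ann:Brack}, and then collect terms modulo $[\Aal,\Aal]$ (respectively $[\Aalt,\Aalt]$).

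\textbf{First identity.} By induction on $k\geq 1$, using the derivation property in the first slot together with \eqref{Equvw}, I would first establish the auxiliary formulas
\[
\dgal{u^k,w_\alpha}=\tfrac12(e_0\otimes u^k w_\alpha-u^k\otimes w_\alpha),\qquad \dgal{u^k,v_\beta}=\tfrac12(v_\beta u^k\otimes e_0-v_\beta\otimes u^k).
\]
Both tensors are antisymmetric under the multiplication map $m$, so $\br{u^k,w_\alpha}=\br{u^k,v_\beta}=0$ as genuine elements of $\Aal$. Combined with $\br{u^k,u^l}=0$ from Lemma \ref{invyz}, the derivation rule in the second slot then yields
\[
\br{u^k,w_\alpha v_\beta u^l}\equiv \br{u^k,w_\alpha}v_\beta u^l+w_\alpha\br{u^k,v_\beta}u^l+w_\alpha v_\beta\br{u^k,u^l}\equiv 0
\]
in $\Aal/[\Aal,\Aal]$.

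\textbf{Second identity.} The plan is to expand $\dgal{w_\gamma v_\epsilon u^k,w_\alpha v_\beta u^l}$ by applying the Leibniz rule in the first slot (splitting the first argument into $w_\gamma$, $v_\epsilon$ and $k$ copies of $u$) and then in the second slot (splitting analogously), obtaining a sum of contributions indexed by ordered pairs of generators. Each elementary contribution is read off from the explicit double brackets \eqref{tadida}--\eqref{tadidu} (or their $z$-variants \eqref{tadidaZ}--\eqref{tadideZ}). After applying the multiplication $m$ and using $ab\equiv ba\pmod{[\Aal,\Aal]}$ to put every monomial into the normal form $w_\alpha v_\beta u^p w_\gamma v_\epsilon u^q$ with $p+q=k+l$, the contributions assemble into the right-hand side of \eqref{Eq:tt}. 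The ``diagonal'' (sum-free) part of \eqref{Eq:tt} can in fact be read off directly from the $k=l=0$ computation, which is precisely $\br{w_\gamma v_\epsilon, w_\alpha v_\beta}$ and involves only the brackets $\dgal{v_\bullet,w_\bullet}$, $\dgal{v_\bullet,v_\bullet}$, $\dgal{w_\bullet,w_\bullet}$.

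\textbf{Where the work lies.} The calculation is purely mechanical but requires careful bookkeeping. The four telescoping $\sum_\tau,\sum_\sigma$ sums in \eqref{Eq:tt} arise from pairing each of the $k$ (resp.\ $l$) copies of $u$ in the first (resp.\ second) argument with the generators $w_\alpha,v_\beta$ (resp.\ $w_\gamma,v_\epsilon$) of the other argument through \eqref{Equvw}; each such pairing produces one monomial of the form $w_\alpha v_\beta u^{p}w_\gamma v_\epsilon u^{q}$ for an appropriate splitting $p+q=k+l$. The Kronecker coefficients $\delta_{\gamma\beta}$ and $\delta_{\alpha\epsilon}$ come entirely from the leading $\delta_{\alpha\beta}\,e_0\otimes e_\infty$ in $\dgal{v_\alpha,w_\beta}$ \eqref{tadidu}, and the ordering-function coefficients encode the relative placement of the $v$'s and $w$'s in Van den Bergh's bracket. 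Several partial cancellations among the nine families of contributions must be verified, especially between the $(v_\epsilon,w_\alpha)$ and $(w_\gamma,v_\beta)$ contributions, which share the same $\delta$-structure. The degenerate cases $k=0$ or $l=0$ follow from the same expansion with the corresponding $u^k$ or $u^l$ factor omitted: the four sums collapse and the formula specialises as claimed.
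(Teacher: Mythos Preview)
Your strategy is the same as the paper's: expand via the Leibniz rules for the double bracket, apply the multiplication $m$, and collect terms modulo commutators. Your treatment of the first identity is in fact cleaner than the paper's, which computes $\dgal{u^k,w_\alpha v_\beta u^l}$ explicitly before applying $m$; your observation that $\br{u^k,-}$ is a genuine derivation and that $\br{u^k,w_\alpha}=\br{u^k,v_\beta}=0$ on the nose short-circuits that.

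One correction to your bookkeeping sketch for the second identity: the four single sums in \eqref{Eq:tt} do \emph{not} come from the $u$--$w_\bullet$ and $u$--$v_\bullet$ pairings via \eqref{Equvw}. Those cross contributions, once cyclically reordered, telescope in $\tau$ (resp.\ $\sigma$) and cancel against each other. What survives and produces the four sums is the $u$--$u$ interaction, i.e.\ the double sum coming from $\dgal{u^k,u^l}$, which after relabelling indices collapses to the four single sums. This does not affect the validity of your plan, but if you follow your stated attribution when writing out the details you will find the terms landing in the wrong places.
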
 
We delay the proof of this lemma until \ref{Ann:Lemtt} to explain how we can conclude from this result. 
Denote by $U$ the matrix representing $u\in \{z,y\}$.  
Then $t_{\alpha \epsilon}^l=\tr( W_\alpha V_\epsilon U^l)$, and Lemma \ref{Lem:ttMQV} is deduced from Lemma \ref{Lem:tt} and \eqref{relInv} by using that 
$\tr(W_\alpha V_\beta U^k W_\gamma  V_\epsilon U^l)=(V_\epsilon U^l W_\alpha) ( V_\beta U^k W_\gamma)=t_{\alpha\epsilon}^l t_{\gamma\beta}^k$. \qed

\begin{rem} \label{Rem:tt}
A similar result also holds for $u\in \{x,e_0+xy\}$, and with $u=x+y^{-1}$ if we decide to localise at $y$. We have in those cases $\dgal{u,u}=+\frac12 [u^2 \otimes e_0 - e_0 \otimes u^2]$, and \eqref{Equvw} also holds. Then, Lemma \ref{Lem:tt} can also be proved, except that we need to change the signs in front of the last two lines in \eqref{Eq:tt}.  We do not discuss these cases any further. 
\end{rem}

\subsubsection{Proof of Lemma \ref{Lem:tt}} \label{Ann:Lemtt}

First, we note from the discussion at the beginning of \ref{Ann:Brack} that 
\begin{equation*}
    \dgal{u^k, u^l}
=-\frac{1}{2}\sum_{\tau=1}^k\sum_{\sigma=1}^l 
\left( u^{k-\tau+\sigma+1} \otimes u^{l-\sigma+\tau-1} - u^{k-\tau+\sigma-1} \otimes u^{l-\sigma+\tau+1}\right) \,.
\end{equation*}
Next, using \eqref{Equvw}, 
we remark  that 
\begin{equation} \label{Lod1u}
 \begin{aligned} 
  \dgal{u, w_\alpha v_\beta}=& w_\alpha \dgal{u,  v_\beta} + \dgal{u, w_\alpha } v_\beta \\
=& \frac12 \left(w_\alpha v_\beta u\otimes e_0 - w_\alpha v_\beta\otimes u 
+e_{0}\otimes uw_\alpha v_\beta - u\otimes w_\alpha v_\beta \right)\,.
 \end{aligned}
\end{equation}
Therefore
 \begin{equation}
  \begin{aligned} 
   \dgal{u^k,w_\alpha v_\beta u^l} 
=& \frac12 \sum_{\tau=1}^k   
\Big(w_\alpha v_\beta u^{k-\tau+1} \otimes u^{l+\tau-1} - w_\alpha v_\beta u^{k-\tau}\otimes u^{l+\tau}  \\
& \qquad \quad +u^{k-\tau}\otimes u^{\tau}w_\alpha v_\beta u^{l} - u^{k-\tau+1}\otimes u^{\tau-1}w_\alpha v_\beta u^{l} \Big)\\
&-\frac12  \sum_{\tau=1}^k \sum_{\sigma=1}^l  
 \left(w_\alpha v_\beta u^{k-\tau+\sigma+1}  \otimes  u^{l-\sigma+\tau -1}
-w_\alpha v_\beta u^{k-\tau+\sigma-1}  \otimes u^{l-\sigma+\tau+1}  \right)\,. \label{Lod2u}
  \end{aligned}
 \end{equation}
After application of the multiplication map, we get $0$ and the first equality follows. To prove that \eqref{Eq:tt} holds, write 
\begin{equation} 
 \dgal{w_\gamma v_\epsilon u^k, w_\alpha v_\beta u^l}=
w_\gamma v_\epsilon \ast  \dgal{u^k,w_\alpha v_\beta u^l} 
+ w_\alpha v_\beta \dgal{w_\gamma v_\epsilon, u^l} \ast u^k 
+\dgal{w_\gamma v_\epsilon,w_\alpha v_\beta } u^l \ast u^k \,.
\end{equation}
  Again, we begin by reducing the two first terms. Using \eqref{Lod1u} and \eqref{Lod2u} gives 
\begin{equation*}
 \begin{aligned}
T:=&
w_\gamma v_\epsilon \ast  \dgal{u^k,w_\alpha v_\beta u^l} 
- \sum_{\sigma=1}^l w_\alpha v_\beta u^{\sigma-1} \dgal{u,w_\gamma v_\epsilon}^\circ u^{l-\sigma} \ast u^k \\
=&- \frac12 \sum_{\tau=1}^k \sum_{\sigma=1}^l  
 \left(w_\alpha v_\beta u^{k-\tau+\sigma+1}  \otimes w_\gamma v_\epsilon  u^{l+\tau-\sigma-1}
-w_\alpha v_\beta  u^{k-\tau+\sigma-1}  \otimes w_\gamma v_\epsilon u^{l+\tau-\sigma+1} \right) \\
&+ \frac12 \sum_{\tau=1}^k   
\left(w_\alpha v_\beta u^{k-\tau+1} \otimes w_\gamma v_\epsilon u^{l+\tau-1} - w_\alpha v_\beta u^{k-\tau}\otimes w_\gamma v_\epsilon u^{l+\tau}\right) \\
& + \frac12 \sum_{\tau=1}^k   \left( u^{k-\tau}\otimes w_\gamma v_\epsilon u^{\tau}w_\alpha v_\beta u^{l} -  u^{k-\tau+1}\otimes w_\gamma v_\epsilon u^{\tau-1}w_\alpha v_\beta u^{l} \right) \\
&-\frac12 \sum_{\sigma=1}^l \left(w_\alpha v_\beta u^\sigma w_\gamma v_\epsilon u^k \otimes u^{l-\sigma} - w_\alpha v_\beta u^{\sigma-1} w_\gamma v_\epsilon u^k \otimes u^{l-\sigma +1} \right) \\
&-\frac12 \sum_{\sigma=1}^l \left(w_\alpha v_\beta u^{k+\sigma-1}  \otimes w_\gamma v_\epsilon u^{l-\sigma+1} - w_\alpha v_\beta u^{k+\sigma} \otimes w_\gamma v_\epsilon u^{l-\sigma} \right)\,.
 \end{aligned}
\end{equation*}
This gives, after multiplication and modulo commutators 
\begin{equation*}
 \begin{aligned}
 &m \circ T= -\frac12  \sum_{\tau=1}^k \sum_{\sigma=1}^l  
 \left(w_\alpha v_\beta u^{k-\tau+\sigma+1}  w_\gamma v_\epsilon  u^{l+\tau-\sigma-1}
-w_\alpha v_\beta  u^{k-\tau+\sigma-1}  w_\gamma v_\epsilon u^{l+\tau-\sigma+1} \right) \,,
 \end{aligned}
\end{equation*}
because the last four sums cancel out. Relabelling indices, we write 
\begin{equation*}
 \begin{aligned}
 m \circ T=& - \frac12  \left[\sum_{\tau=1}^{k-1} \sum_{\sigma=l}+\sum_{\tau=0} \sum_{\sigma=1}^l - \sum_{\tau=k} \sum_{\sigma=1}^{l-1} - \sum_{\tau=1}^{k} \sum_{\sigma=0}  \right]  
 w_\alpha v_\beta u^{k-\tau+\sigma}  w_\gamma v_\epsilon  u^{l+\tau-\sigma} \\
=&-\frac{1}{2}  \,\left[  \sum_{\tau=1}^{k-1}  w_\alpha v_\beta u^{k+l-\tau} w_\gamma v_\epsilon u^{\tau}
+ \sum_{\sigma=1}^l w_\alpha v_\beta u^{k+\sigma} w_\gamma v_\epsilon u^{l-\sigma}  \right] \\
&+\frac{1}{2}  \,\left[
\sum_{\sigma=1}^{l-1}  w_\alpha v_\beta u^{\sigma} w_\gamma v_\epsilon u^{k+l-\sigma}
+ \sum_{\tau=1}^k  w_\alpha v_\beta u^{k-\tau} w_\gamma v_\epsilon u^{l+\tau}\right] 
 \end{aligned}
\end{equation*}
It remains to compute $\dgal{w_\gamma v_\epsilon,w_\alpha v_\beta }$. We can find from \eqref{tadidv}--\eqref{tadidu}  
\begin{equation*}
 \begin{aligned} 
\dgal{w_\gamma v_\epsilon,w_\alpha v_\beta }= &
-\frac12 o(\gamma,\alpha) \left(w_\alpha v_\epsilon \otimes w_\gamma v_\beta + w_\gamma v_\epsilon \otimes w_\alpha v_\beta \right) -\frac12 o(\beta,\gamma) \left(w_\alpha v_\beta w_\gamma v_\epsilon \otimes e_0 + w_\alpha v_\epsilon \otimes w_\gamma v_\beta \right) \\
&-\delta_{\beta \gamma} \left(w_\alpha v_\epsilon \otimes e_0 + \frac12 w_\alpha v_\epsilon \otimes w_\gamma v_\beta + \frac12 w_\alpha v_\beta w_\gamma v_\epsilon \otimes e_0 \right) \\
&+\delta_{\alpha \epsilon} \left(e_0 \otimes w_\gamma v_\beta + \frac12 w_\alpha v_\epsilon \otimes w_\gamma v_\beta + \frac12 e_0 \otimes w_\gamma v_\epsilon w_\alpha v_\beta \right) \\
&+\frac12 o(\epsilon,\alpha) \left( e_0 \otimes w_\gamma v_\epsilon w_\alpha v_\beta + w_\alpha v_\epsilon \otimes w_\gamma v_\beta \right) - \frac12 o(\epsilon,\beta) \left(w_\alpha v_\beta \otimes w_\gamma v_\epsilon + w_\alpha v_\epsilon \otimes w_\gamma v_\beta \right)\,.
 \end{aligned}
\end{equation*}
By applying the multiplication map $m$ on $\dgal{w_\gamma v_\epsilon,w_\alpha v_\beta } u^l \ast u^k $ 
we get
\begin{equation*}
 \begin{aligned} 
m\circ(\dgal{w_\gamma v_\epsilon,w_\alpha v_\beta } u^l \ast u^k )= &
-\frac12 o(\gamma,\alpha) \left(w_\alpha v_\epsilon u^k   w_\gamma v_\beta u^l + w_\gamma v_\epsilon u^k   w_\alpha v_\beta u^l \right) \\
& -\frac12 o(\beta,\gamma) \left(w_\alpha v_\beta w_\gamma v_\epsilon u^k   u^l + w_\alpha v_\epsilon u^k   w_\gamma v_\beta u^l \right) \\
&-\delta_{\beta \gamma} \left(w_\alpha v_\epsilon  u^k   u^l + \frac12 w_\alpha v_\epsilon u^k   w_\gamma v_\beta u^l + \frac12 w_\alpha v_\beta w_\gamma v_\epsilon u^k   u^l \right) \\
&+\delta_{\alpha \epsilon} \left(u^k   w_\gamma v_\beta u^l + \frac12 w_\alpha v_\epsilon u^k   w_\gamma v_\beta u^l + \frac12 u^k   w_\gamma v_\epsilon w_\alpha v_\beta u^l \right) \\
&+\frac12 o(\epsilon,\alpha) \left( u^k   w_\gamma v_\epsilon w_\alpha v_\beta u^l + w_\alpha v_\epsilon u^k   w_\gamma v_\beta u^l \right) \\
& - \frac12 o(\epsilon,\beta) \left(w_\alpha v_\beta u^k   w_\gamma v_\epsilon u^l + w_\alpha v_\epsilon u^k   w_\gamma v_\beta u^l \right)\,.
 \end{aligned}
\end{equation*}
Adding $m \circ T$ to this last expression finishes the proof. \qed


\subsection{Computations with $S$} \label{Ann:eta}

To simplify notations in this appendix, we denote by $s$ the element $s_d$ given by \eqref{salpha0} with $\alpha=d$.

\subsubsection{Some brackets} 
Recall that \eqref{central} holds if we can show that the relation 
$\{\tr S^k, t_{\alpha\beta}^i\}=0$ on $\Rep(\Aalt, \bar{\alpha})$. Moreover, we have that $t_{\alpha \beta}^i$ and $\tr S^k$ are traces of the matrices representing the elements $w_\alpha v_\beta z^i$ and $s^k$. Using \eqref{relInv}, the desired relation follows from the following lemma.
\begin{lem}
  For any $k,l\in \N$ and $1\leq \alpha,\beta \leq d$, $\br{s^k,w_\alpha v_\beta z^i}=0$ in $\Aalt/[\Aalt,\Aalt]$. 
\end{lem}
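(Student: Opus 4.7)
The plan is to compute the double bracket $\dgal{s, w_\alpha v_\beta z^i}$ explicitly, observe that it takes a particularly clean shape, and then lift from $s$ to $s^k$ using the derivation rule in the first argument. Expanding via the derivation property in the second argument,
\[
\dgal{s, w_\alpha v_\beta z^i} = \dgal{s, w_\alpha}\, v_\beta z^i + w_\alpha \dgal{s, v_\beta}\, z^i + w_\alpha v_\beta\, \dgal{s, z^i},
\]
the first two summands are given directly by Lemma \ref{Lem:s} (since $s = s_d$ and $\alpha, \beta \leq d$, the ``$\alpha \geq \beta$'' branches apply), while $\dgal{s, z^i}$ is obtained by iterating $\dgal{s, z}$ in a telescoping sum,
\[
\dgal{s, z^i} = \tfrac12\bigl(s \otimes z^i - z^i s \otimes e_0 + e_0 \otimes s z^i - z^i \otimes s\bigr),
\]
preserving the structure of $\dgal{s, z}$. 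Substituting these expressions and using $v_\beta e_0 = v_\beta$ and $e_0 z^i = z^i$, the terms $w_\alpha v_\beta s \otimes z^i$ and $w_\alpha v_\beta \otimes s z^i$ arising from the second and third contributions cancel, leaving
\[
\dgal{s, a} = \tfrac12\bigl(s \otimes a + e_0 \otimes sa - as \otimes e_0 - a \otimes s\bigr), \qquad a := w_\alpha v_\beta z^i.
\]

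To pass from $s$ to $s^k$, I use the derivation rule in the first slot, which gives $\dgal{s^k, a} = \sum_{j=0}^{k-1} \dgal{s, a}'\, s^{k-1-j} \otimes s^j\, \dgal{s, a}''$. Applying the multiplication map and observing that each summand produces the same product $\dgal{s, a}'\, s^{k-1}\, \dgal{s, a}''$, we obtain $\br{s^k, a} = k\, \dgal{s, a}'\, s^{k-1}\, \dgal{s, a}''$. Substituting the simplified form above and using $e_0 s^{k-1} = s^{k-1} e_0 = s^{k-1}$, the four terms collapse to
\[
\br{s^k, a} = k\,(s^k a - a s^k) = k[s^k, a],
\]
which vanishes in $\Aalt/[\Aalt, \Aalt]$.

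The bulk of the computational effort lies in verifying the cancellations in the first step, but these are not accidental. The specialisation $\alpha = \beta$ of Lemma \ref{Lem:sBis} yields $\dgal{s, s} = \tfrac12(e_0 \otimes s^2 - s^2 \otimes e_0)$, which is the same form as $\dgal{z, z}$, so $s$ behaves as a ``shifted copy'' of $z$ for the purposes of these double brackets; the clean expression for $\dgal{s, a}$ is simply a manifestation of this, ensuring the moment-map-like identity $\br{s^k, a} \equiv k[s^k, a] \pmod{[\Aalt, \Aalt]}$ and hence the required vanishing.
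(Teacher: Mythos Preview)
Your proof is correct and follows essentially the same approach as the paper: compute $\dgal{s,w_\alpha v_\beta z^i}$ term by term using Lemma~\ref{Lem:s}, obtain the clean form $\tfrac12(s\otimes a + e_0\otimes sa - as\otimes e_0 - a\otimes s)$, and then use the inner derivation rule to get $\br{s^k,a}=k(s^k a - a s^k)$, which vanishes modulo commutators. The only cosmetic difference is that the paper expands $\dgal{s,z^i}$ as $\sum_{\tau} w_\alpha v_\beta z^{\tau-1}\dgal{s,z}z^{i-\tau}$ in one line rather than first stating the telescoped form of $\dgal{s,z^i}$ separately; your closing remark tying the cancellation to Lemma~\ref{Lem:sBis} is a nice piece of context the paper does not include.
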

\begin{proof}
 We use  \eqref{br:sz} and \eqref{br:svw} to get 
\begin{equation*}
\begin{aligned}
\dgal{s,w_\alpha v_\beta z^i}=& \dgal{s,w_\alpha} v_\beta z^i + w_\alpha \dgal{s,v_\beta} z^i 
+\sum_{\tau=1}^i w_\alpha v_\beta z^{\tau-1} \dgal{s,z} z^{i-\tau}    \\
=&\frac12 (s \otimes w_\alpha v_\beta z^i + e_0 \otimes s w_\alpha v_\beta z^i 
- w_\alpha v_\beta z^i s \otimes e_0 - w_\alpha v_\beta z^i \otimes s)\,.
\end{aligned}
\end{equation*}
Hence, 
\begin{equation*}
  \br{s^k,w_\alpha v_\beta z^i} = 
m \circ \left(\sum_{\tau=1}^k s^{\tau-1} \ast \dgal{s,w_\alpha v_\beta z^i}\ast s^{k-\tau} \right) 
= l\, (s^k w_\alpha v_\beta z^i - w_\alpha v_\beta z^i s^k)\,,
\end{equation*}
which vanishes modulo commutators. 
\end{proof}

\subsubsection{Proof of Theorem \ref{Thm:TadInv}} \label{Ann:Spectral}

Consider the following result.

\begin{lem} \label{PropInvPhi}
Let $z_{\eta}=z+\eta s$ for arbitrary $\eta \in \CC$ playing the role of a spectral parameter. Then, 
\begin{equation*}
\br{z_{\mu}^k,z_{\eta}^l}=0 \quad \text{ mod }[\Aalt,\Aalt], \qquad \text{ for any }\mu,\eta\in \CC\,, k,l \in \N^\times\,.
\end{equation*}
\end{lem}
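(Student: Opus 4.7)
The key computation is to evaluate $\dgal{z_\mu, z_\eta}$ explicitly. Expanding bilinearly in $\mu$ and $\eta$ and invoking \eqref{tadidaZ}, the case $\alpha = \beta = d$ of Lemma \ref{Lem:sBis}, and \eqref{br:sz} together with the cyclic antisymmetry $\dgal{z,s} = -\dgal{s,z}^\circ$, a direct calculation collapses the apparently asymmetric pieces into
\[
\dgal{z_\mu, z_\eta} = \tfrac{1}{2}\bigl(e_0 \otimes z_\mu z_\eta - z_\eta z_\mu \otimes e_0\bigr) + \tfrac{\mu+\eta}{2}(s \otimes z - z \otimes s).
\]
The crucial observation is that, because $z = z_\mu - \mu s = z_\eta - \eta s$, the tensor $s \otimes z - z \otimes s$ admits the two equivalent rewritings $s \otimes z_\mu - z_\mu \otimes s$ and $s \otimes z_\eta - z_\eta \otimes s$, which we shall exploit below.

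Next I would apply the expansion formula recalled in Appendix \ref{Ann:Brack} to compute $\dgal{z_\mu^k, z_\eta^l}$, apply multiplication, and cycle freely modulo $[\Aalt, \Aalt]$. The sum over the dummy indices telescopes to
\[
\br{z_\mu^k, z_\eta^l} \equiv kl\, \dgal{z_\mu, z_\eta}'\, z_\mu^{k-1}\, \dgal{z_\mu, z_\eta}''\, z_\eta^{l-1} \pmod{[\Aalt, \Aalt]}.
\]
The moment-map-like piece of $\dgal{z_\mu, z_\eta}$ contributes $\tfrac{kl}{2}(z_\mu^k z_\eta^l - z_\eta z_\mu^k z_\eta^{l-1})$, which vanishes modulo commutators by cyclic rotation. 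Using the two equivalent rewritings of $s \otimes z - z \otimes s$ noted above, the remaining piece contributes, modulo $[\Aalt, \Aalt]$, either
\[
\tfrac{kl(\mu+\eta)}{2}\,[s, z_\mu^k]\, z_\eta^{l-1} \qquad \text{or} \qquad \tfrac{kl(\mu+\eta)}{2}\,[s, z_\mu^{k-1}]\, z_\eta^l,
\]
where in the second case one additionally cycles $z_\eta$ past $z_\mu^{k-1} s z_\eta^{l-1}$.

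Finally I would conclude as follows. If $\mu+\eta = 0$, the second part of $\dgal{z_\mu, z_\eta}$ vanishes identically and the result is immediate. Otherwise, since both displayed expressions represent the same element of $\Aalt/[\Aalt,\Aalt]$, equating them and cancelling the nonzero scalar $kl(\mu+\eta)/2$ yields the recursion $[s, z_\mu^k]\, z_\eta^{l-1} \equiv [s, z_\mu^{k-1}]\, z_\eta^l$ modulo $[\Aalt, \Aalt]$. Iterating this recursion $k$ times reduces the exponent of $z_\mu$ to zero and produces $[s, 1]\, z_\eta^{k+l-1} = 0$, so $\br{z_\mu^k, z_\eta^l} \equiv 0$ modulo $[\Aalt, \Aalt]$ as required.

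I expect the main technical obstacle to be the initial explicit computation of $\dgal{z_\mu, z_\eta}$: one must carefully verify that the apparently asymmetric antisymmetric terms coming from $\mu \dgal{s,z}$ and $\eta \dgal{z,s}$ assemble into the clean expression $\tfrac{\mu+\eta}{2}(s \otimes z - z \otimes s)$ symmetric in $\mu$ and $\eta$. Once that formula is in hand, the trick of playing the two equivalent presentations of this antisymmetric tensor against each other unlocks the entire argument through straightforward cyclic bookkeeping.
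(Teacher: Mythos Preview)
Your proof is correct and follows the same overall strategy as the paper: compute $\dgal{z_\mu,z_\eta}$ explicitly, reduce $\br{z_\mu^k,z_\eta^l}$ modulo commutators to $kl\,\dgal{z_\mu,z_\eta}' z_\mu^{k-1}\dgal{z_\mu,z_\eta}'' z_\eta^{l-1}$, and then argue that this vanishes. Your formula for $\dgal{z_\mu,z_\eta}$ is literally the same as the paper's, since the paper writes the antisymmetric remainder as $\tfrac12(z_\mu\otimes z - z\otimes z_\mu)+\tfrac12(z_\eta\otimes z - z\otimes z_\eta)$, which equals your $\tfrac{\mu+\eta}{2}(s\otimes z - z\otimes s)$ once one uses $z_\lambda - z = \lambda s$.

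The only genuine difference is in the final vanishing step. The paper splits on $\mu=\eta$ versus $\mu\neq\eta$: in the latter case it substitutes $z=\tfrac{1}{\mu-\eta}(\mu z_\eta-\eta z_\mu)$ directly into the two residual groups, each of which then vanishes modulo commutators in one line. You instead split on $\mu+\eta=0$ versus $\mu+\eta\neq0$ and, in the nontrivial case, exploit the two rewritings $s\otimes z - z\otimes s = s\otimes z_\mu - z_\mu\otimes s = s\otimes z_\eta - z_\eta\otimes s$ to obtain the recursion $[s,z_\mu^k]z_\eta^{l-1}\equiv [s,z_\mu^{k-1}]z_\eta^{l}$, which you iterate down to $[s,1]z_\eta^{k+l-1}=0$. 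This is a valid and pleasant alternative; the paper's substitution is marginally more direct (no induction), while your argument avoids the singular denominator $\mu-\eta$ and instead handles the degenerate case $\mu+\eta=0$ trivially. Either way the computation is elementary once the double bracket is in hand.
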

Using this lemma together with \eqref{relInv} we get the property $\br{\tr Z_{\mu}^k\,,\, \tr Z_{\eta}^l}=0$ since the matrix $Z_\eta$ represents the element $z_\eta\in \Aalt$. Thus, Theorem \ref{Thm:TadInv} follows directly from  this intermediate result.

To prove Lemma \ref{PropInvPhi}, we use the derivation properties of the double bracket to see that 
\begin{equation} \label{br:spec}
\frac{1}{kl}\br{z_{\mu}^k,z_{\eta}^l}= \dgal{z_{\mu},z_{\eta}}' z_\mu^{k-1} \dgal{z_{\mu},z_{\eta}}'' z_\eta^{l-1} \quad \text{ mod }[\Aalt,\Aalt].
\end{equation}
Hence, the first step is to compute the double bracket $\dgal{z_{\mu},z_{\eta}}$. Using that $z_{\eta}=z+\eta s$ and the same with $\mu$, we need $\dgal{z,z},\dgal{s,z}$ given in \eqref{tadidaZ},\eqref{br:sz} together with 
 \begin{equation} \label{br:ss}
   \dgal{s,s}=\frac12 (e_0 \otimes s^2 - s^2 \otimes e_0)\,,
 \end{equation}
 which is a special case of Lemma \ref{Lem:sBis}. 
This yields 
\begin{equation*}
  \begin{aligned}
    \dgal{z_{\mu},z_{\eta}} =&\dgal{z,z}+\mu \dgal{s,z}- \eta \dgal{s,z}^\circ + \mu \eta \dgal{s,s}\\
=&\frac12 (e_0 \otimes z^2- z^2 \otimes e_0) + \frac12 \mu (s \otimes z - z s \otimes e_0 + e_0 \otimes sz - z \otimes s) \\
&-\frac12 \eta (z \otimes s - e_0 \otimes zs + sz \otimes e_0 - s \otimes z) 
+\frac12 \mu \eta (e_0 \otimes s^2 - s^2 \otimes e_0)\,.
  \end{aligned}
\end{equation*}
By grouping terms together, we can write 
\begin{equation*}
  \begin{aligned}
    \dgal{z_{\mu},z_{\eta}} =&\frac12 (e_0 \otimes z z_\eta - z z_\mu \otimes e_0) 
+\frac12 \mu (e_0 \otimes s z_\eta + s \otimes z - z \otimes s)
- \frac12 \eta (s z_\mu \otimes e_0 + z \otimes s - s \otimes z)\,.
  \end{aligned}
\end{equation*}
We can use that $\mu s = z_\mu - z$ for the terms with a factor $\mu$, and do the same with $\eta$. We can write in this way 
\begin{equation*}
  \begin{aligned}
    \dgal{z_{\mu},z_{\eta}} =&\frac12 (e_0 \otimes z_\mu z_\eta - z_\eta z_\mu \otimes e_0) 
+\frac12 (z_\mu \otimes z - z \otimes z_\mu) + \frac12 (z_\eta \otimes z - z \otimes z_\eta)\,.
  \end{aligned}
\end{equation*}
Substituting back in \eqref{br:spec}, we get modulo commutators 
\begin{equation*} 
\frac{1}{kl}\br{z_{\mu}^k,z_{\eta}^l}= 
\frac12( z_\mu^{k}z  z_\eta^{l-1} - z z_\mu^{k}  z_\eta^{l-1})
+ \frac12 (z_\mu^{k-1}  z z_\eta^{l} - z z_\mu^{k-1}  z_\eta^{l}) \,.
\end{equation*}
This is clearly zero when $\mu=\eta$. If $\mu \neq \eta$, we can substitute $z=\frac{1}{\mu-\eta}(\mu z_\eta - \eta z_\mu)$ in the two groups of terms, which then vanish modulo commutators.  \qed




\end{document}